\renewcommand{\todo}[1]{}
\definecolor{blue}{rgb}{0.1,0.2,0.5}
\definecolor{brown}{rgb}{0.6,0.6,0.2}
\theoremstyle{plain}
\newtheorem{theorem}{Theorem}
\newcommand{\newtheoremwithcrefformat}[2]{%
  \newtheorem{#1}[theorem]{#2}%
  \crefformat{#1}{##2\MakeUppercase#1~##1##3}%
  \Crefformat{#1}{##2\MakeUppercase#1~##1##3}%
}
\newcommand{\newseptheoremwithcrefformat}[2]{%
  \newtheorem{#1}{#2}%
  \crefformat{#1}{##2\MakeUppercase#1~##1##3}%
  \Crefformat{#1}{##2\MakeUppercase#1~##1##3}%
}
\theoremstyle{nonumberplain}
\newtheorem{proof}{Proof}
\newcommand\ScaleExists[1]{\vcenter{\hbox{\scalefont{#1}$\exists$}}}
\DeclareMathOperator*\bigexists{%
  \vphantom\sum
  \mathchoice{\ScaleExists{2}}{\ScaleExists{1.4}}{\ScaleExists{1}}{\ScaleExists{0.75}}}
\renewcommand{\ell}l
\def\cqedsymbol{\ifmmode$\lrcorner$\else{\unskip\nobreak\hfil
\penalty50\hskip1em\null\nobreak\hfil$\lrcorner$
\parfillskip=0pt\finalhyphendemerits=0\endgraf}\fi}
\newcommand{\Q}{\mathbb Q}
\newcommand{\set}[1]{\{#1\}}
\newcommand{\setof}[2]{\set{#1\mid#2}}
\newcommand{\wh}{\widehat}
\newcommand{\from}{\colon}
\newcommand{\tup}[1]{\mathbf{#1}}
\renewcommand{\subset}{\subseteq}
\renewcommand{\phi}{\varphi}
\newcommand{\str}[1]{\mathscr{#1}}
\newcommand{\whstr}[1]{{\wh{\str #1}}}
\newcommand{\Oof}{\mathscr{O}}
\newcommand{\CCC}{\mathcal{C}}
\newcommand{\DDD}{\mathcal{D}}
\newcommand{\FFF}{\mathcal{F}}
\newcommand{\GGG}{\mathcal{G}}
\newcommand{\ind}[2][]{%
  \mathrel{
    \mathop{
      \vcenter{
        \hbox{\oalign{\noalign{\kern-.3ex}\hfil$\vert$\hfil\cr
              \noalign{\kern-.7ex}
              $\smile$\cr\noalign{\kern-.3ex}}}
      }
    }^{#2}\displaylimits_{#1}
  }
}
\newcommand{\eps}{\varepsilon}
\newcommand{\ttup}[1]{}
\newcommand{\DKT}{Dvo\v r\'ak, Kr\'al', and Thomas\xspace}
\newcommand{\N}{\mathbb{N}}
\newcommand{\Z}{\mathbb{Z}}
\newcommand{\R}{\mathbb{R}}
\renewcommand{\phi}{\varphi}
\renewcommand{\epsilon}{\varepsilon}
\renewcommand{\mid}{~:~}
\newcommand{\sem}[1]{\mathbf{#1}}
\DeclareMathOperator{\perm}{perm}
\renewcommand{\ge}{\geqslant}
\renewcommand{\le}{\leqslant}
\newcommand{\flt}{\oplus}
\newcommand{\parent}{\mathsf{parent}}
\newcommand{\troot}{\mathsf{root}}
\newcommand{\rt}{\mathsf{root}}
\newcommand{\ops}{\mathbb{C}}
\newcommand{\fops}{\mathrm{FO[\ops]}}
\newcommand{\foops}{\mathrm{FO_G}[\ops]}
\newcommand{\GS}{Grohe and Schweikardt\xspace}
\newcommand{\appmark}{$\ast$}
\begin{document}
\title{Aggregate Queries on Sparse Databases}

\author{
Szymon Toru\'nczyk \thanks{
  Institute of Informatics, University of Warsaw, Poland, \texttt{szymtor@mimuw.edu.pl}.}
}

\newcommand{\ifdoublecolumn}[1]{}
\newcommand{\ifsinglecolumn}[1]{#1}

\maketitle

\begin{abstract}
We propose an algebraic framework for studying efficient 
algorithms for query evaluation,  aggregation, 
enumeration, and maintenance under updates, on sparse  
databases. Our framework allows to treat those problems 
in a unified way, by considering various semirings, 
depending on the considered problem. As a concrete 
application, we propose a powerful query language 
extending first-order logic by aggregation in multiple 
semirings. We obtain an optimal algorithm for computing 
the answers of such queries on sparse databases. More 
precisely, given a database from a fixed class with \emph
{bounded expansion}, the algorithm computes in linear time
a data structure which allows to enumerate the set of 
answers to the query, with constant delay between two 
outputs.


\end{abstract}


\section{Introduction}

The central focus of database theory is to understand the
complexity of evaluating queries under various assumptions  
on the query and the data.
There are two long lines of research in this direction. 
In the first line of work, the aim is to restrict the structure of the query, as measured using various width measures, such as fractional hyper-tree width or submodular width. 
In another line of work, pursued in this paper, structural restrictions are imposed on the data.
Employing ideas from graph theory, this leads to algorithms with linear data complexity, assuming the database is sparse in some sense. Under sufficiently 
strong assumptions on the sparsity of the data, captured by the notion of \emph{bounded expansion}, such linear-time algorithms can be obtained~\cite{DBLP:conf/focs/DvorakKT10,Dvorak2013testing} for all queries in first-order logic (equivalently, relational algebra).

In this paper, we consider two query languages, called \emph{weighted queries} and \emph{nested weighted queries}, which extend first-order logic by the ability of performing aggregation using counting, summation, minimum, and in fact, summation in arbitrary \emph{commutative semirings}. 
An example of a {weighted query} is:
\begin{align*}
f=\sum_{x,y,z}[E(x,y)\land E(y,z)\land E(z,x)]\cdot w(x,y)\cdot w(y,z)\cdot w(z,x).
\end{align*} Here $E$ is a binary edge relation and $w$ is a binary weight symbol.
This query can be evaluated in a (directed) graph $G=(V,E)$ equipped with a binary weight function $w\from V\times V\to \sem S$ taking values in an arbitrary semiring $\sem S$ (all semirings are assumed to be commutative in this paper).
The operator $[\cdot]$ maps the boolean \emph{true} to $1\in \sem S$ and \emph{false} to $0\in\sem S$.
If $\sem S$ is the semiring $(\N,+,\cdot)$ and $w(a,b)$ represents the \emph{multiplicity} of an edge $(a,b)\in E$,
then the value of the query corresponds to the \emph{bag semantics} of the query $\phi(x,y,z)=E(x,y)\land E(y,z)\land E(z,x)$, i.e., the number of directed triangles in $G$, treated as a multigraph.
 If $\sem S$ is the semiring $(\N\cup\set{+\infty},\min,+)$ with $\min$ playing the role of addition and $+$ of multiplication, and $w(a,b)$ represents the \emph{cost} of an edge $(a,b)\in E$,
 then $f$ evaluates to the minimum total cost $w(a,b)+w(b,c)+w(c,a)$ of a directed triangle in $G$.

In general, a \emph{weighted query} $f$ can use arbitrary \emph{first-order formulas} inside the brackets $[\cdot]$, and
  the semiring operations $+$ and $\cdot$ and aggregation (summation) $\sum$, applied to weight symbols or expressions $[\cdot]$. This corresponds exactly to the \emph{positive relational algebra} of Green, Karvounarakis and Tannen~\cite{provenance}, where we allow arbitrary first-order selection predicates.

We also consider more general \emph{nested weighted queries} which allow using multiple semirings within one query.  An example of such a query is $$\max_{x} \left(\sum_y{[E(x,y)]_{\N}\cdot w(y)}\right)/ \left(\sum_y{[E(x,y)]_{\N}}\right),$$ which involves the semiring  $(\N,+,\cdot)$, used within the parenthesis, and $\sem Q_{\textrm{max}}=(\Q\cup\set{-\infty},\max,+)$, with $\max$ playing the role of addition and $+$ of multiplication, used in the outermost aggregation. Finally, $/\from \N\times \N\to \Q$ is a connective denoting division. Given a graph $G=(V,E)$ with a unary weight function $w\from V\to \N$,
the query computes the maximum over all vertices $x$ of the average weight of the neighbors of $x$.
Another example\todo{improve} of a nested weighted query is the following query with boolean output:
$$f(x)=\exists_{y} 
E(x,y)\land 
\big(w(y)>\sum_z [E(y,z)]_\N\cdot w(z)\big),$$
which determines if a given node $x$ has some neighbor $y$ whose weight is larger than sum of the weights its neighbors.
Here, existential quantification $\exists$ is just summation in the boolean semiring $\sem B=(\set{\textit{true, false}},\lor,\land)$, and 
the inequality $>$ is treated as a connective $> \from \N\times \N\to \sem B$ comparing values.
Nested weighted queries extend the logic $\textrm{FOC}(\mathbb P)$ introduced by Kuske and Schweikardt~\cite{Kuske:2017:FLC:3329995.3330068}. 

We now describe our main results concerning the evaluation of such queries on sparse databases. Our main conceptual contribution is a framework based on circuits over semirings. A key result in this framework is an algorithm computing certain circuits
representing the output of a query. As concrete applications, we obtain an efficient dynamic algorithm for evaluating weighted queries and 
an efficient static algorithm for evaluating weighted nested queries on sparse databases.
To better explain and motivate our framework and its key result, we start with  describing its applications.

\newcounter{para}
\renewcommand{\thepara}{\Alph{para}}

\newcommand\myparagraph[1]{\refstepcounter{para}\paragraph{{\normalfont(\thepara)\ }#1}}

\myparagraph{Evaluating weighted queries on sparse databases.}\label{res:A}


   Fix a weighted query $f$ and a database $\str D$ equipped with several weight functions taking values in a semiring $\sem S$.
  The goal is to compute the value of the query $f$ on $\str D$. We focus on the data complexity and assume the query $f$ to be fixed.

Our result gives an algorithm with \emph{linear-time} data complexity for \emph{sparse} input databases $\str D$.
If $f(\tup x)$ is a weighted query with free variables, then the linear-time algorithm computes a data structure allowing to query the value $f(\tup a)$, given a tuple $\tup a\in \str D^{\tup x}$ in time \emph{logarithmic } in $\str D$ in general, and in \emph{constant time}, if the semiring $\sem S$ is a ring or is finite.
Furthermore, our algorithm is dynamic, and allows maintaining the data structure in logarithmic time  whenever a weight is updated. 

   Our notion of sparseness is based on the \emph{Gaifman graph} of the database $\str D$, representing which elements of the database occur together in a tuple in some relation.
   We assume that the Gaifman graph of the input database $\str D$ belongs to a fixed class of graphs which has \emph{bounded expansion}. Classes of bounded expansion include classes of bounded maximum degree, the class of planar graphs, or every class which excludes a fixed topological minor.


  \myparagraph{Evaluating nested weighted queries on sparse databases.}\label{res:B}
For a broad class of {nested} weighted queries, denoted $\foops$, we also achieve linear-time data complexity for databasese from a fixed class with bounded expansion.
   The query language $\foops$ includes the examples of nested weighted queries mentioned above,
   and
   vastly extends a query language introduced by Grohe and Schweikardt, for which they also obtain linear-time evaluation  on classes with bounded expansion, and  almost linear-time evaluation for the slightly more general \emph{nowhere dense} classes~\cite{DBLP:conf/pods/GroheS18}.

\myparagraph{Provenance analysis.}\label{res:C}
In the two results above, we assumed the \emph{unit cost model},
where semiring elements can be stored in a single memory cell, and semiring operations
take constant time. Our semiring framework also 
allows considering complex representations of semiring elements. This has applications in provenance analysis and query enumeration, as illustrated below.

One of the aims in \emph{provenance analysis} is to analyse how the tuples in the input database  contribute to the answers to a query.
Consider for example the query $\psi(x)=\exists_{y,z}E(x,y)\land E(y,z)\land E(z,x)$.
Given a graph $G=(V,E)$, assign to each edge $(a,b)\in E$ a unique identifier $e_{ab}$.
Let $\sem S$ be the \emph{free commutative semiring} (also called the \emph{provenance semiring}), consisting of all formal sums
of products of such identifiers, e.g. $e_{ab}\cdot e_{bc}\cdot e_{ca}+e_{ac}\cdot e_{cd}\cdot e_{da}$.
Define the binary  weight function ${w\from V\times V\to \sem S}$ by setting $w(a,b)=e_{ab}$.
Then the weighted expression $f(x)=\sum_{y,z}w(x,y)\cdot w(y,z)\cdot w(z,x)$ evaluates at a node $a$ to the \emph{provenance of $a$}, i.e., the element of $\sem S$ which is the sum of all products $e_{ab}\cdot e_{bc}\cdot e_{ca}$, where $abc$ is a directed triangle in $G$.


As each element of $\sem S$ is a formal sum of possibly unbounded length, we therefore represent each such element by an enumerator which enumerates the elements of the formal sum. For a weighted query $f(\tup x)$ and input database $\str D$ whose weights  are represented by as such black box enumerators,
we show how to compute in linear data complexity a data structure which, given a tuple~$\tup a\in \str D^{\tup x}$, outputs an enumerator for $f_\str D(\tup a)$. The maximal delay between two outputs of the resulting enumerator is asymptotically equal to the maximal delay of the input enumerators.
This yields applications to query enumeration, as follows.

\myparagraph{Dynamic query enumeration.}\label{res:D}
Instead of analysing the tuples in the input database which contribute to the query answers,
the free semiring may be used in a slightly different way, to represent the set of answers to a first-order query $\phi(\tup x)$ in a given database. For example, if $\tup x=\set{x,y,z}$,
consider the weighted expression $f'=\sum_{x,y,z} [\phi(x,y,z)]\cdot w_1(x)\cdot w_2(y)\cdot w_3(z)$, where for $i=1,2,3$ the weight function $w_i$  assigns  a unique identifier $e^i_a$ to each element $a$ in the active domain.
Then $f'$ evaluates to a formal sum of products $e_a^1\cdot e_b^2\cdot e^3_c$, over all answers $(a,b,c)$ to $\phi$. Applying the result~\eqref{res:C} yields a linear-time algorithm computing a constant-delay enumerator for the set  of answers of a fixed first-order formula  in a given input database.
This reproves a result of Kazana and Segoufin~\cite{KazanaS13},
and strengthens it by allowing updates to $\str D$ which preserve its Gaifman graph.

\ifdoublecolumn{
\begin{figure*}[h!]
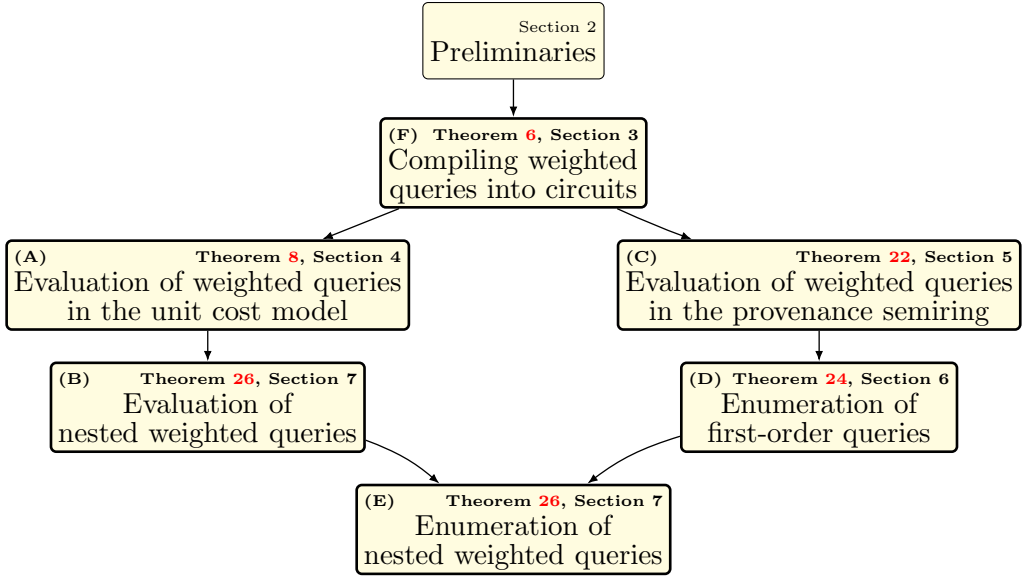

  \centering	
  \include{diagram}
  \caption{The results in the paper and  dependencies between them. The letter in the upper-left corner corresponds to the enumeration in the introduction, and the corresponding theorem is indicated in the upper-right corner.
  } 
	\label{fig:org}
\end{figure*}
}

\myparagraph{Enumerating answers to nested weighted queries}\label{res:BD}
Our framework allows to easily combine the enumeration algorithm~\eqref{res:D} with the result~\eqref{res:B},
yielding a linear time algorithm which computes a constant-delay enumerator for the set of answers to a boolean-valued nested weighted query $\phi(\tup x)$. As a single result, this is the crowning achievement of this paper (it does not subsume the other results in the paper, however). It gives an optimal -- linear in the size of the input and output -- algorithm for answering queries in an expressive language allowing  aggregates in multiple semirings, on sparse databases.


\myparagraph{Circuits with permanent gates.}\label{res:E}
Perhaps the most important contribution of this paper is
of a more conceptual nature. It provides a unified framework, based on \emph{circuits over semirings}.
This framework allows to capture the complexity of evaluating weighted queries on sparse databases, in a way which is independent of the chosen semiring and the representation of its elements. By plugging in specific semirings,
we easily obtain the above results, and more.



Besides usual addition and multiplication gates, our circuits
have gates corresponding to the \emph{permanent} of a rectangular matrix of semiring elements.
For example, 
$$\perm \left[\begin{array}{ccccc}
  a_1&a_2&\ldots&a_{n-1}&a_n\\
  b_1&b_2&\ldots&b_{n-1}&b_n\\
  c_1&c_2&\ldots&c_{n-1}&c_n
\end{array}\right]\quad=\sum_{\substack{i, j, k\\\textit{distinct}}}a_i\cdot  b_j\cdot  c_k,$$
and this is naturally generalized to more  rows.

\medskip
Fix a weighted query $f$ without free variables.
Our key result is an algorithm which compiles $f$ on a given sparse database $\str D$ into a circuit $C_\str D$.
The inputs to the circuit are the weights of the tuples in the database. 
Once the weights are given, the circuit $C_\str D$ evaluates to the value of the expression $f$ on $\str D$ with the given weights.
The circuit is computed in {linear time} from the input database $\str D$, and has {constant}   depth, fan-out, and number of rows in each permanent gate.

\medskip

  
  This result allows to reduce the problem of evaluating arbitrary 
  weighted queries on an arbitrary class of sparse databases to the 
  problem of computing the permanent in the considered semiring $\sem S$ of a given matrix with a {fixed number of rows}, which is a purely algebraic question.
  For example, we show that in any semiring $\sem S$, in the unit cost model, the permanent of a $k\times n$ matrix
  can be computed in time \emph{linear in $n$}, and can be updated in constant time when $\sem S$ is a ring or is finite, and in logarithmic time in general. 
  This yields the result  \eqref{res:A}, where the unit-cost model is assumed. 
  To obtain the result  \eqref{res:C} concerning the free semiring, we show that
   given a $k\times n$ matrix $M$ of elements of the free semiring, each represented  by a constant-delay enumerator,  we can compute in time linear in $n$ a constant-delay enumerator for the permanent of $M$.

\bigskip

\ifsinglecolumn{
\begin{figure*}[h!]
  \centering	
  \include{diagram}
  \caption{The results in the paper and  dependencies between them. The letter in the upper-left corner corresponds to the enumeration in the introduction, and the corresponding theorem is indicated in the upper-right corner.
  } 
	\label{fig:org}
\end{figure*}
}

\paragraph{Summary and organization.}
The organization of the paper is illustrated in Figure~\ref{fig:org}.
Each of the results~\eqref{res:A}-\eqref{res:E} listed above is novel and, we believe, interesting in its own rights. In our view, one of the main contributions of this paper is on a conceptual level, as it provides a unified framework for studying aggregation, enumeration, and updates  based on evaluation of circuits in various semirings, and on reducing the analysis of arbitrary  queries to
the study of the permanent. This framework relies on the  main technical contribution, the result~\eqref{res:E}.
It produces a versatile and universal data structure, namely circuits, which can be used to different effects, by plugging in suitable semirings. 
It enables to easily derive the results~\eqref{res:A}-\eqref{res:D}, culminating in the result~\eqref{res:BD},
which gives an efficient algorithm for enumerating the answers to queries from an expressive query language with semiring aggregates.

Due to lack of space, we can only explore a few applications of our framework in this paper. However, we believe that it offers an explanation of the tractability of various problems
studied  here, and in other papers~\cite{DBLP:conf/focs/DvorakKT10,Dvorak2013testing,KazanaS13,DBLP:journals/tods/BerkholzKS18, Kuske:2017:FLC:3329995.3330068, DBLP:conf/pods/GroheS18}. The proofs of the results marked (\appmark) are moved to the appendix.


\paragraph{Background and related work.}
It is known from the work of \DKT~\cite{DBLP:conf/focs/DvorakKT10,Dvorak2013testing} that for classes with bounded expansion, model-checking first-order logic has linear data complexity. This result has been extended in multiple ways. For example, Kazana and Segoufin~\cite{KazanaS13} obtained an enumeration algorithm, in which a linear-time preprocessing phase is followed by an enumeration phase which outputs all answers to the query with constant-time delay between two outputs. Both results have been extended to the slightly more general \emph{nowhere dense} classes,
replacing linear time by almost-linear time~\cite{gks,DBLP:conf/pods/SchweikardtSV18}. Furthermore, Grohe and Schweikardt~\cite{DBLP:conf/pods/GroheS18} extended the model-checking result to an extension of first-order logic by counting aggregates, denoted FOC$_1$($\mathbb P$). Our results~\eqref{res:B} and~\eqref{res:BD} consider a much more powerful logic, allowing arbitrary semiring aggregates, but on the slightly less general classes with bounded expansion. In particular, we provide a partial answer to a question posed in~\cite[Section 9]{DBLP:conf/pods/GroheS18}, asking if their result can be extended beyond counting aggregates. We  also provide a partial answer to another question posed there, asking about enumerating answers to FOC$_1$($\mathbb P$) queries. Both  answers are positive for classes with bounded expansion.

These results are \emph{static}, i.e., any modification of the database requires recomputing the structure from scratch. Recently,  Berkholz, Keppeler, and Schweikardt~\cite{DBLP:journals/tods/BerkholzKS18} obtained a dynamic version of the result of Kazana and Segoufin, maintaining the data structure in constant time each time it is modified,
 but under the strong restriction that the structures have bounded degree. This result was then generalized by Kuske and Schweikardt~\cite{Kuske:2017:FLC:3329995.3330068} to
the logic with counting FOC$_1$($\mathbb P$). Our result~\eqref{res:D} is only partially dynamic, as it allows updates which preserve the Gaifman graph, but it applies to all classes of bounded expansion. However, for the special case of classes of bounded degree, the fully dynamic version follows easily, thus recovering the result \cite{DBLP:journals/tods/BerkholzKS18} for first-order logic (without counting extensions).


Computing circuits for representing query outputs on sparse graphs has been employed by Amarilli, Bourhis, Mengel~\cite{DBLP:conf/icalp/AmarilliBJM17, DBLP:conf/icdt/AmarilliBM18}. They consider classes of \emph{bounded treewidth}, which are much more restricted than classes of bounded expansion. On the other hand, they consider monadic second order (MSO) queries, which are much more powerful than first-order logic. However, they only consider the boolean semiring, and do not combine the power of MSO with other semirings.


Our circuits are very similar to, and extend,
\emph{deterministic decomposable negation normal forms} (d-DNNF) used in \emph{knowledge compilation}~\cite{DBLP:conf/ijcai/Darwiche99}.
They generalize them by allowing permanent gates, which, in our circuits, turn out to be both disjunctive and decomposable, in an appropriate sense.
Our circuits can be alternatively viewed as \emph{factorized representations} 
(extended by suitable permanent operators)
of query answers~\cite{Olteanu:2012:FRQ:2274576.2274607}.

\paragraph{Acknowledgements.}
I am very grateful to Luc Segoufin for initiating the discussion on the result~\eqref{res:D}, and 
 am especially indebted to Michał Pilipczuk, who contributed to initial work towards that result.
I would also like to thank Eryk Kopczyński, Filip Murlak, Stephan Mengel,    Dan Olteanu, Sebastian Siebertz, Alexandre Vigny, and Thomas Zeume,  for  discussions on related topics. I am also grateful to the anonymous reviewers for useful comments.

\section{Preliminaries}
We recall the notion of classes of bounded expansion and the quantifier elimination result of \DKT.

\subsubsection*{Graph classes of bounded expansion}

We consider undirected and simple graphs.
To unclutter notation, whenever $G$ is a graph, its underlying set of nodes is also denoted $G$. We recall the notions of bounded expansion, treedepth, and low treedepth colorings.





\paragraph{Bounded expansion.}
The concept of classes of bounded expansion, proposed by Ne\v{s}et\v{r}il and Ossona de Mendez in~\cite{nevsetvril2008grad}, captures uniform sparsity for graphs.
In the original definition, a graph class $\GGG$ has bounded expansion if and only if for every $r\in \N$ there exists a constant $c_r$ such that the following holds:
whenever we take a graph $G\in \GGG$ and construct a graph from $G$ by first removing some edges and vertices, and then contracting a collection of disjoint, connected subgraphs of radius at most $r$, then in the resulting graph, the ratio between the number
of edges and the number of vertices is bounded by~$c_r$. Intuitively, this means we require that not only the graph class $\GGG$ is sparse 
in the sense of admitting a linear bound on the number of edges in terms of the number of vertices, but this behavior persists even if one applies local contractions in the graph. 
For example, any class of bounded degree, 
the class of planar graphs, or any class excluding a fixed graph as a minor, has bounded expansion.

It turns out that the property of having bounded expansion is equivalent to a variety of other, seemingly unrelated conditions; see~\cite{sparsity} for an introduction to the topic.
In this work we will use the following  one, which allows to decompose  graphs from a class of bounded expansion  
into subgraphs which are essentially trees of bounded depth. 
This is a powerful decomposition method which allows to reduce various problems concerning classes of bounded expansion to the case of trees of bounded depth.

\paragraph{Low treedepth colorings.}
The \emph{treedepth} of a graph $G$ is the minimal depth of a rooted
forest $F$ with the same vertex set as $G$, such that for every edge
$vw$ of $G$, either $v$ is an ancestor of $w$, or $w$ is an ancestor of $v$
in $F$.\todo{pic}  A class $\GGG$ of graphs has \emph{bounded treedepth} if
there is a bound $d \in \N$ such that every graph in $\GGG$ has
treedepth at most $d$. Equivalently, $\GGG$ has bounded treedepth if
there is some number $k$ such that no graph in $\GGG$ contains a 
path of length~$k$~\cite{sparsity}.

A class $\GGG$ has {\em{low treedepth colorings}} if for every $p\in\N$ there is a number $d\in\N$ and a finite set of colors $C$ 
such that every graph $G\in\GGG$
has a (not necessarily proper) vertex coloring $f\from G\to C$ 
such that for any set $D\subset C$ of at most $p$ colors, the subgraph  $G[D]$ of $G$ induced by $f^{-1}(D)$ has treedepth at most~$d$. 


\begin{proposition}[\cite{nevsetvril2008grad}]\label{prop:covers}
Every class of graphs with bounded expansion has low treedepth colorings.
\end{proposition}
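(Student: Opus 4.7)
The plan is to leverage the characterization of bounded expansion in terms of \emph{weak coloring numbers} due to Zhu, and then convert a bound on these numbers into a $p$-centered coloring, from which a low treedepth coloring follows by a short combinatorial argument. Recall that for a graph $G$, a linear order $\sigma$ on $V(G)$, a vertex $v$, and a radius $r \in \N$, the set $\WReach_r[G, \sigma, v]$ consists of those $u$ with $u \wcolorder_\sigma v$ such that there is a path of length at most $r$ from $u$ to $v$ whose internal vertices are all strictly $\sigma$-above $v$; the weak coloring number $\wcol_r(G)$ is $\min_\sigma \max_v |\WReach_r[G, \sigma, v]|$. Zhu's theorem asserts that $\GGG$ has bounded expansion if and only if for every $r \in \N$ the quantity $\sup_{G \in \GGG} \wcol_r(G)$ is finite, which reduces the problem to extracting a low treedepth coloring from a bounded weak coloring number.

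Given $p$, I would set $r = 2^{p-1}$ and $k = \sup_{G \in \GGG} \wcol_r(G)$, and for each $G \in \GGG$ fix an order $\sigma$ realising $\wcol_r(G) \leq k$. I would then define a coloring $f \from V(G) \to \{1,\ldots,k\}$ greedily, processing vertices in reverse $\sigma$-order and assigning to each $v$ a color distinct from those already chosen on $\WReach_r[G, \sigma, v] \setminus \{v\}$; such a color exists since this set has size less than $k$. The claim is that $f$ is \emph{$p$-centered} in the sense that every connected subgraph $H \subset G$ either uses more than $p$ colors or contains a vertex whose color is unique in $H$.

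Establishing the $p$-centered property is the main obstacle. I would argue by contradiction: if $H$ uses at most $p$ colors yet no color appears exactly once, then one can iteratively split $H$ into connected pieces of roughly half the diameter, each still exhibiting many repetitions among few colors, until one extracts two same-colored vertices $u \ne v$ with $u$ weakly $r$-reachable from $v$ in $G$ for $r = 2^{p-1}$, directly contradicting the greedy construction of $f$. Granted the $p$-centered property, a routine induction on $|D|$ shows that for $D \subset \{1,\ldots,k\}$ with $|D| \leq p$, every connected component of $G[f^{-1}(D)]$ contains a uniquely-colored vertex $v$, which serves as the root of a treedepth decomposition obtained by recursing on the components of the smaller coloured subgraph $G[f^{-1}(D)] - v$ (which uses one fewer unique color). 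This yields treedepth at most $|D| \leq p$, so setting $C = \{1,\ldots,k\}$ and $d = p$ concludes.
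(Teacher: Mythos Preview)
The paper does not supply its own proof of this proposition; it simply cites Ne\v{s}et\v{r}il and Ossona de Mendez. Your route --- Zhu's characterization of bounded expansion via weak coloring numbers, a greedy coloring that is $p$-centered, and then the easy passage from $p$-centered to treedepth $\le p$ on any $\le p$ color classes --- is a valid and well-known alternative to the original argument (which proceeds through transitive fraternal augmentations). So at the level of strategy there is nothing to compare and your outline is fine.

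The execution, however, has real problems. First, what you wrote down as $\WReach_r[G,\sigma,v]$ is \emph{strong} $r$-reachability: weak reachability only requires the internal vertices of the witnessing path to lie strictly $\sigma$-above the smaller endpoint $u$, not above $v$. This matters, because the $p$-centered argument needs the larger weak-reachability set; a coloring that only separates strongly reachable pairs can fail to be $p$-centered. Second, the greedy pass must proceed in $\sigma$-order (smallest first), not in reverse: the vertices in $\WReach_r[G,\sigma,v]\setminus\{v\}$ are all $\sigma$-below $v$ and must therefore already carry a color when $v$ is processed.

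Third, and most importantly, your justification of the $p$-centered property is not a proof. ``Iteratively split $H$ into connected pieces of roughly half the diameter, each still exhibiting many repetitions among few colors'' names no concrete procedure, and there is no reason the pieces would inherit the no-unique-color hypothesis --- cutting a connected subgraph can easily create uniquely-colored vertices in the pieces. The standard argument does not halve diameters; it shows, by induction on the number of colors appearing, that the $\sigma$-minimum of a connected $H$ with at most $p$ colors has a unique color, by exhibiting short paths inside $H$ that place it in the weak $2^{p-1}$-reachability set of every other vertex of $H$. You have correctly identified this as the main obstacle, but the sketch you give does not clear it. The final step, deducing treedepth $\le p$ from $p$-centeredness, is essentially correct; the induction should be on the number of colors actually used in the current component (which drops by one when the uniquely-colored vertex is deleted), not literally on $|D|$.
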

We remark that our notion of low treedepth colorings 
is more relaxed than the one from~\cite{nevsetvril2008grad}, but is sufficient for our needs.
The converse  of Proposition~\ref{prop:covers} also holds~\cite[Lemma 4.4]{lsd},
but will not be needed here.

Let us note that if $\GGG$ has bounded expansion, then for fixed $p\in \N$,  given a graph $G\in \GGG$, a coloring $f\from G\to C$   as in the definition above can be computed in linear time~\cite{nevsetvril2008grad}. 

\begin{example}\label{ex:homo}
We give a basic application of this powerful result, due to~\cite{nevsetvril2008gradb}. The proof of our key result, Theorem~\ref{thm:circuits}, is an extension of this approach.

Suppose we want  to answer a boolean conjunctive query $\phi$ in a given graph $G$ from a class with bounded expansion $\GGG$. We can do this in time $\Oof_{\CCC,\phi}(|G|)$, as follows (here, and later, the subscript in the $\Oof$ notation indicates the parameters on which the hidden constants depend on).
 Let $p$ be the number of variables in $\phi$. Let $f\from G\to C$ be a coloring as above. Then $G\models \phi$ if, and only if, $G[D]\models \phi$ for some $D\subset C$ of size~$p$. Hence, it suffices to answer $\phi$ on each of the $\Oof(|C|^p)=\Oof_{\CCC,\phi}(1)$ subgraphs $G[D]$, which now have treedepth bounded by some constant $d=\Oof_{\CCC,\phi}(1)$. This reduces the problem to the case when $\GGG$ is a class of bounded treedepth. This, in turn, reduces to labeled rooted forests of bounded depth, as follows.

 If  $\GGG$ is a class of graphs of treedepth at most $d$, then every graph in $G$ has maximal path length bounded by some constant $d'$
 (it is easy to show that $d'\le 2^d$).
 Given a graph $G$, compute in linear time a rooted spanning forest $F$ of $G$
 by a depth-first search.
 Then $F$ has depth at most $d'$, and for every edge $vw$ of $G$,
 either $v$ is an ancestor of $w$ in $F$, or vice-versa.
  Extend $F$ by adding unary predicates marking, for each node~$v$, the depth of $v$ in $F$, and the relationship in $G$ between $v$ and each of its ancestors at depth $0,1,\ldots,d'$. More precisely, for each $0\le i\le d'$, add unary predicates $D^{i}$ and $U^i$, where $D^i$  
 marks a node $v$ if it has depth $i$, and $U^i$ marks a node $v$ if the ancestor $w$ of $v$ in $F$ at depth $i$ is a neighbor of $v$ (if $w$ exists). It is not difficult to see that
any existential formula $\phi$ can be rewritten into an equivalent existential formula $\widehat\phi$ over the signature consisting of the predicates $D^0,\ldots,D^{d'},U^{0},\ldots,U^{d'}$ and the parent relation,
 so that $G\models \phi$ if and only if $F\models \widehat\phi$. It therefore remains to solve the model-checking problem for existential formulas on labeled rooted forests of depth bounded by $d'$ in linear time, which can be solved by a dynamic approach. \qed
\end{example}

\subsubsection*{Structures and logic}
We consider finite logical structures denoted $\str A,\str B,$ etc.
over a signature $\Sigma$ containing relation and/or function symbols, with the usual semantics of first-order logic, specifically, of \emph{$\Sigma$-formulas}, on such structures. A structure $\str A$ is often identified with its domain. By $|\str A|$ we denote the size of the domain of $\str A$.
We denote finite sets of variables using symbols $\tup x, \tup y,\tup z$, etc. 
If $A$ is a set and $\tup x$ is a set of variables, then
$A^{\tup x}$ is the set of all functions from $\tup x$ to $A$. Elements of $A^{\tup x}$ are called \emph{tuples}. For such a tuple $\tup a$ and a variable $x\in\tup x$, by $\tup a[x]$ we denote the value of $\tup a$ at $x$, and for a set of variables $\tup y\subset \tup x$, by $\tup a[\tup y]$ we denote the restriction of $\tup a$ to $\tup y$.  
If $\alpha$ is a first-order formula, then we may write $\alpha(\tup x)$ to indicate that the free variables of $\alpha$ are \emph{contained} in $\tup x$ ($\tup x$ may also contain additional variables).
The semantics of a first-order formula $\alpha(\tup x)$ in a structure $\str A$ is the set $\alpha_\str A\subset \str A^{\tup x}$ of tuples which satisfy $\alpha$ in $\str A$.

\subsubsection*{Classes of structures of bounded expansion}
By the Gaifman graph of a relational structure $\str A$ we mean the graph whose vertices are the elements of $\str A$, and where two distinct elements $v,w$ are adjacent if and only if there is some tuple $\tup a\in R$ containing $v$ and $w$,
for some relation $R$ in the signature of $\str A$. If $\str A$ additionally has function symbols, then its Gaifman graph is defined by considering $\str A$ as a relational structure, by replacing each function by the relation describing its graph.
\medskip

A class of structures $\CCC$ has \emph{bounded expansion} if the induced class of  Gaifman graphs  has bounded expansion. In this case, the number of tuples in a structure $\str A\in\CCC$ is linear in the size of the domain, i.e., $|R_\str A|\in \Oof_{\CCC}(|\str A|)$ for each  $R\in \Sigma$. Moreover, one can compute~\cite[Section 1.2]{Dvorak2013testing}  in time $\Oof_\CCC(\str A)$a representation of $\str A$ which allows testing membership of a given tuple in a given relation, or compute the value of a given function on a given tuple, in time $\Oof_{\CCC}(1)$. We assume this representation when working with classes of bounded expansion. Hence, for structures $\str A$ belonging to a fixed class of bounded expansion, $|\str A|$ can be regarded as the size of the representation of $\str A$.



\subsubsection*{Quantifier elimination on classes of bounded expansion}

The results of this paper build upon the following quantifier elimination result, 
which traces back to~\cite{DBLP:conf/focs/DvorakKT10,Dvorak2013testing, KazanaS13,grohe2011methods}. It immediately yields a model-checking algorithm for first-order logic with linear data complexity, for any class of bounded expansion.

\begin{theorem}{\upshape\cite[Theorem 3.7]{Dvorak2013testing}}\label{thm:qe}
    Let $\Sigma$ be a signature,
  $\phi(\tup x)$ a first-order $\Sigma$-formula,
 and $\CCC$ a class of $\Sigma$-structures of bounded expansion.
 There is a signature $\wh\Sigma$, 
 a quantifier-free $\wh\Sigma$-formula $\wh\phi(\tup x)$, a class $\wh\CCC$ of $\wh\Sigma$-structures of bounded expansion, and an  algorithm which, given a $\Sigma$-structure $\str A\in\CCC$, computes in  time $\Oof_{\CCC,\phi}(|\str A|)$ a $\wh\Sigma$-structure $\whstr A\in\wh\CCC$ with the same domain as $\str A$, such that 
$\phi_\str A=\wh\phi_\whstr A$.
\end{theorem}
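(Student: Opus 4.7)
The plan is to follow the strategy of Example~\ref{ex:homo} for existential sentences and generalize it to arbitrary first-order formulas with free variables, iterated by induction on quantifier rank. The base case, $\phi$ quantifier-free, is handled trivially by taking $\wh\Sigma=\Sigma$ and $\wh\phi=\phi$. For the inductive step it suffices to treat a formula $\phi=\exists y\,\psi(\tup x,y)$ where, by the induction hypothesis, $\psi$ has already been rewritten as a quantifier-free formula $\psi'$ over some signature $\Sigma'$ and some class $\CCC'$ of bounded expansion, with the augmented structure computed in linear time.

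To eliminate the outer $\exists y$ I would first reduce to bounded treedepth using \Cref{prop:covers}. Let $p$ exceed the number of variables of $\psi'$. Compute a coloring $f\from\str A\to C$ such that every subset of at most $p$ colors induces a substructure of treedepth at most $d=\Oof_{\CCC,\phi}(1)$. Then $\exists y\,\psi'(\tup x,y)$ holds in $\str A$ iff it holds in $\str A[D]$ for some $D\subset C$ of size at most $p$ containing the colors of $\tup x$ and a candidate witness $y$. Since there are $\Oof_{\CCC,\phi}(1)$ such sets $D$, I may work on each $\str A[D]$ separately and take a disjoint union of the augmentations, which preserves bounded expansion. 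Thus we may assume $\CCC$ has bounded treedepth.

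Next, on a class of bounded treedepth, compute in linear time a depth-first spanning forest $F$ of the Gaifman graph of $\str A$; then $F$ has some bounded depth $d'$, and every edge of the Gaifman graph connects a vertex to one of its ancestors in $F$. Enrich $\str A$ by adding the parent function of $F$, unary predicates $D^0,\ldots,D^{d'}$ recording depth, and, for each depth $i\le d'$ and each atomic $\Sigma'$-type $\tau$ of a pair, a unary predicate $U^i_\tau$ marking those vertices $v$ whose ancestor at depth $i$ in $F$ exists and realizes $\tau$ together with $v$. The Gaifman graph of the enriched structure lies in the ancestor-descendant closure of $F$, which has treedepth $d'$ and hence bounded expansion; over this signature, the quantifier-free formula $\psi'$ can be rewritten as a quantifier-free formula referring only to ancestor-descendant pairs.

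The final step is to eliminate $\exists y$ on labeled rooted forests of depth at most $d'$ by a bottom-up dynamic programming. For each node $v$, and each \emph{anchored type} describing how the free tuple $\tup x$ can be distributed along the ancestor chain of $v$ (together with the atomic types its components share with each other and with $v$), record as a new unary predicate at $v$ whether a witness $y$ exists in the subtree rooted at $v$. Boundedness of the depth $d'$, of the arity of $\tup x$, and of the number of atomic $\Sigma'$-types implies that only boundedly many new unary predicates are added, and the Gaifman graph remains contained in the ancestor closure of $F$, so the resulting class $\wh\CCC$ has bounded expansion. The resulting $\wh\phi$ is a quantifier-free Boolean combination of these new predicates evaluated at the components of $\tup x$ and at their common ancestors, all of which is expressible quantifier-freely in $\wh\Sigma$ once the parent function is available. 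The main obstacle is the careful bookkeeping of anchored types so that the translation faithfully captures $\exists y\,\psi'(\tup x,y)$ for tuples $\tup x$ whose components may lie at distinct depths; this is ultimately manageable because every relevant parameter is bounded in terms of $\CCC$ and $\phi$, keeping all subsequent costs within the linear-time budget.
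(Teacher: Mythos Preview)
The paper does not prove this theorem; it is quoted from \DKT and used as a black box. Your sketch follows their strategy---induction on quantifier rank, low treedepth colorings to reduce to bounded treedepth, encoding as labeled bounded-depth forests, bottom-up elimination of the innermost $\exists$---and is correct in broad outline.

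The real gap is the phrase ``take a disjoint union of the augmentations, which preserves bounded expansion.'' The substructures $\str A[D]$ share the domain of $\str A$, so what you are actually doing is superposing onto $\str A$ one parent function per color set $D$. It is \emph{not} automatic that this preserves bounded expansion: adding two arbitrary unary functions to the class of edgeless graphs can already produce all $1$-subdivisions of cliques (let the domain be $V\cup\binom V2$ and let the two functions send each pair to its two elements), and that class is not of bounded expansion. In the original proof the added functions are not arbitrary---they are parents in depth-$d$ forests compatible with the fixed low-treedepth coloring---and one proves that the augmented class again admits low-treedepth colorings, with worse but still bounded parameters. That lemma is the technical heart of the inductive step; without it the induction does not close, because the next round of quantifier elimination needs the augmented class to have bounded expansion. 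A smaller related issue: after the inductive step your signature $\Sigma'$ may contain function symbols, so the truth of the quantifier-free $\psi'$ on a tuple need not be preserved when restricting to $\str A[D]$; you must choose $p$ to dominate the size of the term closure of $\psi'$, not merely the number of its variables.
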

Note that if $\phi$ is a sentence then $\wh\phi$ is a quantifier-free sentence
(possibly involving constants), so can be evaluated in constant time on $\whstr A$, yielding the (boolean) value of $\phi$ on $\str A$. This gives the linear-time model checking algorithm.

\section{Circuits computing weighted expressions}\label{sec:circuits}
We define structures equipped with $\sem S$-valued weight functions, where $\sem S$ is a  semiring, and define two models of computation 
for such structures: by means of expressions and by means of circuits with permanent gates. The main result of this section states that over classes of bounded expansion, a fixed expression can be compiled in linear time into a circuit of bounded depth and bounded fan-out.

\subsubsection*{Semirings}
We only consider commutative semirings.
Such a semiring $\sem S$ is a set (which we also denote $\sem S$) equipped with two commutative, associative 
operations $+$ and ${\cdot}$
with neutral elements denoted $0$ and $1$ respectively,
where 
$\cdot$ distributes over~$+$ and $0\cdot s=0$ for all $s\in\sem S$.
Examples include the boolean semiring $\sem B=(\set{\textit{true},\textit{false}},\lor,\land)$,
$(\N,+,\cdot)$, $(\N\cup\set{+\infty},\min,\max)$, $(\N\cup\set{+\infty},\min,+)$,
rings, e.g. $(\Z,+,\cdot)$, $(\Q,+,\cdot)$, boolean algebras $(P(X),\cup,\cap)$, etc.

\subsubsection*{Weighted structures and  expressions}
Fix a signature $\Sigma$ and a  finite set  of \emph{weight symbols} $\tup w=\set{w_1,w_2,\ldots},$ each with a prescribed arity $r\ge 0$.  A \emph{$\Sigma(\tup w)$-structure} over a semiring $\sem S$ is a $\Sigma$-structure $\str A$ together with 
an
interpretation of each weight symbol $w\in \tup w$ of arity $r$ as a \emph{ weight function} 
$w_\str A\from \str A^r\to \sem S$, which only assigns nonzero weights to tuples in the structure~$\str A$. This requirement means that if $r>1$ and 
 $w_\str A(\tup a)\neq 0$ then there is some relation symbol $R\in\Sigma$ of arity $r$ such that $\tup a \in R_\str A$. 
We write $\str A(\tup w_{\str A})$ to denote the above $\Sigma(\tup w)$-structure,
or simply $\str A(\tup w)$ if it does not lead to confusion.





A \emph{weighted $\Sigma(\tup w)$-expression}, or simply $\Sigma(\tup w)$-expression,
 is an expression such as for example:
$$f(z)=\sum_{x}\left(w(x)\cdot \sum_y [(x\neq y)\land \neg E(x,y)]\cdot u(x,y)+ w(y)\right).$$
Here, $[\alpha]$ denotes the \emph{Iverson bracket}, which evaluates to $1$ or $0$ depending on whether or not the first-order formula $\alpha$ holds.

Formally, weighted expressions are defined inductively, as the smallest set of expressions containing $f_1+f_2$ and $f_1\cdot f_2$, 
where $f_1,f_2$ are weighted expressions, 
 $\sum_x f$, where $x$ is a variable and $f$ and expression,
  constants $s$ for $s\in \sem S$, 
expressions $w(t^1,\ldots,t^r)$, where $w\in \tup w$ is a weight symbol of arity $r$ and $t^1,\ldots,t^r$ are terms built out of function symbols in $\Sigma$ and variables,
and expressions of the form $[\alpha]$, where $\alpha$ is a first-order $\Sigma$-formula.
We write $f(\tup x)$ to underline that the free variables of $f$ are contained in $\tup x$. A \emph{closed weighted expression} is a weighted expression with no free variables. A weighted expression 
is \emph{quantifier-free} if the subexpressions $[\alpha]$ do not involve quantifiers.

Given a $\Sigma(\tup w)$-structure $\str A(\tup w)$, the interpretation of an expression $f(\tup x)$ in $\str A$ is a function $f_{\str A(\tup w)}\from {\str A^{\tup x}}\to \sem S$, defined in the natural way. Namely, if $f(\tup x)=[\alpha(\tup x)]$ then $f_{\str A(\tup w)}(\tup a)$
is equal to $1\in \sem S$ if $\tup a\in \alpha_{\str A}$ and $0$ otherwise.
If $f(\tup x)=w(t^1,\ldots,t^r)$, then $f_{\str A(\tup w)}(\tup a)=w_\str A(t^1_\str A(\tup a),\ldots,t^r_\str A(\tup a))$.
If $f(\tup x)=g(\tup x)+h(\tup x)$ then $f_\str A(\tup a)=g_\str A(\tup a)+
h_\str A(\tup a)$, and similarly for $\cdot$ instead of $+$.
Finally, if $f(\tup x)=\sum_y g(\tup x,y)$ then $f_{\str A(\tup w)}(\tup a)=\sum_{b\in\str A} 
g_{\str A(\tup w)}(\tup a,b)$. 


\begin{example}
    Let $\phi(x,y,z)$ be a first-order formula and let $\sem S=(\mathbb Z,+,\cdot)$ be the ring of integers. Then the weighted expression 
    $\sum_{x,y,z}[\phi(x,y,z)]$ 
    counts the answers to $\phi(x,y,z)$ in a given structure $\str A$. 
    More generally,
consider the weighted query $$f=\sum_{x,y,z}[\phi(x,y,z)]\cdot p_1(x)\cdot p_2(y)\cdot p_3(y),$$
where $p_1,p_2,p_3$ are three $\mathbb Q$-valued weight functions representing probability distributions on the input structure. Then $f_\str A$ represents the probability that a random triple $(a,b,c)$ satisfies $\phi(x,y,z)$, where $a,b,c$ are selected independently with distributions $p_1,p_2,p_3$, respectively. 
By a result of Kazana and Segoufin~\cite{KazanaS13}, this value can be computed in linear time. This is generalized in Theorem~\ref{thm:algo-main}, from which it is also follows that the value $f_\str A$ can be updated in constant time, whenever a weight is modified.\qed
\end{example}




\subsubsection*{Circuits with permanent gates}
We consider circuits whose gates perform 
 operations from a semiring $\sem S$, and which input the weights of the tuples in the relations of a structure~$\str A$. Apart from the standard gates for addition and multiplication, the circuits may have \emph{permanent gates}, implementing the permanent in $\sem S$, as defined below.

 Fix a semiring $\sem S$ and 
let $M$ be a rectangular matrix with values in   $\sem S$, i.e., 
$M$ is a function $M\from R\times C\to \sem S$, where $R$ and $C$ are the sets of rows and columns of $M$, respectively.  The \emph{permanent} of $M$ is
\begin{align}\label{eq:perm}
    \perm(M)=\sum_{f}
    \prod_{r\in R}M[{r,f(r)}],        
\end{align}
where the sum ranges over all injective functions $f\from R\to C$.
Since $\sem S$ is commutative, the order of multiplication in~\eqref{eq:perm} is irrelevant, which allows us to consider matrices with unordered sets of rows and columns.
For example, if  $M$ consists of three rows, $(a_i)_{i},(b_i)_{i}, (c_i)_{i}$, then $\perm (M)=\sum_{i\neq j\neq k\neq i}a_i b_j c_k$. 
In our applications, the number of rows will be fixed (depending on the  expression), whereas the number of columns will be unbounded (depending on the data).
Note that computing the permanent of a $k\times n$ matrix amounts to evaluating the weighted expression
$$\sum_{\substack{x_1,\ldots,x_k\\\textit{distinct}}}w_1(x_1)\cdots w_k(x_k)$$ 
on the set  $\set{1,\ldots,n}$ equipped with weights $w_i(j)=M[i,j]$ for $1\le i\le k$ and $1\le j\le n$.

\medskip
A \emph{circuit with permanent gates}  is a directed acyclic graph 
whose nodes are called \emph{gates}. If there is a directed edge from gate $g$ to gate  $g'$ then $g$ is  an \emph{input} to $g'$.
The \emph{size} of a circuit is the number of edges plus the number of gates.
The \emph{fan-in} of a gate is the number of its inputs; the \emph{fan-out} is defined dually.
The \emph{reach-out} of a gate $g$ in a circuit $C$ is the number of vertices  reachable from $g$ by a directed path. The \emph{depth} of $C$
is the maximal length of a directed path.
The considered gates are of one of the following types:
\begin{itemize}[leftmargin=0.2cm]
\item    \emph{input gates} of fan-in $0$,
    \item \emph{constant gates}   of fan-in $0$, each labeled by an element of $\sem S$,
    \item \emph{multiplication gates} of fan-in two,
    \item \emph{addition gates} of arbitrary fan-in, and
    \item 
\emph{permanent gates}, whose inputs are indexed by pairs in $R\times C$, for some sets  of rows $R$ and columns $C$.   
\end{itemize}
Note that permanent gates generalize addition and multiplication gates, respectively, by considering matrices with one row and diagonal matrices.

Given a \emph{valuation} $v$ which maps the input gates to elements of a semiring $\sem S$, the value of every gate in the circuit is defined inductively in the obvious way. If a circuit $C$ has a distinguished \emph{output gate}, then by $C(v)$ we denote the value at that gate induced by the valuation $v$. Note that the same circuit $C$ can be evaluated in different semirings $\sem S$, as long as they contain the constants appearing in $C$.

It will be convenient to define circuits via \emph{terms} involving addition, multiplication, permanents, constants and variables.
The size of a term $t$ is defined as the sum over all its subterms~$t'$ (with multiple occurrences counted once) of one plus the arity of the outermost operation in $t'$ in case  $t'$ is not a variable or a constant.
For example, the term $+(\times(y,y),\times(y,y),x,x,x)$ has size $(1+5)+(1+2)+1+1=11$.
 A term of size $s$ naturally defines a circuit of size $s$.

\begin{example}\label{ex:circ}Anticipating the main result of this section,
we construct circuits evaluating  the following weighted expression: $$f=\sum_{x,y,z}[\phi]\cdot u(x)\cdot v(y)\cdot w(z),$$
    where $\phi$ is the formula $x\neq y\land x\neq z$.
    For a set $A$ with weight functions $u,v,w\from A\to \sem S$, we construct a circuit computing the value of $f$, whose inputs are the weights $u(a),v(a),w(a)$ of elements $a\in A$.
    
    Observe that $\phi$ is equivalent to the disjunction $\phi_1\lor \phi_2$ 
    where $\phi_1\equiv x\neq y\land x\neq z\land y\neq z$ and $\phi_2\equiv x\neq y\land y=z$. Moreover, the disjunction $\phi_1\lor \phi_2$ is mutually exclusive, i.e., $\phi_1\land \phi_2$ is not satisfiable. It follows that $f$ is equivalent to the expression $f_1+f_2$, where $f_1$ and $f_2$ are obtained from $f$ by replacing $\phi$ by $\phi_1$ and $\phi_2$, respectively.
    We construct circuits $C_1$ and $C_2$, computing the values of $f_1$ and $f_2$, respectively. 
    The circuit $C$ computing the value of $f$ is then constructed by creating an addition gate whose inputs are the outputs of $C_1$ and of $C_2$.
    
    Note that the value of $f_1$ is just the permanent of the $3\times A$ matrix  where column $a\in A$ has entries $u(a), v(a), w(a).$ This immediately yields the circuit $C_1$ which has one permanent gate, applied directly to the input gates, evaluating  $f_1$ on $A$.
    For $f_2$, observe that it is equivalent to $\sum_{x,y}[x\neq y]\cdot u(x)\cdot (v(y)\cdot w(y))$. The circuit $C_2$ computing $f_2$ on $A$ is therefore a $2\times A$ permanent gate, where column $a$ has entries $u(a)$ and $v(a)\cdot w(a)$, the latter being a multiplication gate applied to two input gates.
    \qed
\end{example}

\medskip We now generalize the idea from Example~\ref{ex:circ} 
and define how circuits can evaluate weighted expressions in arbitrary structures. 
Fix a signature $\Sigma$ and a set of weight symbols $\tup w$.
We consider $\sem S$-circuits as recognizers of $\Sigma(\tup w)$-structures, as follows. 
A \emph{$\Sigma(\tup w)$-circuit over a structure $\str A$} is an $\sem S$-circuit $C_\str A$ whose inputs are pairs $(w,\tup a)$,
where $w\in \tup w$ is a weight symbol of arity $r$ and $\tup a$ 
is an $r$-tuple of elements which belongs to some relation $R_\str A$ if $r>1$.
A $\Sigma(\tup w)$-structure $\str A(\tup w_{\str A})$  
 determines a valuation of the input gates of $C_\str A$ in $\sem S$,
where the value assigned to the gate $(w, \tup a)$
is $w_{\str A}(\tup a)$. 
We write $C_{\str A}(\tup w_\str A)$ to denote the output of the circuit $C_\str A$ induced by this valuation. Hence,  $\str A$ is considered to be fixed, whereas the weights $\tup w_\str A$ may vary, and are the inputs to the circuit.

Let $\CCC$ be a class of $\Sigma$-structures and let 
$(C_{\str A})_{\str A\in \CCC}$ be a circuit family such that $C_\str A$ is a $\Sigma(\tup w)$-circuit over $\str A$,
and let $f$ be a closed $\Sigma(\tup w)$-expression.
We say that $(C_{\str A})_{\str A\in \CCC}$
\emph{computes  $f$} 
if for every 
$\Sigma$-structure $\str A$ with $\str A\in\CCC$, every semiring $\sem S$,
and weight functions $w_\str A\from\str A\to\sem S$ for $w\in\tup w$,
$$C_\str A(\tup w_\str A)=f_{\str A(\tup w_\str A)}.$$
The family $(C_{\str A})_{\str A\in \CCC}$ is \emph{linear-time computable} if there is an algorithm which given $\str A\in\CCC$ outputs $C_{\str A}$ in time $\Oof_{f,\CCC}(|\str A|)$. 



\subsubsection*{Circuits for sparse structures}
We are ready to state the main result of this section. Here and later (\appmark) indicates that the proof is in the appendix.
\begin{theorem}[\appmark]\label{thm:circuits}
    Fix a signature $\Sigma$, a set of weight symbols $\tup w$, and a closed $\Sigma(\tup w)$-expression $f$.
 Let $\CCC$ be a class of $\Sigma$-structures of bounded expansion. There is a linear-time computable family $(C_\str A)_{\str A\in \CCC}$ of $\Sigma(\tup w)$-circuits with permanent gates which computes $f$ and has 
        bounded depth, 
        bounded fan-out, and 
         bounded number of rows in the permanent gates.        
    The constants implicit in the above statements depend only on the expression $f$ and on the class $\CCC$.

\end{theorem}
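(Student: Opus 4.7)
I would follow the strategy illustrated in Example~\ref{ex:homo}, extending it from plain first-order model checking to the construction of circuits for weighted expressions. The construction proceeds in three phases: (i) reduce to bounded treedepth via low treedepth colorings, (ii) reduce further to labelled rooted forests of bounded depth with only quantifier-free Iverson brackets, and (iii) inductively compile the expression into a bounded-depth circuit, using permanent gates to aggregate the summation constructs over distinct tuples of nodes.

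For (i), let $p$ be an upper bound on the number of distinct variables appearing anywhere in $f$. Applying Proposition~\ref{prop:covers} I obtain, in linear time, a coloring $c \from \str A \to C$ of each $\str A \in \CCC$ such that any $p$ colors induce a substructure of bounded treedepth $d$. Any tuple $\tup a$ ever used by $f$ occupies at most $p$ colors, so I rewrite $f$ as a sum, over subsets $D \subseteq C$ with $|D|\le p$, of the expression restricted to $c^{-1}(D)$. The number of such $D$ is $\Oof_{\CCC,f}(1)$, so this costs only a constant multiplicative blowup in the resulting circuit and reduces the problem to classes of bounded treedepth.

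For (ii), given a structure of bounded treedepth I compute in linear time a DFS spanning forest $F$ of its Gaifman graph; $F$ has depth bounded by $2^d$, and every tuple of every relation of $\str A$ lies on a single root-to-leaf path of $F$. I augment $\str A$ with unary predicates marking each node's depth in $F$ together with the precise relational pattern it forms with its ancestors, so that all of the original relations are expressible by quantifier-free formulas in this enriched vocabulary, and the augmented class retains bounded expansion. Applying Theorem~\ref{thm:qe} inside each Iverson bracket $[\alpha]$ of $f$ then produces an equivalent quantifier-free expression $\wh f$ over a further enrichment of $\str A$, still a labelled rooted forest of bounded depth. What remains is to compile a quantifier-free weighted expression over such forests.

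For (iii), I proceed by induction on the structure of $\wh f$. Sums, products, constants, and weight-symbol leaves translate directly into addition gates, multiplication gates, constant gates, and input gates; a bracket $[\alpha(\tup x)]$ with $\alpha$ quantifier-free yields a subcircuit of constant size. The essential case is the aggregate $g(\tup x) = \sum_y h(\tup x, y)$. Here I split the sum according to the \emph{shape type} of $\tup a\,y$, namely the isomorphism type of the labelled tree of depths and ancestor/descendant relations among the involved nodes; there are $\Oof_{\CCC,f}(1)$ such types. For each fixed shape type the set of realizing tuples factors as a product, over the \emph{levels} of the shape along the single root-to-leaf path on which everything lies, of sums over pairwise-distinct nodes drawn from a set prescribed by the shape and by the already-fixed variables. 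Aggregation over pairwise-distinct tuples within a level is exactly what a permanent gate computes, with number of rows bounded by $p$, as in Example~\ref{ex:circ}. The resulting circuit has depth bounded by the syntactic nesting depth of $\wh f$ and fan-out bounded by a function of the shape-type count, hence both are constant.

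\textbf{Main obstacle.} The delicate point is the case analysis in (iii) for the sum construct: after fixing the ancestor/descendant pattern of $\tup x y$ in a bounded-depth forest, one must argue that the remaining degrees of freedom factor as independent summations over distinct nodes along disjoint levels, each of which becomes a permanent of bounded dimension. Making this factorization uniform over $\str A$ so that the circuit family is computable in linear time, with bounded depth, fan-out, and permanent width, is the technical heart of the proof; the linear-time guarantees of low treedepth colorings, DFS, and Theorem~\ref{thm:qe} then combine to deliver the overall linear-time construction.
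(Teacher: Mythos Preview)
Your overall strategy matches the paper's, but there is a genuine gap in the order of steps (i) and (ii). The low-treedepth-coloring split in (i) is only sound once every Iverson bracket $[\alpha]$ is quantifier-free: if $\alpha$ contains a quantifier, then the value $[\alpha(\tup a)]$ computed in $\str A$ may depend on elements outside $c^{-1}(D)$, so passing to the induced substructure $\str A[D]$ changes the semantics (consider e.g.\ $f=\sum_x[\forall y\,E(x,y)]\cdot w(x)$). Thus the expression you would evaluate on each bounded-treedepth piece is not the original one, and the identity $f_{\str A}=\sum_D f^D_{\str A[D]}$ fails. The fix---which is what the paper does---is to invoke Theorem~\ref{thm:qe} \emph{first}, on the bounded-expansion class $\CCC$ itself, obtaining an equivalent quantifier-free expression; only then does the coloring decomposition go through. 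Applying Theorem~\ref{thm:qe} after restricting, as you do in (ii), yields a formula equivalent on $\str A[D]$ rather than on $\str A$, which is not what you need.

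Your phase (iii) is also too loose to carry the argument. First, the shape of a tuple $\tup a$ in a bounded-depth forest is in general a bounded-depth sub-forest, not a single root-to-leaf path, so the ``single path'' picture is incorrect. Second, an induction on the syntax of $\wh f$ that peels off one $\sum_y$ at a time does not naturally give rise to permanent gates: the distinctness constraints a permanent encodes arise among \emph{several} summed variables simultaneously (cf.\ Example~\ref{ex:circ}), and handling them one variable at a time forces you into an unbounded case analysis. The paper instead rewrites the quantifier-free expression as an $\sem S$-combination of \emph{basic expressions} $\sum_{\tup x} F^\lambda(\tup x)$, one per labelled shape $F$, and then recurses on the \emph{depth of the forest}: at each level the roots of $F$ must be mapped injectively to roots of the data forest, which is exactly a permanent gate with one row per root of $F$, whose entries are circuits obtained recursively on the subtrees. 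That depth recursion, not the syntax of $\wh f$, is what yields bounded depth, bounded fan-out, and a bounded number of rows.
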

\newcommand{\AC}{\mathsf{AC}}
\newcommand{\paraAC}[1]{\textrm{para-}\AC^{#1}}

Note that the same circuit $C_\str A$ can be used to evaluate $f$ in an arbitrary semiring $\sem S$. This universal property captures a strong structural property of the circuits $C_\str A$, which will be then exploited for enumerating answers to first-order queries without repetitions.

Theorem~\ref{thm:circuits} is the cornerstone of our framework, 
and will allow us to reduce various problems concerning 
classes of bounded expansion to an algebraic analysis of the permanent over a suitably chosen semiring. This will be illustrated in the following sections.




\begin{example}\label{ex:forests}
Let $\Sigma$ consist of a unary function $f$ and a unary predicate $R$, and let $\CCC$ be the class of forests of depth $1$, treated as $\Sigma$-structures, 
where in given a forest, $R$ marks all the roots
and $f$ maps each non-root node to its parent and each root to itself.
Consider the weighted expression 
\begin{align}\label{eq:ff}
f=\sum_{x,y}[\phi]\cdot u(x)\cdot v(y),
\end{align}
where $\phi$ is a quantifier-free $\Sigma$-formula.
Given a forest $F$, we construct a circuit $C_F$ evaluating $f_F$ on $F$, as obtained from Theorem~\ref{thm:circuits}.

The formula $\phi$ can be written as a mutually exclusive disjunction of \emph{quantifier-free types}, fully specifying the relationship between $x$ and $y$
(up to equivalence in $\CCC$).
Examples of such quantifier-free types are\todo{pics}:
\begin{align*}
    \phi_1&\equiv R(x)\land R(y) \land x\neq y, \\
    \phi_2&\equiv  \neg R(x)\land \neg R(y)\land x\neq y\land f(x)=f(y), \\
    \phi_3&\equiv \neg R(x)\land \neg R(y)\land  f(x)\neq f(y).    
\end{align*}
Then $f$ is equivalent to the sum of expressions obtained from $f$ by substituting $\phi$ for the quantifier-free types occurring in the disjunction.
Hence, it suffices to consider the case when $\phi$ is a quantifier-free type, as we can then add up the outputs of the circuits obtained for such expressions $f$. 

For example, in the case when $\phi=\phi_2$, note that
$$f_{F(\tup w)}=\sum_{r\in R}\sum_{\substack{a,b\in f^{-1}(r)\\ a\neq b}}u(a)\cdot v(b),$$
which readily yields a circuit: a summation gate of fan-in $|R|$, 
whose input corresponding to $r\in R$ is a permanent gate applied to a $2\times f^{-1}(r)$-matrix of input gates.

When $\phi=\phi_3$, note that
$$f_{F(\tup w)}=\sum_{\substack{r,s\in R\\r\neq s}}\left(\sum_{a\in f^{-1}(r)}u(a)\right)\cdot \left(\sum_{b\in f^{-1}(r)}v(b)\right),
$$
which yields a circuit with a $2\times R$-permanent gate, whose inputs are summation gates.\qed

\end{example}

\begin{proof}[\ifdoublecolumn{Proof} sketch for Theorem~\ref{thm:circuits}]
The proof  proceeds as follows.
By Theorem~\ref{thm:qe}, it is enough  to consider only the case when $f$ is  quantifier-free,
i.e., its subexpressions $[\alpha]$ do not involve quantifiers.
For such expressions $f$,
we prove 
 the result in special cases of increasing generality:

    \noindent
    \textbf{Case 1:} \emph{ $\CCC$ is a class of vertex-labelled forests of bounded depth}, where each forest is treated as a structure with unary predicates for representing the labelling, and the  function mapping each non-root node to its parent. 
    This case generalizes Example~\ref{ex:forests}, and proceeds by induction on the depth. 
    
    \noindent\textbf{Case 2:} \emph{$\CCC$ is a class of structures of bounded treedepth, over a relational signature with binary relations only}. This case reduces to the previous 
    case, by encoding structures of bounded treedepth as labelled forests of bounded depth, similarly as in Example~\ref{ex:homo}.


    \noindent\textbf{Case 3:} \emph{$\CCC$ is a class of structures over a  relational signature with binary relations only}. This case reduces to the previous case, using the fact that the graphs admit low treedepth colorings,
    similarly as in Example~\ref{ex:homo}.
    
    \noindent\textbf{Case 4:} $\CCC$ is a class of structures of bounded expansion. This case reduces to the previous one,   using an 
    encoding of structures from a bounded expansion class by colored graphs.
\end{proof}





\section{Weighted query evaluation}\label{sec:eval}
Thanks to Theorem~\ref{thm:circuits},
 many algorithmic problems concerning the evaluation of an expression $f$ on a class of $\Sigma(\tup w)$-structures  boil down to analogous problems for permanents. 
For example, if the permanent of a $k\times n$ matrix over a semiring $\sem S$ 
can be computed in  time $\Oof_k(n)$, then the same holds for the evaluation of any $\Sigma(\tup w)$-expression $f$ over any $\Sigma$-structure $\str A$ from a fixed class with bounded expansion. 
This turns out to be the case for any semiring $\sem S$, assuming the unit cost model.
Similarly, if there is a dynamic algorithm which maintains the  value of the permanent of a matrix $M$ in constant time
upon single-entry modifications to $M$,
then there is a dynamic algorithm which maintains in constant time the value of $f_{\str A(\tup w)}$ upon  updates to the weights $\tup w$. This turns out to be the case for example if $\sem S$ is a ring or is finite.

In this section, we study how permanents can be efficiently computed and updated in various semirings. This will yield the following result, in which we consider updates to single weights.

\begin{theorem}\label{thm:algo-main}
    Fix a $\Sigma(\tup w)$-expression $f(\tup x)$, a semiring $\sem S$ whose operations take constant time, and a class $\CCC$ of $\Sigma$-structures of bounded expansion. There is an algorithm which, given a 
    $\Sigma(\tup w)$-structure $\str A(\tup w)$ with $\str A\in\CCC$,
    computes in linear time a dynamic data structure allowing to query the value $f_{\str A(\tup w)}(\tup a)$ at a given tuple $\tup a\in \str A^{\tup x}$ in 
    logarithmic time, and to maintain  updates to the weights $\tup w$ in logarithmic time.
 If $\sem S$ is a ring or a finite semiring, then querying and updating  is done in constant time.
\end{theorem}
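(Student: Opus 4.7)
The plan is to reduce the dynamic evaluation of $f$ to dynamic maintenance of permanents of $k\times n$ matrices, where $k$ depends only on $f$ and $\CCC$, and to handle this algebraically.

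First, I would reduce to the case of closed expressions by a selector trick. For each free variable $x\in\tup x$, introduce a fresh unary weight symbol $u_x$, and consider the closed expression
\begin{align*}
\tilde f \;=\; \sum_{\tup x} f(\tup x)\cdot\prod_{x\in\tup x}u_x(x).
\end{align*}
Unary weights add no edges to the Gaifman graph, so the extended class remains of bounded expansion. Querying $f_{\str A(\tup w)}(\tup a)$ amounts to temporarily setting $u_x(\tup a[x]):=1$ (with all other values $0$), reading the output gate, and undoing, costing $O(|\tup x|)=O(1)$ extra weight flips per query. \Cref{thm:circuits} applied to $\tilde f$ produces, in linear time, a circuit $C_\str A$ of bounded depth $d$, bounded fan-out $F$, and a bounded number $k$ of rows in every permanent gate. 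I would store the value of every gate. Since $d$ and $F$ are bounded, a single input change affects at most $F^d=O(1)$ gates, and addition and multiplication gates recompute in $O(1)$. The whole problem thus reduces to maintaining $\perm(M)$ for a $k\times n$ matrix $M$ over $\sem S$ under single-entry updates, in the unit-cost model.

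The core of the proof is then the following algebraic lemma, proved by case analysis on $\sem S$. For a general semiring, I would build a segment tree over the columns of $M$; at each node covering columns $[l,r]$ I store the $2^k$-indexed vector of partial permanents $D_{[l,r]}[S]=\sum_{g\colon S\hookrightarrow\{l,\dots,r\}}\prod_{r\in S}M[r,g(r)]$ for $S\subset R$. Merging two children reduces to the subset convolution
\begin{align*}
D_{[l,r']}[S]=\sum_{S'\subset S}D_{[l,r]}[S']\cdot D_{[r+1,r']}[S\setminus S'],
\end{align*}
which costs $O(3^k)=O(1)$ semiring operations, so a single-entry update percolates in $O(\log n)$ time. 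For a ring, Möbius inversion on the lattice of set-partitions of $R$ yields the identity
\begin{align*}
\perm(M)=\sum_{\pi\in\Pi(R)}\mu(\hat 0,\pi)\prod_{B\in\pi}\Big(\sum_c\prod_{r\in B}M[r,c]\Big),
\end{align*}
and each inner block-sum can be maintained in $O(1)$ per update by subtracting the old column contribution and adding the new one; the final assembly involves a bounded number of partitions and is $O(1)$. For a finite semiring, where subtraction is unavailable, I would instead exploit the symmetry of $\perm$ in the columns: maintain the count $n_v$ of columns $c$ with column-value $v\in\sem S^k$, for each of the constantly many possible values $v$. Grouping injective column choices by the tuple of chosen values rewrites $\perm(M)$ as a fixed polynomial in the $n_v$'s with semiring-valued coefficients, and eventual periodicity of $n\mapsto n\cdot a$ in any finite commutative monoid lets each such scalar multiple be read off in $O(1)$ after a precomputation depending only on $\sem S$.

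The main obstacle I expect is the finite-semiring case: the lack of subtraction blocks the ring-style inclusion-exclusion, forcing the detour through counts of column values, which crucially uses that $k$ is bounded so that $|\sem S|^k=O(1)$. Combining the preprocessing and propagation analysis with the above lemma yields preprocessing $\Oof_{\CCC,f}(|\str A|)$ and per-update/query time of $\Oof(\log |\str A|)$ in general, and $\Oof(1)$ for rings and finite semirings, as claimed.
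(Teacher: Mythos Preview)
Your proposal is correct and follows essentially the same route as the paper: reduce to a closed expression via selector weights, invoke \cref{thm:circuits}, and then reduce everything to dynamically maintaining permanents of $k\times n$ matrices, with a case split on the semiring. The paper's treatment of the three cases is notationally different but equivalent to yours: for general semirings it splits columns at the midpoint using an ordered variant $\perm'$ (\cref{lem:recursion}) rather than your subset-indexed segment tree; for rings it expands $\prod_{i\neq j}(1-[x_i=x_j])$ rather than invoking M\"obius on the partition lattice (these yield the same terms); and for finite semirings it packages the column-count-and-periodicity argument into circuits with threshold and mod gates (\cref{lem:counting}) rather than reasoning about the counts $n_v$ directly.

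One small imprecision worth tightening: you write that ``addition and multiplication gates recompute in $O(1)$'', but the circuits of \cref{thm:circuits} may contain addition gates of unbounded fan-in (see \cref{ex:forests}). The paper sidesteps this because an addition gate \emph{is} a $1\times n$ permanent gate, so it falls under your permanent-maintenance lemma (segment tree for $k=1$); you should say this explicitly rather than claiming $O(1)$ recomputation for all addition gates.
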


\begin{example}\label{ex:pagerank}
    Consider the following weighted query\footnote{We thank an anonymous reviewer for suggesting this query.}, evaluated in the ring of rationals, which computes the weights in a subsequent round 
    of the PageRank algorithm on a directed graph $G=(V,E)$, given the weights from the previous round. 
    $$f(x)=\frac{1-d}N+d\cdot \sum_y [E(y,x)]\cdot \frac {w(y)}{l(y)}.$$
    Here, $w(a)$ denotes the weight in the previous round, and $l(a)$ is the out-degree of $a$, $N$ is the total number of pages and $d\le 1$ is a damping factor.
    Formally, as we don't allow division, we represent $1/l(\cdot)$ as a weight function.  Theorem~\ref{thm:algo-main} yields an algorithm which computes in linear time a dynamic data structure which allows to query the value $f(a)$ in constant time, and maintains updates  to weights in constant time. \qed
\end{example}

To prove the theorem, we analyze the permanent over various semirings. Throughout Section~\ref{sec:eval}, we assume that the semiring operations of $\sem S$ take constant time. 

\subsubsection*{Arbitrary semirings}To prove Theorem~\ref{thm:algo-main},
we consider first the case when $f$ has no free variables, as the general case reduces to this case.
We show that there is an algorithm computing the value $f_{\str A(\tup w)}$ in linear time.
Additionally, the value can be maintained in logarithmic time, upon updates to $\tup w$.
We start by showing this for the permanent gates. In fact, we show that permanent gates can be replaced by circuits in which every gate can reach only a logarithmic number of other gates. This allows to update the output of the circuit in logarithmic time, when a single input is modified.

We remark that the evaluation of permanents of rectangular matrices in commutative semirings has been studied before, see~\cite{Bjorklund:2010:EPR:1840004.1840168} for an overview. Our algorithm below allows computing the permanent of a $k\times n$ matrix in time $2^{\Oof(k\log k)}\cdot n$.
The dependency on $k$ can be decreased~\cite{Bjorklund:2010:EPR:1840004.1840168} to $\Oof(k\cdot 2^k)$. As $k$ depends on the query in our case, and we are not concerned with query complexity in this paper, we do not explore such optimizations.

\smallskip
First, we define a variant of the permanent which is easier to analyze here.
Given a matrix $M$ with rows $R$ and columns $C$, where $R$ and $C$ are both totally ordered, define $\perm'(M)$ by the same expression as the permanent, but where $f$ only ranges over increasing functions $f\from R\to C$. Fix an ordering of $C$.
Then, $\perm(M)=\sum_{<_R}\perm'(M)$, where $<_R$ ranges over all $|R|!$ orderings of $R$. Since $|R|$ is considered fixed, it suffices to 
produce circuits computing $\perm'(M)$.

Suppose $M$ is a $k\times n$ matrix,
and let $1\le \ell\le n$. 
By $A_i^\ell$ denote 
 the submatrix of $M$ with rows $\set{1,\ldots,i}$ and columns 
$\set{1,\ldots,\ell}$,
and by $B_i^\ell$ denote the submatrix with rows $\set{i+1,\ldots,k}$ and columns $\set{\ell+1,\ldots,n}$. 
The following lemma follows by grouping 
increasing functions $f\from [k]\to[n]$ according to the moment they exceed $\ell$, i.e.,
$\max\setof{0\le i\le k}{f(i)\le  \ell}$.
\begin{lemma}\label{lem:recursion}
For a $k\times n$ matrix $M$ and $1\le \ell\le n$,
the  following identity holds:
\begin{align}\label{eq:rec}
    \perm'(M)=&
    \sum_{i=0}^k\perm'(A_i^\ell)\cdot \perm'(B_i^\ell)
\end{align}
\end{lemma}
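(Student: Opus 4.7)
The plan is to follow the hint literally: partition the set of increasing functions $f\from [k]\to[n]$ used in the definition of $\perm'(M)$ according to the cutoff index
\[
i(f)\;=\;\max\setof{0\le j\le k}{f(j)\le\ell},
\]
with the convention that $i(f)=0$ when $f(1)>\ell$. Because $f$ is increasing, $i(f)$ is simply the number of indices $j$ with $f(j)\le\ell$; consequently the restriction $f|_{\{1,\dots,i(f)\}}$ lands in $\{1,\dots,\ell\}$ and $f|_{\{i(f)+1,\dots,k\}}$ lands in $\{\ell+1,\dots,n\}$, and both restrictions are themselves increasing.

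First I would expand the left-hand side using the definition of $\perm'$, grouping terms by the value of $i(f)$:
\[
\perm'(M)=\sum_{i=0}^{k}\;\sum_{\substack{f\from[k]\to[n]\text{ incr.}\\i(f)=i}}\prod_{r=1}^{k}M[r,f(r)].
\]
Next I would establish the key combinatorial observation: for each fixed $i\in\{0,\dots,k\}$, the map
\[
f\ \mapsto\ \bigl(f|_{\{1,\dots,i\}},\;f|_{\{i+1,\dots,k\}}\bigr)
\]
is a bijection between increasing $f\from[k]\to[n]$ with $i(f)=i$ and pairs $(g,h)$ with $g\from\{1,\dots,i\}\to\{1,\dots,\ell\}$ and $h\from\{i+1,\dots,k\}\to\{\ell+1,\dots,n\}$ both increasing. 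Under this bijection the product splits cleanly, since $M$ agrees with $A_i^\ell$ on the relevant rows/columns in the first block and with $B_i^\ell$ in the second:
\[
\prod_{r=1}^{k}M[r,f(r)]\;=\;\prod_{r=1}^{i}A_i^\ell[r,g(r)]\ \cdot\ \prod_{r=i+1}^{k}B_i^\ell[r,h(r)].
\]

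Finally, using commutativity/distributivity of the semiring, the inner double sum factorises as $\perm'(A_i^\ell)\cdot\perm'(B_i^\ell)$, and summing over $i$ yields~\eqref{eq:rec}. The only care needed is at the boundary cases $i=0$ and $i=k$: here one of the two submatrices has no rows, and $\perm'$ of a matrix with an empty row set is the empty product~$1$, so these terms correctly account for the functions that stay entirely above $\ell$ (resp.\ entirely at most $\ell$). There is no real obstacle; the whole argument is a routine splitting of the sum and a check that the multiplicative decomposition of the product matches the block structure of $M$.
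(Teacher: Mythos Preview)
Your proposal is correct and follows exactly the approach indicated in the paper, which only offers the one-line hint about grouping increasing functions $f\from[k]\to[n]$ by $\max\setof{0\le i\le k}{f(i)\le\ell}$. You have simply filled in the routine details of that hint, including the bijection and the boundary cases, and there is nothing to add.
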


Note that if $\ell=\lfloor n/2\rfloor$, all the matrices involved in the sum have roughly half the number of columns as $M$. 
Recursively expanding the permanents on the right-hand side, we obtain a term 
which -- viewed as a circuit -- has the  properties expressed in the  following lemma.

\begin{lemma}\label{lem:arb-semi}
    Fix a semiring  $\sem S$ and  a number $k\in\N$.
   Then there is a family $(C_{n})_{n\in \N}$ of circuits without permanent gates computing the permanent of $k\times n$ matrices, and with 
        depth $\Oof_k(\log n)$, fan-out $\Oof_k(1)$,
        reach-out $\Oof_k(\log n)$,
   and where each circuit $C_n$ is computable  in time $\Oof_k(n)$.
\end{lemma}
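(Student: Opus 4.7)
The plan is to apply Lemma~\ref{lem:recursion} recursively with $\ell = \lfloor n/2 \rfloor$, which naturally yields a balanced binary tree $T$ on the $n$ columns of $M$ of depth $\lceil \log_2 n \rceil$; each internal node $v \in T$ is associated with the contiguous column interval $I_v$ below it. For each such $v$ and each of the $\binom{k+1}{2}$ row intervals $[p+1, q]$ with $0 \leq p \leq q \leq k$, I would introduce an addition gate $g_{v,p,q}$ intended to compute $\perm'$ of the submatrix of $M$ restricted to rows $\{p+1, \ldots, q\}$ and columns $I_v$. Lemma~\ref{lem:recursion}, applied at the midpoint of $I_v$, realizes $g_{v,p,q}$ as a sum over $j \in \{p, \ldots, q\}$ of multiplication gates $g_{u, p, j} \cdot g_{w, j, q}$, where $u$ and $w$ are the left and right children of $v$. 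At the leaves, the gates $g_{\mathrm{leaf}(c), p, q}$ are the constant $1$ if $q = p$, the input $M[p+1, c]$ if $q = p+1$, and the constant $0$ otherwise. The output is $g_{\mathrm{root}, 0, k}$, which computes $\perm'(M)$; summing over the $k!$ orderings of the rows then recovers $\perm(M)$ at the cost of $\Oof_k(1)$ additional gates.

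The next step is to verify the claimed bounds. Correctness is a straightforward induction on $T$ using Lemma~\ref{lem:recursion}. Each level of $T$ contributes exactly two circuit layers (multiplications, then additions), so the total depth is $\Oof(\log n) = \Oof_k(\log n)$. For fan-out, observe that $g_{v,p,q}$ is used only by multiplication gates at the parent $v'$ of $v$ in $T$; when $v$ is the left child of $v'$, $g_{v,p,q}$ appears only in products $g_{v,p,q} \cdot g_{\sigma(v), q, q'}$ for $q' \in \{q, \ldots, k\}$, and symmetrically when $v$ is the right child. Hence fan-out is $\Oof(k) = \Oof_k(1)$. For reach-out, a direct induction on the distance from the root shows that from $g_{v,p,q}$ one can reach at most $\Oof(k^2)$ addition gates at each of the $\Oof(\log n)$ ancestors of $v$, plus the corresponding multiplication gates feeding them; the total is $\Oof_k(\log n)$. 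The size of the circuit is $\Oof_k(n)$, since each of the $2n - 1$ tree nodes contributes $\Oof_k(1)$ gates, and the whole construction is carried out in time $\Oof_k(n)$ by a single bottom-up pass over $T$.

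I do not anticipate a serious obstacle in this argument: once the divide-and-conquer structure is in place, the proof reduces to careful bookkeeping of which gates depend on which. The one point deserving mild attention is handling the boundary cases at the leaves of $T$ (row intervals larger than one yield the constant~$0$; the empty row interval yields~$1$) and checking that the final sum over $k!$ row permutations needed to pass from $\perm'$ to $\perm$ does not inflate the depth, fan-out, or reach-out beyond the stated bounds, which is clear since $k$ is treated as fixed.
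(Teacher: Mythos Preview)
Your proposal is correct and follows essentially the same approach as the paper: the paper's proof is really just the paragraph preceding the lemma, which says to recursively expand~\eqref{eq:rec} with $\ell=\lfloor n/2\rfloor$ and view the resulting term as a circuit, and you have carefully spelled out this construction together with the bookkeeping needed to verify the depth, fan-out, reach-out, and size bounds.
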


This immediately yields the following.
\begin{corollary}
    The circuits $C_\str A$ in Theorem~\ref{thm:circuits} can be assumed to have no permanent gates, but logarithmic reach-out instead of bounded depth.
\end{corollary}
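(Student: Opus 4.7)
The plan is to take the circuit family $(C_\str A)_{\str A\in\CCC}$ provided by Theorem~\ref{thm:circuits}, which has bounded depth, bounded fan-out, and permanent gates with a bounded number $k$ of rows, and eliminate each permanent gate by substituting in its place the permanent-free circuit from Lemma~\ref{lem:arb-semi}. Since Lemma~\ref{lem:arb-semi} as stated produces circuits for $\perm'$ rather than $\perm$, I would first apply the identity $\perm(M)=\sum_{<_R}\perm'(M)$ summed over the $|R|!$ orderings of the row set (paying only a constant-factor blowup because $|R|=k$ is bounded), and then splice in the corresponding permanent-free subcircuits. Each replacement subcircuit has size $\Oof_k(n)$, fan-out $\Oof_k(1)$, and reach-out $\Oof_k(\log n)$ for $n=|\str A|$, and is computable in time $\Oof_k(n)$.

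Verifying the easy properties is routine. Bounded fan-out is preserved because the non-substituted gates retain their original fan-out and the substituted subcircuits themselves have bounded fan-out. Linear-time computability is retained because $C_\str A$ has $\Oof_{\CCC,f}(|\str A|)$ gates in total, of which only a constant number (per node of the original macro-structure, and with only a bounded number of levels) are permanent gates to be replaced, and each replacement takes $\Oof_k(n)$ time. After substitution, no permanent gates remain.

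The main point to check, and the step I expect to be the main obstacle, is the reach-out bound. Fix a gate $g$ in the new circuit. Since the original $C_\str A$ has bounded depth $d$ and bounded fan-out $\phi$ (both $\Oof_{\CCC,f}(1)$), any forward trajectory through the original macro-structure visits at most $\phi^d=\Oof_{\CCC,f}(1)$ original gates, and in particular passes through at most $\Oof_{\CCC,f}(1)$ former permanent gates, i.e.\ at most $\Oof_{\CCC,f}(1)$ substituted subcircuits. The crucial observation is that each time a directed path exits a substituted subcircuit it does so at the single output gate of that subcircuit, and each time it enters the next substituted subcircuit it does so at one of the input gates; by the reach-out guarantee of Lemma~\ref{lem:arb-semi}, the number of gates reachable inside any single subcircuit from any one of its gates is $\Oof_k(\log n)$. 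Summing the bounded number of such per-subcircuit contributions, plus the bounded number of non-permanent original gates along the way, gives the desired bound $\Oof_{\CCC,f}(\log n)$ on the reach-out of $g$. Equivalence of outputs with $f_{\str A(\tup w)}$ is immediate since each permanent gate has been replaced by a subcircuit computing the same function.
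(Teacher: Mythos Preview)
Your approach is correct and is exactly what the paper intends (the paper itself just writes ``This immediately yields the following''). Two small points to clean up.

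First, Lemma~\ref{lem:arb-semi} is already stated for $\perm$, not for $\perm'$; the reduction $\perm(M)=\sum_{<_R}\perm'(M)$ is part of its proof. So the extra layer you add summing over row orderings is unnecessary (though harmless).

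Second, your linear-time argument does not go through as written. There can be $\Theta(|\str A|)$ permanent gates in $C_\str A$ (see Example~\ref{ex:forests}, case $\phi_2$: one permanent gate per root), so ``only a constant number \ldots\ are permanent gates'' is false, and bounding each replacement by $\Oof_k(|\str A|)$ would give a quadratic bound. The correct accounting is that a permanent gate with $n_j$ columns has $k\cdot n_j$ inputs, and the total number of edges in $C_\str A$ is $\Oof_{\CCC,f}(|\str A|)$; hence $\sum_j n_j = \Oof_{\CCC,f}(|\str A|)$, and since replacing a gate with $n_j$ columns costs $\Oof_k(n_j)$, the total replacement time is $\Oof_{\CCC,f}(|\str A|)$. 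Your reach-out argument is fine: the reachable set in the macro-circuit has size $\Oof_{\CCC,f}(1)$, so it contains $\Oof_{\CCC,f}(1)$ former permanent gates, and each contributes $\Oof_k(\log|\str A|)$ reachable internal gates (entry into each subcircuit happens at $\Oof_{\CCC,f}(1)$ input positions, since those inputs come from the bounded macro-reachable set via bounded fan-out).
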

Note that a circuit $C$ with reach-out $c$  yields a dynamic data structure which allows maintaining the value $C(v)$ under updates modifying the valuation $v$ 
in time $\Oof(c)$. This yields the following:
\begin{corollary}\label{cor:arb-sem}
    There is an algorithm which, given a 
    weighted structure  $\str A(\tup w)$ with $\str A\in \CCC$, computes in time  $\Oof_{\CCC,f}(|\str A|)$
    a dynamic data structure maintaining the value $f_{\str A(\tup w)}$ which 
    allows to update  $\tup w$ 
    in  time $\Oof_{\CCC,f}(\log|\str A|)$.
\end{corollary}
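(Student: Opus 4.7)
The plan is to combine \Cref{thm:circuits} with \Cref{lem:arb-semi} to obtain, in linear time, a permanent-free circuit of logarithmic reach-out that computes $f_{\str A(\tup w)}$, and then to read off a dynamic data structure directly from this circuit. The intermediate corollary already asserts the existence of such a circuit; I will spell out why, and then explain how the data structure follows from the reach-out bound.

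Concretely, given $\str A\in\CCC$, first invoke \Cref{thm:circuits} to compute in time $\Oof_{\CCC,f}(|\str A|)$ a circuit $C_\str A$ of bounded depth, bounded fan-out, whose permanent gates operate on matrices with a bounded number of rows. Each permanent gate thus has shape $k\times n$ with $k=\Oof_f(1)$ and $n=\Oof(|\str A|)$. Replacing every such gate by the permanent-free subcircuit supplied by \Cref{lem:arb-semi} yields a new circuit $\widetilde C_\str A$, still of size $\Oof_{\CCC,f}(|\str A|)$ and still constructible in linear time, in which every gate has bounded fan-in. For the reach-out bound: within a single permanent-expansion subcircuit the reach-out is $\Oof_k(\log n)$, and composing with the outer circuit, whose reach-out is $\Oof(1)$ thanks to bounded depth and bounded fan-out, yields total reach-out $\Oof_{\CCC,f}(\log|\str A|)$ at every gate of $\widetilde C_\str A$. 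A minor technicality is that any addition gate of unbounded fan-in coming from \Cref{thm:circuits} can first be re-expressed as a $1\times n$ permanent gate (whose permanent is exactly the sum), so that \Cref{lem:arb-semi} applies uniformly and no gate of $\widetilde C_\str A$ retains unbounded fan-in.

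With $\widetilde C_\str A$ in hand, the data structure is immediate: store the current value of every gate together with pointers to its inputs, so that the value at the distinguished output gate, which equals $f_{\str A(\tup w)}$, is available in $\Oof(1)$. An update to a single weight $w(\tup a)$ modifies exactly one input gate; the set of gates whose stored value can change is contained in its reach-out, and therefore has size $\Oof_{\CCC,f}(\log|\str A|)$. Processing these gates in topological order and recomputing each from its (boundedly many) inputs costs $\Oof(1)$ per gate, for a total update cost of $\Oof_{\CCC,f}(\log|\str A|)$, as required.

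There is no real obstacle once \Cref{thm:circuits} and \Cref{lem:arb-semi} are in place: the corollary is essentially a repackaging of the preceding results. The only point that deserves explicit verification is the composition of reach-out bounds — that the outer, bounded-depth, bounded-fan-out envelope does not blow up the logarithmic reach-out of the permanent expansions — and this is the straightforward estimate sketched above. The argument is semiring-agnostic and uses only that each semiring operation takes constant time; strengthenings to constant update time for rings or finite semirings will require additional incremental bookkeeping, handled separately in \Cref{thm:algo-main}.
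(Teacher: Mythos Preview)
Your proposal is correct and follows the paper's approach: combine \Cref{thm:circuits} with \Cref{lem:arb-semi} to obtain a permanent-free circuit of logarithmic reach-out, then observe that bounded reach-out directly yields the claimed update time. The paper states this in one sentence, while you spell out the reach-out composition and the handling of unbounded-fan-in addition gates as $1\times n$ permanents; the latter is a point the paper leaves implicit but which is genuinely needed for the ``recompute each affected gate in $\Oof(1)$'' step to go through in a general semiring.
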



Note that for the semirings $(\N\cup\set {+\infty},\min,\max)$ or 
$(\N\cup\set {+\infty},\min,+)$, the permanent of a $1\times n$ matrix is just the smallest entry of the matrix, and maintaining the smallest entry allows to implement sorting in the comparison model. This leads to the following.

\begin{proposition}\label{prop:sorting}
The logarithmic update time in Corollary~\ref{cor:arb-sem} and Theorem~\ref{thm:algo-main} is tight, i.e., it cannot be improved for general semirings $\sem S$ while keeping the initialization time linear.
\end{proposition}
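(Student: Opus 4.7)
I would derive Proposition~\ref{prop:sorting} by reducing comparison-based sorting to the dynamic problem handled by Theorem~\ref{thm:algo-main}. Fix the semiring $\sem S = (\N \cup \{+\infty\}, \min, \max)$ (the same argument works for the tropical semiring $(\N \cup \{+\infty\}, \min, +)$), and consider the closed weighted expression $f = \sum_{x} w(x)$, where $\Sigma = \emptyset$ and $w$ is a single unary weight symbol. On a structure $\str A$ with domain $\{1,\ldots,n\}$---whose Gaifman graph has no edges, so that $\str A$ belongs to any bounded-expansion class $\CCC$---the value $f_{\str A(w)}$ is $\min_{i} w_\str A(i)$. Every semiring operation in $\sem S$ is a single order comparison between elements of $\N \cup \{+\infty\}$, so in the unit-cost model assumed by the theorem, each step of the algorithm performs at most one order comparison on the weights.

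Suppose for contradiction that Theorem~\ref{thm:algo-main} could be strengthened to give, for this $f$ and $\CCC$, an algorithm with linear initialization and update time $U(n) = o(\log n)$. From such an algorithm I would build a comparison-based sorter. Given distinct $a_1, \ldots, a_n$ to sort, I first initialize the data structure with $w_\str A \equiv +\infty$, which uses $O(n)$ time and no comparisons involving the $a_i$'s. Next, for each $i$ I apply the update $w(i) := a_i$ and simultaneously record the pair $(a_i, i)$ in an auxiliary hash table $H$; each update performs $O(U(n))$ order comparisons on the inputs. Finally, I perform $n$ extract-min rounds: read the maintained value $m = f_{\str A(w)}$, output $m$ as the next element in sorted order, look up its slot $i = H[m]$ (an equality query, not an order query), and apply the update $w(i) := +\infty$. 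This sorts the sequence using $O(n \cdot U(n)) = o(n \log n)$ order comparisons, contradicting the classical $\Omega(n \log n)$ lower bound and forcing $U(n) = \Omega(\log n)$.

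The only point requiring care---and the single place where one might worry about the reduction---is to make sure that all order comparisons performed on the input values $a_i$ are accounted for by the $O(n \cdot U(n))$ semiring operations executed by the data structure. This is immediate from the design: the initial weights are $+\infty$ (independent of the $a_i$'s); the hash table $H$ is accessed by equality, not by order; and reading the value of $f$ maintained by the data structure does not incur any comparison on the weights. The same reasoning covers Corollary~\ref{cor:arb-sem}, since $f$ is already closed, and covers Theorem~\ref{thm:algo-main} by taking the tuple of free variables $\tup x$ to be empty.
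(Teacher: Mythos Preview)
Your argument is correct and matches the paper's own approach: reduce comparison sorting to dynamically maintaining $\min_i w(i)$ (the value of $\sum_x w(x)$ in the $(\min,\max)$ or $(\min,+)$ semiring), so that $o(\log n)$ update time would yield an $o(n\log n)$-comparison sorting procedure. One small refinement: the auxiliary hash table $H$ is unnecessary and its justification (``equality, not order'') is shaky, since hashing the $a_i$'s is not a comparison-model primitive; instead, note that $\min$ and $\max$ each return one of their two arguments, so you can tag every weight with its index and propagate the tag through every semiring operation at zero extra comparison cost, making the index of the current minimum directly readable from the maintained value.
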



Therefore, to obtain a dynamic algorithm with linear initialization time and  sublogarithmic update time, we need to assume some additional properties of the semiring $\sem S$.

\subsubsection*{Rings}
The next  lemma states that for rings, permanent gates can be completely eliminated.

\begin{lemma}[\appmark]\label{lem:perm-rings}
    Suppose that  $\sem S$ is a ring and fix a number $k\in\N$.
   Then there is a family $(C_{n})_{n\in \N}$ of $\sem S$-circuits computing the permanent of $k\times n$ matrices, which has
        bounded depth,
        bounded fan-out,
   and where each circuit $C_n$ is computable  in time $\Oof_k(n)$.
\end{lemma}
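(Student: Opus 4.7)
The plan is to eliminate permanent gates entirely by inclusion--exclusion over the partition lattice of the row set $R=[k]$, using that a ring provides additive inverses. For a partition $\pi$ of $[k]$, set
\[
F_\pi(M)\;=\;\prod_{B\in\pi}\Bigl(\sum_{c=1}^{n}\prod_{r\in B}M[r,c]\Bigr).
\]
Expanding the product over $B$ first and the sum over $c$ second, one sees that $F_\pi(M)$ is precisely the sum of $\prod_{r\in[k]}M[r,f(r)]$ over all functions $f\from [k]\to[n]$ that are constant on each block of $\pi$. Grouping such functions $f$ by the (unique) finest partition $\sigma\ge \pi$ on which $f$ is constant, and applying M\"obius inversion on the partition lattice, yields the identity
\[
\perm(M)\;=\;\sum_{\pi\,\vdash\,[k]} \mu(\hat 0,\pi)\,F_\pi(M),
\]
where $\hat 0$ is the partition into singletons and $\mu$ is the M\"obius function of the partition lattice. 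This is a polynomial identity in the $M[r,c]$ with integer coefficients, so it holds in every commutative ring $\sem S$ after specializing the constants $\mu(\hat 0,\pi)\in\Z$ via the canonical map $\Z\to\sem S$.

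The circuit $C_n$ reads off directly from this formula. At the top there is one addition gate with $B_k$ inputs, indexed by partitions of $[k]$. For each partition $\pi$, the corresponding input is a constant gate holding $\mu(\hat 0,\pi)\in\sem S$ multiplied (via a balanced binary tree of multiplication gates of fan-in two, of depth $\Oof(\log k)=\Oof_k(1)$) with the product $\prod_{B\in\pi}S_{B}$, where $S_B=\sum_{c=1}^{n}P_{B,c}$ is an addition gate of fan-in $n$ whose inputs are the product gates $P_{B,c}=\prod_{r\in B}M[r,c]$, again realized as balanced binary trees of depth $\Oof_k(1)$. Every entry $M[r,c]$ appears as an input to at most one gate $P_{B,c}$ per partition $\pi$, namely for the unique block $B\in\pi$ with $r\in B$, so the fan-out of each input gate is at most $B_k=\Oof_k(1)$. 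The depth is a constant depending only on $k$, the total size is $\Oof_k(n)$, and the circuit is manifestly constructible in time $\Oof_k(n)$.

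The only mild subtleties are (i) turning the $k$-ary products $\prod_{r\in B}$ and $\prod_{B\in\pi}$ into bounded-fan-in multiplication gates, which costs only a $\Oof_k(1)$ depth overhead since $|B|,|\pi|\le k$, and (ii) ensuring that the integer coefficients $\mu(\hat 0,\pi)$ can be absorbed: this is immediate, because they become constant gates labeled by their images in $\sem S$, and the claimed identity remains valid in $\sem S$ since it is an identity in $\Z[M[r,c]\mid r\in[k],\,c\in[n]]$. The hardest part is really just recognizing the right inclusion--exclusion; once the formula is written down, all the circuit bounds follow mechanically, and the ring hypothesis enters only to make the negative M\"obius coefficients meaningful.
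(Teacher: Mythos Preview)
Your proof is correct and takes essentially the same approach as the paper: both express $\perm(M)$ as an integer linear combination of the ``power sums'' $S_B=\sum_c\prod_{r\in B}M[r,c]$ via inclusion--exclusion, and read off the circuit from that formula. The paper reaches this by expanding $\prod_{i\neq j}(1-[x_i=x_j])$ and collapsing equalities, whereas you package the same combinatorics as M\"obius inversion on the partition lattice; the resulting formulas and circuits are the same (the coefficient of $F_\pi$ in the paper's expansion is exactly $\mu(\hat 0,\pi)$). One small slip in your exposition: the partition you group by is $\ker f$, which is the \emph{coarsest} (not finest) partition on which $f$ is constant; the M\"obius inversion you then invoke is nonetheless correct.
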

For example, for $k=2$, the permanent of a matrix  with two rows $(a_1 \ldots a_n)$ and $(b_1 \ldots b_n)$ can be expressed as follows:
    $$\sum_{i\neq j}a_ib_j=\sum_i a_i\cdot \sum_i b_i-\sum_i a_i\cdot b_i,$$
    and the right-hand side readily yields a term using only addition and multiplication, possibly by $-1$. In general, we use an inclusion-exclusion formula.

\begin{corollary}
    If in Theorem~\ref{thm:circuits} we consider rings only, then the obtained circuits $C_\str A$ have no permanent gates. 
\end{corollary}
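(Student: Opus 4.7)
The plan is to apply \cref{thm:circuits} to obtain the circuits $C_\str A$ with permanent gates, and then locally substitute each permanent gate with an equivalent permanent-free sub-circuit produced by \cref{lem:perm-rings}.

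More precisely, first invoke \cref{thm:circuits} to obtain, for every $\str A\in\CCC$, a $\Sigma(\tup w)$-circuit $C_\str A$ computing $f$, where $C_\str A$ has bounded depth, bounded fan-out, and where every permanent gate has at most $k$ rows, for some constant $k=\Oof_{f,\CCC}(1)$. Moreover, $C_\str A$ is computable in time $\Oof_{f,\CCC}(|\str A|)$, and in particular its size (including the number of columns summed over all permanent gates) is $\Oof_{f,\CCC}(|\str A|)$.

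Next, for each permanent gate $g$ in $C_\str A$, say with $k$ rows and $n_g$ columns, apply \cref{lem:perm-rings} to construct a sub-circuit $C_{n_g}$ without permanent gates, of bounded depth and bounded fan-out (where the bounds depend only on $k$), computing the same permanent, and  computable in time $\Oof_k(n_g)$. Replace $g$ by $C_{n_g}$, feeding the same inputs in the obvious manner. Since $k$ is bounded by a constant depending only on $f$ and $\CCC$, the depth and fan-out of the resulting circuit $C_\str A'$ are still bounded by a constant depending only on $f$ and $\CCC$. The total construction time is $\Oof_{f,\CCC}(|\str A|)+\sum_g \Oof_k(n_g)=\Oof_{f,\CCC}(|\str A|)$, since $\sum_g n_g$ is bounded by the size of $C_\str A$, which is $\Oof_{f,\CCC}(|\str A|)$.

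The expected main obstacle is simply to check that substitution preserves linear-time computability and the constant bounds on depth and fan-out. There is no new combinatorial content beyond the two cited results; the corollary is essentially a plug-in composition.
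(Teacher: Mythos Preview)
Your proposal is correct and matches the paper's approach: the corollary is stated without proof in the paper, as it follows immediately by composing Theorem~\ref{thm:circuits} with Lemma~\ref{lem:perm-rings}, exactly as you describe. Your explicit verification that bounded depth, bounded fan-out, and linear-time computability survive the substitution is appropriate and adds nothing beyond what the paper leaves implicit.
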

Note that a circuit $C$ with fan-out $c$ and constant depth $d$ has reach-out $\Oof(c^{d+1})$, yielding constant-time updates:
\begin{corollary}\label{cor:ring}
    There is an algorithm which, given a 
    weighted structure $\str A(\tup w)$ with $\str A\in\CCC$ and weight functions $\tup w$ taking values in a ring $\sem S$,  computes in linear time 
    a dynamic data structure maintaining the value $f_{\str A(\tup w)}$ which 
    allows to update  $\tup w$ 
    in constant time.
\end{corollary}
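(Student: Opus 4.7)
The plan is to derive the corollary as a straightforward combination of \cref{thm:circuits} and \cref{lem:perm-rings}, using the observation that a circuit of bounded depth and bounded fan-out has bounded reach-out.

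First I would apply \cref{thm:circuits} to the closed expression $f$ and the class $\CCC$, obtaining in time $\Oof_{\CCC,f}(|\str A|)$ a $\Sigma(\tup w)$-circuit $C_\str A$ over $\str A$ of bounded depth, bounded fan-out, and bounded number of rows in each permanent gate (say, at most $k$ rows, where $k$ depends only on $f$ and $\CCC$). Since $\sem S$ is a ring, I would then invoke \cref{lem:perm-rings} to replace every permanent gate of $C_\str A$ by an $\sem S$-subcircuit with no permanent gates, of bounded depth and bounded fan-out. If a particular permanent gate has $k$ rows and $n$ columns, the replacement subcircuit has size $\Oof_k(n)$ and is computable in time $\Oof_k(n)$. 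Summing over all permanent gates, whose total number of input connections is linear in $|\str A|$, the resulting permanent-free circuit $C'_\str A$ has size $\Oof_{\CCC,f}(|\str A|)$ and is computable in time $\Oof_{\CCC,f}(|\str A|)$.

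Next I would verify that $C'_\str A$ still has bounded depth and bounded fan-out. Depth is bounded because the original circuit has bounded depth $d$ and each permanent gate is replaced by a subcircuit of bounded depth $d'$, so the total depth is at most $d \cdot d'$, a constant depending only on $f$ and $\CCC$. Fan-out is bounded because an input to a permanent gate might now feed into several gates of the replacement subcircuit, but that subcircuit has bounded fan-out, so each such input contributes only a constant to the fan-out in $C'_\str A$; combined with the bounded fan-out of the original circuit outside the permanent gates, the overall fan-out is bounded by some constant $c$.

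Finally I would conclude by noting that a circuit of depth at most $d''$ and fan-out at most $c$ has reach-out at most $\sum_{i=0}^{d''} c^i = \Oof(c^{d''})$, which is a constant depending only on $f$ and $\CCC$. Thus the value of $C'_\str A$ under the valuation induced by $\tup w_\str A$ -- which equals $f_{\str A(\tup w_\str A)}$ -- can be stored at the output gate, and each update to a single weight $w_\str A(\tup a)$ affects only a constant number of gate values, which can be recomputed in constant time. The only mildly delicate point, which I would check carefully, is that the fan-out bound is genuinely preserved when stitching the replacement subcircuits into $C_\str A$; this follows because each input of a permanent gate is a distinct gate of $C_\str A$ and is used only within the corresponding replacement subcircuit, whose internal fan-out is bounded by \cref{lem:perm-rings}.
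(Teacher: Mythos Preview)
Your proposal is correct and follows essentially the same approach as the paper: apply \cref{thm:circuits}, eliminate permanent gates via \cref{lem:perm-rings}, and then observe that bounded depth together with bounded fan-out gives bounded reach-out, hence constant-time updates. One small point worth making explicit (which the paper also glosses over) is that the addition gates produced by \cref{lem:perm-rings} may have unbounded fan-in, so ``recomputed in constant time'' relies on the ring structure to update a sum incrementally via subtraction; your justification for fan-out preservation is also slightly off (inputs to a permanent gate need not be distinct gates), but the conclusion still holds because the bounded fan-out of $C_\str A$ limits how many permanent-gate positions any single gate can feed.
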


\subsubsection*{Finite semirings}
Another case where we can replace permanent gates by simpler circuits is the case of finite semirings. For those semirings, we   replace permanent gates by circuits with \emph{counting} gates: \emph{threshold} gates and \emph{mod} gates (the precise definition is in the appendix).
The key  observation is that the permanent of a $k\times n$ matrix $M$ can be computed  based on the 
    number of occurrences of each tuple $c\in \sem S^k$ as a column in $M$. 
    This leads to the following lemma.

\begin{lemma}[\appmark]\label{lem:counting}
    Fix a finite semiring $\sem S$ and a number $k\in\N$.
    There is a family $(C_{n})_{n\in \N}$ of  $\sem S$-circuits with counting gates computing the permanent of $k\times n$ matrices, which has
         bounded depth,
         bounded fan-out,
    and each circuit $C_n$ is computable  in time~$\Oof_k(n)$, has thresholds bounded by $\Oof(|\sem S|+k)$, and the modulus of each mod gate is the order of a cyclic subgroup of $(\sem S,+)$.
\end{lemma}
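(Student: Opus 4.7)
The plan is to reduce the permanent to a sum of $\Oof_k(1)$ ``column-type'' terms, and then realize each term via a combination of counting gates that compute column-type multiplicities and a constant-size lookup gadget that exploits the eventual periodicity of scalar multiplication in a finite semiring.

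The first step is a combinatorial rewriting. I would group injections $f\colon [k]\to [n]$ in the definition of $\perm(M)$ by their induced column-type function $g\colon [k]\to \sem S^k$ given by $g(i)=M[\cdot,f(i)]$. Let $n_c$ denote the number of columns of $M$ equal to $c\in\sem S^k$ and $m_c(g)=|g^{-1}(c)|$. A standard counting argument (choosing $m_c$ distinct column indices out of the $n_c$ available, in $m_c!$ orderings, for each $c$) then yields
\begin{align*}
  \perm(M) \;=\; \sum_{g\colon [k]\to\sem S^k} s_g\cdot\prod_{c\in\sem S^k}(n_c)_{m_c(g)},\qquad s_g:=\prod_{i=1}^k g(i)[i]\in\sem S,
\end{align*}
where $(n)_m:=n(n-1)\cdots(n-m+1)$ is the falling factorial. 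Since $|\sem S|$ and $k$ are fixed, this is a sum of $|\sem S|^{k^2}=\Oof_k(1)$ terms, each with a constant semiring coefficient $s_g$.

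Next I would realize this sum as a circuit. For each $j\in [n]$ and each $c\in\sem S^k$, a constant-size subcircuit reading the entries of column $j$ computes the indicator of $M[\cdot,j]=c$; summing these indicators over $j$ produces the count $n_c$. The main technical step is then to compute the semiring value $s_g\cdot N_g$, where $N_g:=\prod_c(n_c)_{m_c(g)}$ is a natural number that can be exponentially large in $n$. The crucial observation is that in any finite commutative semiring, the map $N\mapsto N\cdot s$ from $\N$ to $\sem S$ is eventually periodic, with preperiod at most $|\sem S|$ and period equal to the order of the cyclic subgroup of $(\sem S,+)$ generated by $s$ in the periodic regime. Hence $s_g\cdot N_g$ is determined by $\min(N_g,T)$ and $N_g\bmod P$ for $T=\Oof(|\sem S|)$ and an appropriate $P$ of the stated form. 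In turn, both $\min((n_c)_{m_c(g)},T)$ and $(n_c)_{m_c(g)}\bmod P$ are determined by $\min(n_c,T+k)$ and $n_c\bmod P$, and these are computable from the column-type indicators via a threshold gate of threshold $\Oof(|\sem S|+k)$ and a mod gate of modulus $P$. A constant-size lookup gadget converts this bounded data into the semiring output $s_g\cdot N_g$.

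Finally, summing the $\Oof_k(1)$ values $s_g\cdot N_g$ via a single addition gate yields $\perm(M)$. The resulting circuit has constant depth (every gadget has bounded size once $k$ and $|\sem S|$ are fixed), bounded fan-out (each input $M[i,j]$ is consumed by the $|\sem S|^k$ type-indicator subcircuits for column $j$), and is constructible in time $\Oof_k(n)$. I expect the main obstacle to be the clean orchestration of the lookup gadget: pinpointing the minimal bounded summary of the $n_c$'s that simultaneously determines $s_g\cdot N_g$ across all $g$, and checking that the required thresholds and moduli respect the exact bounds claimed. A subsidiary algebraic point is verifying that the eventual period of $N\mapsto N\cdot s$ in a finite commutative semiring coincides with the order of a cyclic subgroup of $(\sem S,+)$, which reduces to observing that the periodic tail $(N\cdot s)_{N\geq T}$ is a coset of such a subgroup.
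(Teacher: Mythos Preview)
Your proposal is correct and takes essentially the same approach as the paper: both group the injections $f\colon[k]\to[n]$ by the $k$-tuple of column types they select (your function $g$ is the paper's $k\times k$ matrix $N$), reduce the permanent to an $\Oof_k(1)$-term sum of falling-factorial multiples of fixed semiring constants, and evaluate each such multiple via the eventual periodicity of $N\mapsto N\cdot s$, which the paper isolates as a separate lemma (your ``subsidiary algebraic point'' is exactly that lemma's content). The only cosmetic difference is that the paper computes each factor $(m_i-j)\cdot 1_{\sem S}$ individually and multiplies in $\sem S$, whereas you fold the whole term $s_g\cdot N_g$ into one lookup gadget; since $(nm)\cdot s=(n\cdot 1_{\sem S})(m\cdot 1_{\sem S})\cdot s$ in any commutative semiring, the two organizations are equivalent.
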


\begin{corollary}\label{cor:finite-semirings}
    If $\sem S$ is a finite semiring, then the circuits $C_\str A$ in Theorem~\ref{thm:circuits} can be assumed to be  $\sem S$-circuits with counting gates with thresholds at most $\Oof_f(|\sem S|)$ and modulus as above. In particular, for the boolean semiring $\sem B$, the mod gates are absent.
\end{corollary}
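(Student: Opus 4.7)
The plan is to obtain the corollary by substitution: take the circuits produced by \Cref{thm:circuits} and locally replace each permanent gate by the circuit provided by \Cref{lem:counting}.

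First, I would invoke \Cref{thm:circuits} to obtain a linear-time computable family $(C_\str A)_{\str A\in\CCC}$ of $\Sigma(\tup w)$-circuits with permanent gates computing $f$, of bounded depth $D_f$, bounded fan-out $F_f$, and with a bound $K_f$ on the number of rows of every permanent gate --- the constants depending only on $f$ and $\CCC$. Then, for each permanent gate $g$ in $C_\str A$, of some dimension $k\times n_g$ with $k\leq K_f$, I would apply \Cref{lem:counting} to produce a sub-circuit $P_g$ with counting gates of bounded depth and bounded fan-out, with thresholds bounded by $\Oof(|\sem S|+K_f)=\Oof_f(|\sem S|)$, and with every mod-gate's modulus equal to the order of a cyclic subgroup of $(\sem S,+)$, computable in time $\Oof_{K_f}(n_g)=\Oof_f(n_g)$. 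The new circuit $C'_\str A$ is obtained by substituting each $g$ by $P_g$, connecting the $k n_g$ inputs of $g$ to the corresponding inputs of $P_g$ and rewiring the output of $P_g$ in place of the output of $g$.

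I then need to verify that this substitution preserves the required global properties. Depth and fan-out of $C'_\str A$ remain bounded because they are composed of two bounded quantities (from \Cref{thm:circuits} and from \Cref{lem:counting}, respectively); the thresholds and moduli inherit the bounds from \Cref{lem:counting}. Linear-time computability of $(C'_\str A)$ holds because $(C_\str A)$ is linear-time computable and the total number of inputs to all permanent gates in $C_\str A$ is $\Oof_{f,\CCC}(|\str A|)$, so the cumulative cost of producing all sub-circuits $P_g$ is $\sum_g \Oof_f(n_g)=\Oof_{f,\CCC}(|\str A|)$. Correctness is immediate: each $P_g$ computes the same permanent as $g$ under every valuation, so $C'_\str A(\tup w_\str A)=C_\str A(\tup w_\str A)=f_{\str A(\tup w_\str A)}$.

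The remark about the boolean semiring $\sem B$ follows by inspecting the moduli of the mod gates obtained. Since the addition of $\sem B$ is $\lor$, which is idempotent, the only subgroup of the additive monoid $(\sem B,\lor)$ is the trivial one $\set{\textit{false}}$ of order $1$; mod-$1$ gates are vacuous and can be discarded, leaving only threshold gates. There is no genuine obstacle in any of these steps --- the substance of the argument has already been absorbed into \Cref{thm:circuits} and \Cref{lem:counting}, and the only care needed is to ensure that the substitution is carried out in such a way that linear-time computability is preserved, which follows from the linear bound on the total fan-in of permanent gates across $C_\str A$.
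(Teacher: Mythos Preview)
Your proposal is correct and matches the paper's intended (unstated) argument: the corollary is immediate from substituting the permanent-gate-free circuits of \Cref{lem:counting} into the circuits of \Cref{thm:circuits}, and your verification of depth, fan-out, thresholds, moduli, and linear-time computability is exactly what is needed. Your handling of the boolean case via the trivial cyclic subgroup of $(\sem B,\lor)$ is also correct and matches the paper's remark that $m_{\sem B}=1$.
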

If we assume a computation model which allows incrementing and decrementing   integers, as well as testing 
divisibility by any fixed number,
 in constant time, then we get the following.

\begin{corollary}\label{cor:fin-sem}
    There is an algorithm which, given 
    a 
    weighted structure $\str A(\tup w)$ with $\str A\in  \CCC$, computes in linear time 
    a dynamic data structure maintaining the value $f_{\str A(\tup w)}$ which
    allows to update  $\tup w$ 
    in constant time.
\end{corollary}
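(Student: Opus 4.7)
The plan is to combine Corollary~\ref{cor:finite-semirings} with the bounded reach-out argument used to prove Corollary~\ref{cor:ring}. By Corollary~\ref{cor:finite-semirings}, we may compile $\str A$ in time $\Oof_{f,\CCC}(|\str A|)$ into an $\sem S$-circuit $C_\str A$ computing $f_{\str A(\tup w)}$, whose only gates of unbounded fan-in are threshold and mod gates, with thresholds bounded by $\Oof_f(|\sem S|)$ and moduli equal to orders of cyclic subgroups of $(\sem S,+)$; moreover the circuit has bounded depth $d$ and bounded fan-out $c$, with $c$ and $d$ depending only on $f$ and $\CCC$. Consequently the reach-out of any gate is at most $c^{d+1}=\Oof_{f,\CCC}(1)$, so each input gate influences only a constant number of gates.

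Initialization proceeds by evaluating all gates of $C_\str A$ bottom-up in time linear in its size, which is $\Oof_{f,\CCC}(|\str A|)$. Alongside each gate we store its current value; at each threshold gate we additionally store a counter recording how many of its inputs currently satisfy the threshold condition, and at each mod gate we store the running count of inputs modulo its fixed modulus. This bookkeeping is produced in one linear-time pass.

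When a single weight $w_\str A(\tup a)$ is modified, the value of the corresponding input gate changes, and only the $\Oof_{f,\CCC}(1)$ gates reachable from it need to be refreshed. Each refresh takes $\Oof(1)$ time: arithmetic gates of bounded fan-in are recomputed directly, a threshold gate has its counter incremented or decremented and compared against its constant threshold, and a mod gate has its counter updated modulo its fixed modulus. Under the assumed computation model allowing constant-time increments, decrements and divisibility tests, each such step is constant time, so the total update time is $\Oof_{f,\CCC}(1)$.

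The main subtlety lies in handling the threshold and mod gates of unbounded fan-in: without the auxiliary counters maintained at each such gate, a single input change would force rescanning all of its inputs, which is too slow. The fact that the thresholds and moduli are bounded by constants depending only on $f$, $\CCC$ and $|\sem S|$ is exactly what guarantees that these counters can be stored and updated using $\Oof(1)$ primitive integer operations, yielding the claimed constant-time updates.
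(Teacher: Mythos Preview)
Your proposal is correct and follows exactly the approach the paper intends: the paper's ``proof'' is the single sentence preceding the corollary about the computation model, together with the bounded reach-out argument already spelled out for rings, and you have filled in those details faithfully. One minor remark: in your last paragraph, the boundedness of the thresholds and moduli is not really what makes the counter updates constant time; what matters is the assumed computation model (constant-time increment, decrement, and divisibility testing), which works regardless of the size of the threshold or modulus.
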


\begin{proof}[\ifdoublecolumn{Proof} of Theorem~\ref{thm:algo-main}]
 The case when $f$ has no free variables  follows from Corollaries~\ref{cor:arb-sem}, \ref{cor:ring}, and \ref{cor:fin-sem}. The case of an expression $f(\tup x)$ with 
 free variables $\tup x=\set{x_1,\ldots,x_k}$ reduces to that case, by considering
  the closed expression $$f'=\sum_\tup x f(\tup x)\cdot v_1(x_1)\cdots v_k(x_k),$$ where 
 $v_1,\ldots,v_k$ are new unary weight symbols, set to $0$ by default. Then $f_{\str A(\tup w)}(\tup a)=f'_{\str A(\tup w\tup v)}$ where for $i=1,\ldots,k$, the function $v_i$ maps $a_i$ to $1$ and all other elements to $0$.
Hence, querying the value of $f$ at a tuple $\tup a$ can be simulated by $2|\tup x|$ updates, as it amounts to temporarily setting the weights $v_1(a_1),\ldots,v_k(a_k)$ to $1$, querying the value of $f'$, and then setting the weights back to $0$.
\end{proof}


We remark that for some semirings, simulating querying 
by a sequence of updates may not be optimal.
For example, if $\sem S=(\mathbb N\cup \set{+\infty},\min,+)$, then
querying can be achieved in constant time,
while still having logarithmic time updates and linear time initialization,
even though updating provably requires logarithmic time, due to Proposition~\ref{prop:sorting}. To achieve these improved query times, again it is sufficient to solve the problem for the case of a permanent, where querying amounts to computing the permanent of the current matrix with a fixed number of values temporarily modified. For semirings such as above,
constant-time querying can be achieved using heaps. We omit the details.



\section{Provenance semiring}

Provenance analysis is about tracing
how a given query output was produced, and what are the properties of the computation which lead to this output.
For example, one could try to analyze which input tuples are responsible for the fact that a given tuple is produced in the output, or what is the probability that a given output is produced, basing on some probability distribution on the inputs. The algebraic theory of provenance is based on semirings. Indeed, many instances of provenance analysis can be seen as evaluating queries in 
fixed semirings. It is known that   the \emph{free semiring}, considered below, is in a certain precise sense the most general semiring which can be used for provenance analysis~\cite{provenance}.




\paragraph{The free semiring.}
\newcommand{\eol}{\bot}

Let $A$ be a (possibly infinite) set of symbols. The \emph{free (commutative) semiring} generated by $A$, denoted $\sem F_A$,
is the  semiring consisting of sums of unordered sequences of elements of $A$,
with addition and multiplication defined naturally.
The $0$ of this semiring is the empty sum, 
and the $1$ is the sum with one summand, being the empty sequence. This semiring is isomorphic to the semiring of polynomials with variables from $A$ and coefficients from $\N$. It is also called the \emph{provenance semiring}~\cite{provenance}.

\begin{example}Let $\Sigma$ be a signature consisting of a binary relation symbol $E$, and consider the query $$\phi(x)=\exists_{y,z} E(x,y)\land E(y,z)\land E(z,x).$$
    For the purpose of tracking  provenance,
consider the expression $$f(x)=
\sum_{y,z}  w(x,y) \cdot w(y,z)\cdot w(z,x).$$ 
Suppose  $G=(V,E)$ is the directed graph with vertices $a,b,c,d$ and edges $ab, bc, ca, bd,da$.
Then, $\phi_G(a)$ holds,
and evaluating $f(a)$ in $G(w)$,
where $w_G(x,y)=e_{xy}$ is a unique identifier for each $(x,y)\in E$,
reveals  \emph{why} it holds:  $$f_\str A(a)=
e_{ab}e_{bc}e_{ca}+e_{ab}e_{bd}e_{da}.
$$
\end{example}
\smallskip

\paragraph{Iterators.}
As the elements of the free semiring be sums of a number of summands which depends on the data, 
 we  cannot reasonably assume that such elements can be represented in single memory cells. We therefore assume that each element of the free semiring is represented by an iterator, as follows.

A \emph{bi-directional  iterator} for a list $u_1,\ldots,u_\ell$
 stores an index $i\in\set{0,1,\ldots,\ell}$, initially set to $1$, and implements the following methods:
\begin{description}
    \item[current:] output $u_i$, or $\eol$ if $i=0$.
    \item[next:] increment $i$, modulo $\ell$,
    \item[previous:] decrement $i$, modulo $\ell$,    
\end{description}
We say that 
 such a bi-directional iterator 
 has \emph{access time} $t$, where $t\in\N$,
 if the operations \textbf{next},
 \textbf{previous} take time $t$,
 and the operation \textbf{current}, takes time $t\cdot |u_i|$, where $|u_i|$ is the length of $u_i$.
 
 In the theorem below, for a given $\Sigma(\tup w)$-structure $\str A$, $\tup w_\str A$ is a valuation in the semiring $\sem F_A$, given by providing a bi-directional iterator representing $w_{\str A}(a)$, for each $a\in \str A$ and $w\in \tup w$. An \emph{update} to $\tup w$ modifies a single value $w_{\str A}(a)$, by providing an iterator for the new value.

 \begin{theorem}[\appmark]\label{thm:provenance}
     Let $\CCC$ be a class of $\Sigma$-structures of bounded expansion and let $f(\tup x)$ be a  $\Sigma(\tup w)$-expression.
     There is an algorithm which, given a structure $\str A\in \CCC$ and a tuple $\tup w$ of weight functions with values in $\sem F_A$, where for each $w\in \tup w$ and $a\in \str A$,
     the element $w_{\str A}(a)$ is represented by a bi-directional iterator with access time bounded by a constant, computes in linear time a data structure  
     which allows to query any tuple $\tup a\in \str A^{\tup x}$ and obtain in constant time
     a bi-directional iterator for $f_{\str A(\tup w)}(\tup a)$ with constant access time.
     The data structure is maintained in constant time upon updates to $\tup w$.
 \end{theorem}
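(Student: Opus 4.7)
The plan is to combine Theorem~\ref{thm:circuits} with an implementation of circuit gates as lazy bi-directional iterators over $\sem F_A$. As in the proof of Theorem~\ref{thm:algo-main}, I will eliminate free variables by passing to the closed expression $f' = \sum_{\tup x} f(\tup x) \cdot v_1(x_1) \cdots v_k(x_k)$ for auxiliary unary weight symbols $v_1, \ldots, v_k$; answering a query at $\tup a$ then reduces to $\Oof(1)$ weight updates (setting $v_i(a_i) := 1$ and resetting afterwards) followed by returning a handle to the output iterator. Hence it suffices to build a constant-access iterator for the output gate of the circuit for $f'$, and to maintain the associated bookkeeping in $\Oof(1)$ per single-weight update.

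First I will invoke Theorem~\ref{thm:circuits} to produce in linear time a circuit $C_\str A$ of bounded depth, bounded fan-out, and bounded number of rows $k$ in permanent gates. I will then equip each gate with a boolean tracking whether its value is the zero of $\sem F_A$ (the empty list), together with a lazy bi-directional iterator expressed in terms of its inputs' iterators. A multiplication gate enumerates pairs $(a_i, b_j)$ in odometer order and, on demand, outputs the concatenation $a_i b_j$. An addition gate walks through a doubly linked list of its \emph{nonzero} inputs, stepping within the current sub-iterator and skipping to the next live input when exhausted. For a permanent gate with rows $R$ and columns $C$, I will maintain for each row $r \in R$ a doubly linked list $S_r$ of columns $c$ such that $M[r,c]$ is nonzero; enumeration nests a walk through ordered $k$-tuples of distinct columns drawn from the $S_r$, skipping columns already chosen (which costs $\Oof(k) = \Oof(1)$ per step), with an inner odometer over the sub-indices $(j_r)_{r\in R}$, emitting the concatenated product $\prod_{r\in R} M[r, f(r)]_{j_r}$ at each position.

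Single-weight updates are absorbed as follows. Changing $w_\str A(\tup a)$ swaps the corresponding input iterator and possibly flips its zero/nonzero status; since the circuit has bounded depth and bounded fan-out, at most $\Oof(1)$ other gates can in turn flip their zero/nonzero status, and I will propagate the change through these gates, performing the $\Oof(1)$-time insertions and deletions in the relevant linked lists. Iterators themselves are pure views of the current bookkeeping, so no further work is needed; any iterator handed out before an update is implicitly invalidated by the update.

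The hard part will be the permanent gate: one must enumerate exactly the nonzero injective assignments $f \colon R \to C$ in $\Oof(1)$ per step, while also supporting $\Oof(1)$ updates that flip entries between zero and nonzero. The per-row linked-list representation $(S_r)_{r \in R}$ handles both tasks because $k$ is bounded; a nested enumeration with explicit \textquotedblleft in-use\textquotedblright\ marking on columns keeps the per-step cost constant. Composing gates bottom-up, the output iterator has access time $\Oof_{\CCC, f}(1)$, which completes the construction.
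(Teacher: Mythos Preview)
Your overall plan matches the paper's: reduce to a closed expression via auxiliary weights, invoke Theorem~\ref{thm:circuits}, and attach to each gate a lazy bi-directional iterator, propagating zero/nonzero flags through the bounded-depth, bounded-fan-out circuit. Addition and multiplication gates are handled correctly.

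The gap is in your permanent-gate iterator. You propose to enumerate injections $f\colon R\to C$ by a nested backtracking walk: for each row $r$ in turn, step through the list $S_r$ of columns with $M[r,c]\neq 0$, skipping the at most $k-1$ columns already used. This does \emph{not} give constant delay. Take $k=3$, $S_1=\{1,a\}$, $S_2=\{1,\dots,n\}$, $S_3=\{1\}$. With $c_1=1$ the inner loops try $c_2=2,3,\dots,n$, and for each of them $c_3=1$ is blocked by $c_1$; only after $\Theta(n)$ dead ends does the outer loop advance to $c_1=a$ and produce the first output $(a,2,1)$. The same phenomenon means your zero/nonzero flag for a permanent gate cannot be maintained just from the per-row lists $S_r$: detecting that $\perm(h(M))=0$ is a matching-existence question, not a per-row nonemptiness test.

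The paper fills this gap with Lemma~\ref{lem:perm-iter} together with Lemma~\ref{lem:trivial-bool}. It uses the expansion $\perm(M)=\sum_{c\in C} M[r,c]\cdot\perm(M^{rc})$ and, crucially, iterates only over those columns $c$ for which the boolean sub-permanent $\perm(h(M)^{rc})$ is $1$. The point is that whether a column $c$ is ``good'' in this sense depends only on its boolean column-type $\tup t\in\sem B^R$ and on the multiset of column-types present in $M$ (counted up to $|R|$). So one buckets the columns by type into doubly-linked lists $L_{\tup t}$, computes in $\Oof_k(1)$ the set of good types, and iterates the concatenation of the good buckets; an entry flip moves one column between two buckets and triggers an $\Oof_k(1)$ recomputation. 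Plugging this refinement into your scheme in place of the naive nested walk makes the argument go through.
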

Note that the element  $f_{\str A(\tup w)}(\tup a)$ of $\sem F_A$ may have repeating summands, and the enumerator will enumerate each repetition separately. This will not be a problem when we apply the result above in Theorem~\ref{thm:enum} later to enumerate all answers to a first-order query
 \emph{without} repetitions.

 To prove Theorem~\ref{thm:provenance}, we apply our circuit framework and Theorem~\ref{thm:circuits} in particular, which essentially tell us  that we only need to consider the case of permanents, handled below:
\begin{lemma}[\appmark]\label{lem:perm-iter}
    There is a dynamic algorithm which, given an $R\times C$ matrix $M$ with entries from $\sem F_A$, where  each entry $M[r,c]$ is represented by a bi-directional iterator with access time $t$,
computes in  time $\Oof_{R}(C)$ a data structure which maintains   a 
bi-directional iterator for $\perm(M)$ with access time $\Oof_R(t)$,
and updates it in constant time, whenever an entry of $M$ is updated.
\end{lemma}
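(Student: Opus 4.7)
Over the free semiring, the permanent of $M$ expands as
\begin{equation*}
\perm(M) \;=\; \sum_{\sigma : R \hookrightarrow C}\ \sum_{(i_r)_r}\ \prod_{r \in R} u_{r,\sigma(r),i_r},
\end{equation*}
where the inner sum ranges over position tuples $(i_r)_r$ with $i_r \in \{1,\ldots,\ell_{r,\sigma(r)}\}$ and $u_{r,c,i}$ denotes the $i$-th summand of $M[r,c]$. The plan is to build an iterator over these triples $(\sigma,(i_r)_r)$ organized as a two-level odometer, whose \emph{current} value is the product of the $|R|$ current values of the sub-iterators $M[r,\sigma(r)]$.

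Preprocessing would store pointers to the $|R|\cdot C$ input iterators in a table indexed by $(r,c)$, at cost $\Oof_R(C)$. The inner level of the odometer enumerates the position tuples $(i_r)_r$ for a fixed $\sigma$ in the standard cascading way, relying on the $\eol$ sentinel returned at position $0$ to detect when a given sub-iterator has completed a full cycle. The outer level would enumerate injections $\sigma$ by combining Johnson--Trotter on the permutations of an $|R|$-subset of $C$ with a revolving-door enumeration of such subsets, so that consecutive injections differ in $\Oof_R(1)$ column assignments. Since $|R|$ is fixed, a single next or previous at the top level would trigger $\Oof_R(1)$ sub-iterator calls and $\Oof_R(1)$ bookkeeping updates, giving access time $\Oof_R(t)$; current multiplies the $|R|$ values returned by the sub-iterators, at cost $\Oof_R(t\cdot|\text{term}|)$.

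The central invariant to maintain is that, at each transition between two consecutive injections, every sub-iterator is at the same position from which it started being used by the current injection. This would hold because each level of the inner odometer is run as a \emph{full cycle}: after $\ell_{r,\sigma(r)}$ increments, the bi-directional iterator for $M[r,\sigma(r)]$ returns to its starting position (with passages through $\eol$ detected and skipped). Hence when the outer step switches from $\sigma$ to $\sigma'$, any sub-iterator used by $\sigma'$ that also occurred in $\sigma$ is already in a consistent state---in particular, at its preprocessing position $1$---without our ever having to ``seek'' within a sub-iterator, which we cannot implement in bounded time.

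Updates would then be straightforward: replacing $M[r,c]$ amounts to swapping the pointer in the table in $\Oof(1)$ time and resetting the top-level bookkeeping to the first injection and initial counters; no explicit movement of sub-iterators is required, because the newly installed iterator is at position $1$ by assumption and the others are left untouched. The main obstacle this design is tailored to overcome is precisely the shared use of sub-iterators across many different injections: lacking any random-access primitive on the input iterators, we must arrange that, from the sub-iterator's perspective, every traversal induced by an outer injection is a periodic return to its starting position, which is exactly what the full-cycle odometer guarantees.
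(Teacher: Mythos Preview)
Your two-level odometer has a genuine gap: it does not handle zero entries. An entry $M[r,c]$ may equal $0\in\sem F_A$ (the empty formal sum, represented by an iterator of length~$0$), and any injection $\sigma$ with $\sigma(r)=c$ for such an entry contributes nothing and must be skipped. But your outer enumeration (revolving-door on subsets combined with Johnson--Trotter) gives no control over how many consecutive injections touch a zero entry. Already for $|R|=1$, if all but one column of $M$ is zero, your outer odometer walks through $|C|-1$ empty inner odometers before producing a single summand, so a \textbf{next} call costs $\Omega(|C|)$ rather than $\Oof_R(t)$. The paper's proof is built precisely around this obstacle: it uses the Laplace-type expansion $\perm(M)=\sum_{c\in C} M[r,c]\cdot\perm(M^{rc})$ along a fixed row, projects $M$ to the boolean semiring via the homomorphism $h\colon\sem F_A\to\sem B$, and maintains (Lemma~\ref{lem:trivial-bool}) a constant-access list of exactly those columns $c$ for which $M[r,c]\cdot\perm(M^{rc})\neq 0$. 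The recursion then only ever visits nonzero summands.

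There is also a problem with your update step. You reset the outer bookkeeping to the first injection while leaving the sub-iterators ``untouched''; but if the update arrives in the middle of the inner odometer, the sub-iterators $M[r,\sigma(r)]$ for the current~$\sigma$ sit at arbitrary positions, not at~$1$, and you cannot seek them back in bounded time. Your full-cycle invariant holds only at the boundaries between consecutive outer injections, not at an arbitrary interrupt, so after the reset the inner odometer is out of sync and the first \textbf{current} is wrong. The paper's data structure avoids this because the column lists of Lemma~\ref{lem:trivial-bool} depend only on the boolean pattern of~$M$, which is updated in $\Oof_R(1)$ by moving one column between buckets; the sub-iterators themselves are picked up fresh (at position~$1$) when the enumeration restarts.
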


 \section{Dynamic query enumeration}\label{sec:enum}
 We now apply our framework to yield a dynamic algorithm for enumerating the answers $\phi_\str A$ to a first-order query $\phi(\tup x)$. Below, we consider updates to a $\str A$ structure which can add/remove single tuples to its relations provided that they preserve the Gaifman graph. This amounts to saying that a tuple $\tup a$ can be added to a relation only if its elements form a clique in the Gaifman graph of $\str A$.

\begin{theorem}[\appmark]\label{thm:enum}
    Let $\Sigma$ be a signature,
 let $\phi(\tup x)$ be a first-order $\Sigma$-formula,
 and let $\CCC$ be a class of $\Sigma$-structures of bounded expansion.
 There is an algorithm which 
 inputs a structure $\str A\in\CCC$ and computes in linear time a dynamic data structure which provides a constant access, bi-directional iterator for $\phi_{\str A}$.
 The data structure is maintained  in constant time upon  updates to $\str A$ which preserve the Gaifman graph.
\end{theorem}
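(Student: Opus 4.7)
The plan is to reduce the enumeration problem to Theorem~\ref{thm:provenance}, instantiated in the free semiring. Writing $\tup x=\set{x_1,\ldots,x_k}$, introduce unary weight symbols $w_1,\ldots,w_k$ and consider the closed $\Sigma(\tup w)$-expression
$$f\;=\;\sum_{\tup x}[\phi(\tup x)]\cdot w_1(x_1)\cdots w_k(x_k).$$
Interpret $w_i$ in the free semiring $\sem F_A$ by setting $w_i(a)=e^i_a$, where the symbols $e^i_a$ for $a\in \str A$ and $1\le i\le k$ are pairwise distinct generators. Then $f_\str A=\sum_{\tup a\in\phi_\str A}e^1_{a_1}\cdots e^k_{a_k}$, and since distinct tuples yield distinct formal products, the summands are in bijection with the tuples of $\phi_\str A$, with no repetitions. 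Each individual weight $w_i(a)$ is a one-element formal sum and therefore admits a trivial constant-access iterator. Applying Theorem~\ref{thm:provenance} to $f$ in this setup yields, in linear time, a data structure providing a constant-access bi-directional iterator for $f_\str A$. Each output is a formal product $e^1_{a_1}\cdots e^k_{a_k}$, from which the tuple $(a_1,\ldots,a_k)$ is read off in constant time (as $k$ is fixed), giving the required iterator for $\phi_\str A$.

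To support updates to $\str A$ that preserve the Gaifman graph $G$, I would internalise the updatable content of $\str A$ into weight functions. For each arity $r$ appearing in $\Sigma$, compute once the fixed set $R^*_r$ of $r$-tuples whose elements form a clique in $G$; this set is invariant under the allowed updates, and the class of structures enriched by these ``superrelations'' still has bounded expansion. For each relation $R\in\Sigma$ of arity $r$, introduce a pair of complementary $\set{0,1}$-valued weight symbols $u_R^+,u_R^-$ supported on $R^*_r$, recording membership and non-membership of each clique-forming tuple in $R$. Using Theorem~\ref{thm:qe} to convert $\phi$ to a quantifier-free formula $\wh\phi$ over the enriched structure, then rewriting $\wh\phi$ as a mutually exclusive disjunction of conjunctions of literals (in the spirit of Example~\ref{ex:circ}), and finally replacing each literal $R(\tup t)$ or $\neg R(\tup t)$ in the weighted expression by the corresponding factor $u_R^+(\tup t)$ or $u_R^-(\tup t)$, yields an expression $f'$ equivalent to $f$ whose dependence on the updatable relations is entirely through the $u_R^\pm$. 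A single-tuple update to $R$ then translates to a constant number of weight updates, handled by Theorem~\ref{thm:provenance} in constant time.

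The main obstacle is handling negated atoms in the absence of additive inverses in $\sem F_A$, which prevents writing $[\neg R(\tup t)]$ as $1-u_R(\tup t)$. The device of complementary indicators $u_R^+,u_R^-$ above, combined with the mutually exclusive DNF rewriting (so that no formal summand of $f_\str A$ is created twice from overlapping disjuncts), resolves this, with a little extra case analysis when a subtuple $\tup t$ appearing in a negated atom is not guaranteed to form a clique in $G$ (in which case $[\neg R(\tup t)]$ is identically $1$ and is absorbed into a Gaifman-determined constant factor). A related subtlety is that the enriched structure produced by Theorem~\ref{thm:qe} may itself encode the updatable relations nontrivially, so one must verify that its additional symbols can likewise be organised as Gaifman-determined superrelations together with $\set{0,1}$-valued weight overlays; this is expected to follow from inspecting the construction in~\cite{Dvorak2013testing}, whose augmentations depend only on local information in $\str A$.
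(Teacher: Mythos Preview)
Your approach is essentially the paper's: encode tuples via generator products $e^1_{a_1}\cdots e^k_{a_k}$ in the free semiring, apply Theorem~\ref{thm:provenance} to the expression $f=\sum_{\tup x}[\phi]\cdot w_1(x_1)\cdots w_k(x_k)$, and for the dynamic part, first reduce to quantifier-free $\phi$ and then internalise the updatable relations as complementary $\set{0,1}$-valued weights $u_R^+,u_R^-$ supported on the fixed clique-superrelations. The paper does exactly this; your superrelations $R^*_r$ are the paper's $R_1,\ldots,R_{r_{\max}}$, and your $u_R^\pm$ are the paper's $v_R^\pm$ (packaged as Lemma~\ref{lem:gaifman-reduction}).

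The one genuine loose end is the dynamic quantifier elimination step. You invoke the \emph{static} Theorem~\ref{thm:qe} and then say the enriched structure $\whstr A$ ``is expected'' to decompose into Gaifman-determined data plus $\set{0,1}$-overlays, by inspection of~\cite{Dvorak2013testing}. This is not a proof: a priori, the additional relations and functions in $\whstr A$ may depend on $\str A$ in a way that does not refactor as you describe, and a Gaifman-preserving update to $\str A$ could change $\whstr A$ globally. The paper closes this gap by invoking the \emph{dynamic} quantifier elimination result, \cite[Theorem~6.3]{Dvorak2013testing}, which asserts directly that $\whstr A$ can be maintained in constant time under Gaifman-preserving updates. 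That is the black box you need here; cite it instead of Theorem~\ref{thm:qe}, and then your $u_R^\pm$ rewriting applies to the (now quantifier-free) $\wh\phi$ over a structure $\whstr A$ that is itself dynamically maintained.
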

Note that the static version of Theorem~\ref{thm:enum}, without the last part, is the result of Kazana and Segoufin~\cite{KazanaS13}.



\begin{proof}[\ifdoublecolumn{Proof }sketch for Theorem~\ref{thm:enum}]
    We sketch the proof in the static case, thus reproving the main result of~\cite{KazanaS13}. 
    
    By Theorem~\ref{thm:qe} it is enough to consider the case when $\phi(\tup x)$ is quantifier-free. 
Fix an input structure $\str A$ and an enumeration  $\tup x=\set{x_1,\ldots,x_k}$. 
 Let $\sem F$ be the free semiring generated by 
identifiers of the form $e_a^i$, where $i\in\set{1,\ldots,k}$ and $a\in\str A$.
Its elements are expressions such as 
$e_a^1 e_b^2 e_c^3+e_b^1 e_c^2 e_d^3.$
For each $i$ with $1\le i\le k$, define an $\sem F$-valued weighted function $w_i$ on $\str A$, where $w_i(a)=e_a^i\in \sem F$ for $a\in\str A$. 
Note that for $a_1,\ldots,a_k\in\str A$, the expression $w_1(a_1)\cdots w_k(a_k)$ evaluates in the structure $\str A(\tup w)$ 
to the  element $e_{a_1}^1\cdots e_{a_k}^k$ of $\sem F$, representing the tuple $\tup a\in\str A^{\tup x}$ with $\tup a[x_i]=a_i$.
Consider the $\Sigma(\tup w)$-expression:
\begin{align}\label{eq:rrr}
f=\sum_{\tup x}[\phi(\tup x)]\cdot w_1(x_1)\cdots w_k(x_k).
\end{align}
Then $f_{\str A(\tup w)}$  is  the element of $\sem F$  representing the set of tuples $\phi_\str A\subset \str A^{\tup x}$,
with one occurrence per each tuple.
Note that each weight $w_i(a)$ is a single element $e^i_a\in \sem F$, so can be trivially represented by a bi-directional iterator with constant access time.
Applying Theorem~\ref{thm:provenance}
yields a bi-directional iterator with constant access time, which iterates through all the 
summands of the element~$f_{\str A(\tup w)}$, corresponding to the tuples in $\phi_\str A$.

The dynamic case proceeds similarly, but instead of Theorem~\ref{thm:qe}, uses its dynamic version~\cite[Theorem 6.3]{Dvorak2013testing} to reduce to the case where $\phi(\tup x)$ is quantifier-free. We then maintain the representation of each relation symbol $R$ and its negation as a weight function with values in $\set{0,1}\subset \sem F$, and replace the expression $[\phi(\tup x)]$ in~\eqref{eq:rrr} by an equivalent expression involving those weight functions. Finally, we  utilize the dynamic data structure given by Theorem~\ref{thm:provenance} to maintain an enumerator for $f_\str A(\tup w)$. The details are in the appendix.
\end{proof}

Theorem~\ref{thm:enum} only allows updates which preserve the Gaifman graph of the input structure. This includes, in particular, updates which only modify unary predicates.
The following example\footnote{suggested by 
Zden\v ek Dvo\v r\'ak and also Felix Reidl, Michał Pilipczuk, Sebastian Siebertz} shows that already this 
    yields a useful data structure.
\begin{example}
    A \emph{local search algorithm} searches for an optimal solution -- e.g. largest independent set or smallest dominating set in a given graph -- by iteratively trying to improve the  the current solution
    \emph{locally}, i.e., by modifying in each round the current solution by adding/removing at most $\lambda$ vertices, for some fixed \emph{locality radius} $\lambda$. In case of independent set or dominating set, the current solution can be represented by a unary predicate, 
    and a fixed first-order formula (depending on $\lambda$) can determine the existence of a local improvement.
    Moreover, using enumeration, we can find any such improvement in constant time, and then use it to improve the current solution by performing a constant number of updates, in constant time. Hence, 
    for graphs from a fixed class of bounded expansion, 
    using the data structure provided by Theorem~\ref{thm:enum},    
    each round of the local search algorithm can be performed in constant time, and the local search algorithm will compute a local optimum in linear time.
    
    This observation can be combined with a recent result of Har-Peled and Quanrud~\cite{Har-PeledQ17}, stating that local search with sufficiently large radius $\lambda$ yields a $(1+\eps)$-approximation 
    for these problems on any fixed class of graphs with \emph{polynomial expansion}, where $\lambda$ depends only on $\eps$, $r$, and the class. The definition of a polynomial expansion class is obtained by requiring that the function $r\mapsto c_r$ in the definition of a bounded expansion class is polynomial. These classes include e.g. the class of planar graphs.
    This  yields efficient \emph{linear-time} approximation schemes (in particular, EPTASes) for the {\sc{Distance-$r$ Independent Set}} and {\sc{Distance-$r$ Dominating Set}} problems on any class of
    graphs with polynomial expansion. This improves the result of~\cite{Har-PeledQ17}, which only obtains a PTAS.
\end{example}


\section{Nested weighted query evaluation}
We now introduce \emph{nested weighted queries} which can handle multiple semirings.
They involve a slightly different syntax than the one used previously.

If $A$ is a set and $\sem S$ is a semiring, then an \emph{$\sem S$-relation} of arity $k$
on $A$ is a function $R\from A^k\to \sem S$.
We consider structures $\str A$ equipped with relations
with values in various semirings.
Classical relational structures are recovered by only allowing relations with values in the boolean semiring.
Formulas can be constructed using
semiring summation $\sum$
playing the role of quantification,
and semiring operations $+$ and $\cdot$ playing the role
of connectives. Additionally
we allow other connectives which can
 transfer between semirings.
This is
defined below.

\paragraph{Connectives.}
Let $\ops$ be a collection
of semirings and of
\emph{connectives} which are functions  $c\from \sem S_1\times\cdots\times\sem S_k  \to \sem S$, where $\sem S,\sem S_1,\ldots,\sem S_k\in \ops$ are semirings.
Examples of connectives include
 the
 function $<\from\sem N\times\sem N\to\sem B$,
 corresponding to the total order on $\sem N$, or the function $/\from\sem Q\times \sem Q\to \sem Q$, where $\sem Q$ is the field of rationals, which maps a pair $p,q$ to $\frac p q$ if $q\neq 0$ and to $0$ otherwise.
 Another example is the
 \emph{Iverson bracket}
 denoted $[\,\cdot\, ]_{\sem S}\from \sem B\to \sem S$, where $\sem S$ is a semiring, mapping $0\in \sem B$ to $0\in \sem S$ and $1\in\sem B$ to $1\in \sem S$.

\paragraph{Signatures, structures, and Gaifman graphs.}
A \emph{$\ops$-signature} $\Sigma$ is a family
of \emph{$\sem S$-relation symbols},
for each semiring $\sem S\in\ops$,
and function symbols,
each with a specified arity.
A \emph{$\Sigma$-structure} $\str A$ is a set $A$
together with a function $f_{\str A}\from A^k\to A$ for
each function symbol $f\in \Sigma$ of arity $k$,
and together with an $\sem S$-relation
$R_{\str A}\from A^k\to \sem S$ for each $\sem S$-relation symbol $R\in \Sigma$ of arity $k$.
If $\Sigma$ contains no function symbols, then the \emph{Gaifman graph} of a $\Sigma$-structure $\str A$  is the graph whose vertices are the elements of $\str A$,
and where two distinct elements $v,w$ are adjacent
if there is some $R\in\Sigma$ and some
tuple $\tup a\in R_\str A$ such that $R(\tup a)\neq 0$ and $\tup a$ contains $v$ and $w$. As usual, we may convert a structure with function symbols into a structure which only uses relation symbols,
by interpreting a function $f\from \str A^k\to \str A$ as a $\sem B$-relation $R\from \str A^{k+1}\to\sem B$ representing the graph of $f$. Via this conversion, we define the Gaifman graph of a structure with function symbols. A class of $\Sigma$-structures has bounded expansion if the class of its Gaifman graphs has this property.

\paragraph{The logic $\fops$}
Fix a set $\ops$ of semirings and connectives, and a $\ops$-signature $\Sigma$.
We implicitly assume that $\ops$ contains the boolean semiring $\sem B$ and 
that $\Sigma$ contains the binary equality symbol $=$ as a binary $\sem B$-relation symbol.

We define the syntax of the logic $\fops$.
Each formula $\phi$
has a specified \emph{output type},
which is a semiring $\sem S\in\ops$.
If the output type is $\sem S$, then we say that  $\phi$ is an \emph{$\sem S$-valued formula}.
 The set of $\sem S$-valued $\fops$-formulas 
is defined inductively, and consists of the following formulas:

\begin{itemize}[leftmargin=0.4cm]
    \item
    $R(t^1,\ldots,t^k)$, where  $R\in \Sigma$ is an $\sem S$-relation symbol of arity $k$
    and $t^1,\ldots,t^k$ are terms built out of
variables and function symbols from $\Sigma$.

    \item
    $s$, where  $s\in\sem S$ is treated as a constant.

    \item
    $c(\phi^1,\ldots,\phi^k)$, where  $\phi^1,\ldots,\phi^k$ are $\fops$-formulas with output types $\sem S_1,\ldots,\sem S_k$, respectively,
     and $c\from \sem S_1\times \cdots \times \sem S_k\to\sem S$ is a connective in $\ops$.

    \item
     $\sum_x \phi$, where $\phi$ is an $\sem S$-valued $\fops$-formula.

    \item
    $\phi_1+\phi_2$ and $\phi_1\cdot \phi_2$, where $\phi_1$ and $\phi_2$ are $\sem S$-valued $\fops$-formulas.
    
    \item
     $[\phi]_\sem S$, where $\phi$ is a $\sem B$-valued $\fops$-formula.

    \item
    $\neg \phi$, if $\sem S=\sem B$ and $\phi$ is a $\sem B$-valued $\fops$-formula.
\end{itemize}



For a $\sem B$-valued formula $\phi$ in $\fops$
we write
$\lor$ and $\land$ instead of $+$ and $\cdot$, respectively,
$\exists_x\phi$ instead of $\sum_x\phi$,  
and
 $\forall_x\phi$ as syntactic sugar for $\neg \exists_x\neg \phi$.
Note that
 first-order logic coincides with
 $\fops$ when $\ops$ contains only the boolean semiring $\sem B$.

 We write $\phi(\tup x)$ to indicate
 that $\phi$ has free variables \emph{contained} in $\tup x$.
 The semantics of an $\sem S$-valued formula $\phi(\tup x)$  in a given structure $\str A$ is a function $\phi_{\str A}\from\str A^{\tup x}\to \sem S$,
 and is defined inductively, as expected.

  \paragraph{Counting logics.}
  The papers \cite{Kuske:2017:FLC:3329995.3330068,DBLP:conf/pods/GroheS18}
  study a logic denoted $\textrm{FOC}(\mathbb P)$,
  where $\mathbb P$ is a set of \emph{numerical predicates}
  of the form $\mathsf P\from \Z^k\to\sem B$.
  This logic can be seen as the  fragment of our logic
  $\textrm{FO}[\mathbb P\cup\set{\sem B,\Z}]$,
 where
      summation in $\Z$ is restricted to \emph{counting terms} of the form
      $\#_{\tup x}\phi$, in our syntax corresponding to $\sum_{\tup x}[\phi]_{\Z}$. 
As observed by \GS~\cite{DBLP:conf/pods/GroheS18}, the problem of evaluation
 of $\textrm{FOC}(\mathbb P)$
on the class of trees is as hard as the problem of evaluation of
 first-order logic on arbitrary graphs. In particular,
 under common complexity-theoretic assumptions,
 there is no algorithm which decides whether a given sentence $\phi\in\textrm{FOC}(\mathbb P)$
 holds in a given tree $T$, whose running time
 is $\Oof(|T|^c)$, where $c\in\N$ is independent of $\phi$.
Due to this, \GS propose to study a restricted fragment of the logic $\textrm{FOC}(\mathbb P)$,
denoted $\textrm{FOC}_1(\mathbb P)$.
In this fragment, numerical predicates in $\mathbb P$ can be applied
 only if they yield a formula with at most one free variable.
So for example, $\mathsf P(\phi(x),\psi(y))$ is no longer allowed as a formula, but $\mathsf P(\phi(x),\psi(x))$ is.
\GS proved that model-checking of $\textrm{FOC}_1(\mathbb P)$ has an almost-linear time algorithm on nowhere dense classes.

 \paragraph{The restricted fragment $\foops$.}
 We introduce an analogous fragment of $\fops$,
in which the following restriction is
 imposed on how connectives $c\in\ops$ can be used. The syntax is the same as for $\fops$,
with the difference that the construct $c(\phi^1,\ldots,\phi^k)$ is replaced by 
$[R(x_1,\ldots,x_l)]_{\sem S}\cdot c(\phi^1,\ldots,\phi^k)$
where in the \emph{guard} $[R(x_1,\ldots,x_l)]_\sem S$, the symbol $R$ is a $\sem B$-relation symbol in $\Sigma$ and $\set{x_1,\ldots,x_l}$ contains all the free variables of  $\phi^1,\ldots,\phi^k$.
 For example, all formulas discussed in the introduction,
 as well as  $\sum_x\sum_y [E(x,y)]_\Q\cdot \textit{weight}(x)/\textit{weight}(y)$,
 are in $\foops$, whereas the formula
 $\sum_x\sum_y\textit{weight}(x)/\textit{weight}(y)$ is not (here $/\from \Q\times\Q\to \Q$ is a connective).

Note that $\foops$
still extends first-order logic,
and also FAQ queries
over one semiring (without product aggregates).
Also, if $\mathbb P$ is a set of numerical predicates,
then
$\mathrm{FO_G}[\mathbb P\cup\set{\sem B,\Z}]$ strictly extends the logic  $\textrm{FOC}_1(\mathbb P)$ studied in~\cite{DBLP:conf/pods/GroheS18}, as it allows unrestricted
summation over $\Z$, rather than counting.

The main result of this paper is the following:
\begin{theorem}[\appmark]
  \label{thm:foc}
  Let $\ops$ be a collection of semirings and connectives, in which all operations take constant time,  let $\Sigma$ be a $\ops$-signature,
   let $\CCC$ be a class of $\Sigma$-structures of bounded expansion, and let $\phi(\tup x)\in \foops$ be a $\Sigma$-formula.

  There is an algorithm which, given a structure $\str A\in\CCC$,
  computes a data structure allowing to query
the value $\phi_\str A(\tup a)$,
   given a tuple $\tup a\in\str A^\tup x$.
  The algorithm runs in time $\Oof_\phi(|\str A|\log |\str A|)$ in general, and in time $\Oof_\phi(|\str A|)$ if all the semirings in $\ops$ are either rings or finite. The query time is $\Oof_\phi(\log|\str A|)$ in general, and $\Oof_\phi(1)$ if the output semiring of $\phi$ is either a ring or is finite.

  Moreover,  $\phi(\tup x)$ has output in  the boolean semiring,
  then the data structure also provides a constant-delay enumerator for~$\phi_\str A$.
\end{theorem}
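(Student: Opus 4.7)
The plan is to induct on the nesting depth of guarded connectives in $\phi$, reducing the multi-semiring problem to the single-semiring evaluation handled by \Cref{thm:algo-main} (with \Cref{thm:enum} for the boolean-enumeration case). The key structural feature of $\foops$ is that every cross-semiring connective occurs in a guarded form $[R(x_1,\ldots,x_l)]_\sem S\cdot c(\phi^1,\ldots,\phi^k)$, where $R\in\Sigma$ is a $\sem B$-relation symbol whose free variables contain those of each $\phi^i$. Consequently, such a subexpression can take a nonzero value only at tuples in $R_\str A$, and $|R_\str A|\in\Oof_\CCC(|\str A|)$ holds by bounded expansion.

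I would process the guarded connectives from innermost to outermost. Given an innermost subformula $\psi=[R]\cdot c(\phi^1,\ldots,\phi^k)$, the inductive hypothesis supplies, for each $\phi^i$ of output type $\sem S_i$, a query data structure. I would then materialize a new weight symbol $v$ of arity $l$ with values in the output semiring $\sem S$ of $c$, defined by
\[
v_\str A(\tup a)\eqdef c\bigl(\phi^1_\str A(\tup a[\tup y^1]),\ldots,\phi^k_\str A(\tup a[\tup y^k])\bigr)
\]
for $\tup a\in R_\str A$, and $v_\str A(\tup a)\eqdef 0$ otherwise. Computing $v_\str A$ amounts to one inner query per tuple in $R_\str A$; this costs $\Oof_\phi(|\str A|)$ if the inner query times are constant, and $\Oof_\phi(|\str A|\log|\str A|)$ in general. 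Crucially, since $v_\str A$ vanishes outside $R_\str A$, enriching the structure with $v$ preserves its Gaifman graph, hence keeps the enriched class in bounded expansion. Replacing $\psi$ by the atomic weighted subexpression $v(x_1,\ldots,x_l)$ in the parent formula strictly decreases the guard-nesting depth.

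After iteratively eliminating every guarded connective this way, $\phi$ is reduced to a $\Sigma'(\tup w')$-expression over the single semiring $\sem S$ that is the output type of $\phi$, evaluated on an enriched structure still in a bounded expansion class. Applying \Cref{thm:algo-main} to this reduced expression yields the claimed evaluation bounds: the $\log$ factor in the query time collapses when $\sem S$ is a ring or finite, and similarly the $\log$ factor in the preprocessing time collapses once all semirings appearing in the induction are rings or finite, so that every intermediate materialization of some $v_\str A$ runs in linear time. For the enumeration assertion, when $\phi$ is $\sem B$-valued the reduced expression is a boolean-valued quantifier-free weighted expression on the enriched $\Sigma'$-structure; equivalently it is a first-order $\Sigma'$-formula over a bounded expansion class, so \Cref{thm:enum} supplies the constant-delay, bi-directional enumerator for $\phi_\str A$.

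The main obstacle will be to check that the complexity behaves additively across the constantly many induction levels, and in particular that each enrichment truly stays within a class of bounded expansion. The first point is handled by the observation that the nesting depth of guards in $\phi$ is a fixed constant, so the costs compose into the stated $\Oof_\phi(\cdot)$ bounds; the second follows because each newly introduced weight function is supported on an existing relation of the structure, leaving the Gaifman graph unchanged. Once these two invariants are in place the whole proof reduces to a clean bottom-up traversal of the parse tree of $\phi$, with \Cref{thm:algo-main} and \Cref{thm:enum} discharging the remaining single-semiring tasks.
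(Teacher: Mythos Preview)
Your proposal is correct and follows essentially the same approach as the paper: both proceed by induction on the guarded connectives, materializing each guarded subexpression $[R(x_1,\ldots,x_l)]_\sem S\cdot c(\phi^1,\ldots,\phi^k)$ as a new $\sem S$-valued weight supported on $R_\str A$ (hence preserving the Gaifman graph), and then invoking \Cref{thm:algo-main} (or \Cref{thm:enum} in the boolean case) once all connectives are eliminated. The only cosmetic difference is that you describe the elimination bottom-up (innermost first) while the paper picks an outermost guard and recurses into its arguments; also, your phrase ``quantifier-free'' in the boolean case is a slip---the reduced formula is first-order but may contain $\exists$---though this is harmless since \Cref{thm:enum} applies to arbitrary first-order formulas.
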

\begin{proof}[\ifdoublecolumn{Proof }sketch]
To prove the theorem, we proceed by induction on the size of the formula $\phi(\tup x)\in\foops$. In the inductive step,
replace every guarded connective $[R(x_1,\ldots,x_l)]\cdot c(\phi^1,\ldots,\phi^k)$ occurring in $\phi$ at the top-most level (i.e. not within the scope of another connective) by a new weight symbol $r(x_1,\ldots,x_l)$. This yields a formula with no connectives, which can be viewed as a weighted expression $f$
in the sense of Section~\ref{sec:circuits}, or a boolean formula if $\phi$ is $\sem B$-valued.
Compute the new weights 
$r_\str A(x_1,\ldots,x_l)=[R_\str A(x_1,\ldots,x_l)]\cdot c(\phi_\str A^1,\ldots,\phi^k_\str A)$ in $\str A$ using the inductive assumption applied to $\phi^1,\ldots,\phi^k$,
and then evaluate $f$ with the newly computed weights, using Theorem~\ref{thm:algo-main}, or Theorem~\ref{thm:qe} if $\phi$ is $\sem B$-valued.
For the last part, if $\phi(\tup x)$ has output in the boolean semiring,
we apply Theorem~\ref{thm:enum} to the boolean formula $f(\tup x)$ considered above, yielding a constant-delay enumerator for~$\phi_\str A$.
\end{proof}

\section{Conclusion}

We presented an algebraic framework which allows to derive some existing and some new results 
concerning the evaluation, enumeration, and maintenance of query answers on sparse databases.
The main advantage of our framework is that it allows to achieve new, efficient algorithms, 
by simple algebraic considerations about the permanent in various semirings. 
Proving those results directly would be considerably more involved. In particular, Theorem~\ref{thm:foc} provides an efficient algorithm for enumerating answers to queries from a complex query language involving aggregates from multiple semirings. Our results partially resolve 
two questions posed in~\cite[Section 9]{DBLP:conf/pods/GroheS18}: we can answer aggregate queries with the usual SQL aggregates, and we can enumerate the answers to such queries, for classes of bounded expansion.

In this paper, we only study the basic applications of the main technical result, Theorem~\ref{thm:circuits}.
We believe that following these principles will allow to extend the  results in further directions,
perhaps by considering more intricate semirings and properties of their permanents. 
For example,  extending Theorem~\ref{thm:foc} to handle updates will require analyzing the interaction between the allowed connectives and the semiring operations. 
Furthermore, in our dynamic enumeration algorithm, Theorem~\ref{thm:enum}, we only handle updates which preserve the Gaifman graph. Arbitrary updates could be handled if we could have a dynamic algorithm for maintaining a low treedepth decomposition of the Gaifman graph. 
Such an algorithm would immediately allow to lift Theorem~\ref{thm:enum} to arbitrary updates which leave the structure in the considered class of databases $\CCC$. 
For some graph classes $\CCC$, such as classes of bounded degree or bounded treewidth, such an algorithm can be  trivially obtained. This allows, e.g., to derive the result of~\cite{DBLP:journals/tods/BerkholzKS18} (without mod quantifiers) from Theorem~\ref{thm:enum}. 

Another question which arises is about generalizing Theorem~\ref{thm:circuits}
to other graph classes and logics. 
In particular -- to nowhere dense graph classes, which are more general than bounded expansion classes,
and to classes of bounded treewidth, which are less general, but come with the more powerful MSO logic.
For nowhere dense classes, no analogue of the quantifier elimination result, Theorem~\ref{thm:qe} is known. Our main technical result, Theorem~\ref{thm:circuits}, is closely related to quantifier-elimination. Indeed, it is not difficult to derive Theorem~\ref{thm:qe} from the quantifier-free case of Theorem~\ref{thm:circuits} applied in the boolean semiring (where summation still allows introducing existential quantifiers).
Hence, generalizing Theorem~\ref{thm:circuits} to nowhere dense classes is closely connected to obtaining a quantifier elimination procedure there.



 
 

\bibliographystyle{abbrv}
\bibliography{ref}

\begin{thebibliography}{10}

\bibitem{DBLP:conf/icalp/AmarilliBJM17}
A.~Amarilli, P.~Bourhis, L.~Jachiet, and S.~Mengel.
\newblock A circuit-based approach to efficient enumeration.
\newblock In {\em {ICALP} 2017}, pages 111:1--111:15, 2017.

\bibitem{DBLP:conf/icdt/AmarilliBM18}
A.~Amarilli, P.~Bourhis, and S.~Mengel.
\newblock Enumeration on trees under relabelings.
\newblock In {\em {ICDT} 2018}, pages 5:1--5:18, 2018.

\bibitem{DBLP:journals/tods/BerkholzKS18}
C.~Berkholz, J.~Keppeler, and N.~Schweikardt.
\newblock Answering {FO+MOD} queries under updates on bounded degree databases.
\newblock {\em {ACM} Trans. Database Syst.}, 43(2):7:1--7:32, 2018.

\bibitem{Bjorklund:2010:EPR:1840004.1840168}
A.~Bj\"{o}rklund, T.~Husfeldt, P.~Kaski, and M.~Koivisto.
\newblock Evaluation of permanents in rings and semirings.
\newblock {\em Inf. Process. Lett.}, 110(20):867--870, Sept. 2010.

\bibitem{DBLP:conf/ijcai/Darwiche99}
A.~Darwiche.
\newblock Compiling knowledge into decomposable negation normal form.
\newblock In {\em {IJCAI} 99}, pages 284--289, 1999.

\bibitem{DBLP:conf/focs/DvorakKT10}
Z.~Dvor{\'{a}}k, D.~Kr{\'{a}}l, and R.~Thomas.
\newblock Deciding first-order properties for sparse graphs.
\newblock In {\em {FOCS 2010}}, pages 133--142, 2010.

\bibitem{Dvorak2013testing}
Z.~Dvo{\v{r}}{\'a}k, D.~Kr{\'a}l, and R.~Thomas.
\newblock Testing first-order properties for subclasses of sparse graphs.
\newblock {\em Journal of the ACM (JACM)}, 60(5):36, 2013.

\bibitem{lsd}
J.~Gajarsk{\'{y}}, S.~Kreutzer, J.~Ne\v{s}et\v{r}il, P.~O. de~Mendez,
  M.~Pilipczuk, S.~Siebertz, and S.~Toru{\'n}czyk.
\newblock First-order interpretations of bounded expansion classes.
\newblock In {\em {ICALP} 2018}, pages 126:1--126:14, 2018.

\bibitem{provenance}
T.~J. Green, G.~Karvounarakis, and V.~Tannen.
\newblock Provenance semirings.
\newblock In {\em PODS 2007}, pages 31--40, 2007.

\bibitem{grohe2011methods}
M.~Grohe and S.~Kreutzer.
\newblock Methods for algorithmic meta theorems.
\newblock {\em Model Theoretic Methods in Finite Combinatorics}, 558:181--206,
  2011.

\bibitem{gks}
M.~Grohe, S.~Kreutzer, and S.~Siebertz.
\newblock Deciding first-order properties of nowhere dense graphs.
\newblock In {\em STOC 2014}, pages 89--98. ACM, 2014.

\bibitem{DBLP:conf/pods/GroheS18}
M.~Grohe and N.~Schweikardt.
\newblock First-order query evaluation with cardinality conditions.
\newblock In {\em PODS 2018}, pages 253--266, 2018.

\bibitem{Har-PeledQ17}
S.~Har{-}Peled and K.~Quanrud.
\newblock Approximation algorithms for polynomial-expansion and low-density
  graphs.
\newblock {\em {SIAM} J. Comput.}, 46(6):1712--1744, 2017.

\bibitem{KazanaS13}
W.~Kazana and L.~Segoufin.
\newblock Enumeration of first-order queries on classes of structures with
  bounded expansion.
\newblock In {\em PODS 2013}, pages 297--308, 2013.

\bibitem{Kuske:2017:FLC:3329995.3330068}
D.~Kuske and N.~Schweikardt.
\newblock First-order logic with counting: At least, weak hanf normal forms
  always exist and can be computed!
\newblock In {\em LICS 2017}, pages 73:1--73:12, 2017.

\bibitem{nevsetvril2008grad}
J.~Ne{\v{s}}et{\v{r}}il and P.~Ossona~de Mendez.
\newblock Grad and classes with bounded expansion {I}. {D}ecompositions.
\newblock {\em European Journal of Combinatorics}, 29(3):760--776, 2008.

\bibitem{nevsetvril2008gradb}
J.~Ne{\v{s}}et{\v{r}}il and P.~Ossona~de Mendez.
\newblock Grad and classes with bounded expansion {II}. {A}lgorithmic aspects.
\newblock {\em European Journal of Combinatorics}, 29(3):777--791, 2008.

\bibitem{sparsity}
J.~Ne\v{s}et\v{r}il and P.~{Ossona de Mendez}.
\newblock {\em Sparsity --- {G}raphs, {S}tructures, and {A}lgorithms},
  volume~28 of {\em Algorithms and combinatorics}.
\newblock Springer, 2012.

\bibitem{Olteanu:2012:FRQ:2274576.2274607}
D.~Olteanu and J.~Z\'{a}vodn\'{y}.
\newblock Factorised representations of query results: Size bounds and
  readability.
\newblock In {\em ICDT 2012}, pages 285--298, 2012.

\bibitem{DBLP:conf/pods/SchweikardtSV18}
N.~Schweikardt, L.~Segoufin, and A.~Vigny.
\newblock Enumeration for {FO} queries over nowhere dense graphs.
\newblock In {\em PODS 2018}, pages 151--163, 2018.

\end{thebibliography}
\cleardoublepage

\appendix

\section{Proof Theorem~\ref{thm:circuits}}
\label{app:main-proof}
In this section we prove Theorem~\ref{thm:circuits}. 
To this end, we need to describe, for each class $\CCC$ of bounded expansion and closed quantifier-free $\Sigma(\tup w)$-expression $f$, a construction of a class of circuits with bounded expansion. This will be achieved by a composition of a series of steps, in each step considering more complicated classes $\CCC$.

\subsection{Proof outline}
To prove Theorem~\ref{thm:circuits}, we will 
show a series of reductions depicted in Figure~\ref{fig:red}, and prove that the statement of the theorem holds when $\CCC$ is a class of labelled forests of bounded depth. 
The notion of reduction used here is made formal below. 
The reductions depicted in the figure will be shown in subsequent sections.

A \emph{binary relational structure} is a  structure over a signature consisting of unary and binary relation symbols.

 \newcommand{\reduce}{$\rightarrow$}
\newcommand{\prop}[1]{\begin{tabular}{c}{#1}\end{tabular}}

\begin{figure*}[h!]
	\centering
	\ifsinglecolumn{\footnotesize}
	\begin{tabular}{ccccccc}
	\begin{tabular}{c}bounded\\expansion\end{tabular}&\reduce& 
	\begin{tabular}{c}bounded expansion\\binary relational\end{tabular}&$\Rightarrow$&
	\begin{tabular}{c}bounded treedepth\\binary relational\end{tabular}&\reduce&
	\begin{tabular}{c}labelled forests\\of bounded depth\end{tabular}
	\end{tabular}
	\caption{Outline of the proof of Theorem~\ref{thm:circuits}. 
	An arrow $X\rightarrow Y$ signifies that every class of structures  with property $X$ reduces to some class with property $Y$. $X\Rightarrow Y$ is a different kind of reduction, using the fact that classes of bounded expansion admit low-treedepth colorings.}
	\label{fig:red}
\end{figure*}

\paragraph{Reductions.}

We say that a class $\CCC$ \emph{reduces} to $\DDD$ if
for every atomic formula 
$\alpha(\tup x)$ of the form
$f(x_1,\ldots,x_k)=y$ or $R(x_1,\ldots,x_k)$,
where $x_1,\ldots,x_k,y$ are variables in $\tup x$ and $f$ and $R$ are function/relation symbols 
in the signature of $\CCC$, there is a quantifier-free formula $\alpha'(\tup x)$ in the signature of $\DDD$, and
there is a linear algorithm which inputs a  structure $\str A\in \CCC$ and computes a structure $\str A'\in \DDD$, such that
 $\alpha_{\str A}=\alpha'_{\str A'}$ for all $\alpha$ as above.

 \begin{lemma}\label{lem:reduction}
	If $\CCC$ reduces to $\DDD$ and the statement of Theorem~\ref{thm:circuits} holds for $\DDD$, then it holds for $\CCC$, where in both cases, we assume $f$ is quantifier-free and the arities of all weight functions in $\tup w$ to be~$1$.
\end{lemma}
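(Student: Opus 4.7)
\textbf{Proof plan for Lemma~\ref{lem:reduction}.}
The plan is to transport the circuit family along the reduction. Fix a quantifier-free closed expression $f$ in the signature $\Sigma$ of $\CCC$, with unary weight symbols $\tup w$. The reduction from $\CCC$ to $\DDD$ supplies, for every atomic $\Sigma$-formula $\alpha(\tup x)$, a quantifier-free formula $\alpha'(\tup x)$ in the signature of $\DDD$. I would define $f'$ to be the expression obtained by textually substituting, inside every bracket subterm $[\alpha]$ occurring in $f$, every atomic subformula $\alpha_0$ of $\alpha$ by its $\DDD$-translation $\alpha_0'$. The weight symbols $\tup w$ are carried over unchanged. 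Since $f$ is quantifier-free, so is $f'$, and $f'$ is a closed $\Sigma'(\tup w)$-expression where $\Sigma'$ is the signature of $\DDD$.

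Next I would apply the assumed instance of Theorem~\ref{thm:circuits} to the class $\DDD$ and the expression $f'$, obtaining a linear-time computable family $(C'_{\str B})_{\str B\in\DDD}$ of $\Sigma'(\tup w)$-circuits with permanent gates which has bounded depth, bounded fan-out and bounded number of rows in permanent gates. Given $\str A\in \CCC$, use the linear-time reduction algorithm to produce $\str A'\in \DDD$ (which, by definition of the reduction, shares the universe of $\str A$), and then output $C_\str A := C'_{\str A'}$. The overall procedure runs in linear time in $|\str A|$, since both the reduction and the construction of $C'_{\str A'}$ from $\str A'$ are linear and $|\str A'|=\Oof_{\CCC,\DDD}(|\str A|)$. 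The circuit $C_\str A$ inherits the bounded depth, fan-out, and row-count from $C'_{\str A'}$. Because weights are unary, the input gates of $C'_{\str A'}$ are precisely the pairs $(w,a)$ with $w\in\tup w$ and $a$ in the shared universe, so any valuation $\tup w_\str A$ yields exactly the same valuation $\tup w_{\str A'}$ on the input gates of $C_\str A$.

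It then remains to verify the semantic equality $C_\str A(\tup w_\str A)=f_{\str A(\tup w_\str A)}$. By construction of $C'$, we have $C_\str A(\tup w_\str A)=C'_{\str A'}(\tup w_{\str A'})=f'_{\str A'(\tup w_{\str A'})}$. What remains is the identity $f_{\str A(\tup w_\str A)}=f'_{\str A'(\tup w_{\str A'})}$, which I would prove by straightforward induction on the structure of $f$. The cases of constants, weight applications $w(x)$, and the semiring operations $+,\cdot,\sum$ are immediate since weights and universes are shared. The only nontrivial case is $[\alpha]$ for a quantifier-free $\Sigma$-formula $\alpha$, which reduces further by induction on $\alpha$ to the atomic case: there, $\alpha_\str A=\alpha'_{\str A'}$ is exactly the guarantee the reduction provides, so $[\alpha]_{\str A(\tup w_\str A)}(\tup a)=[\alpha']_{\str A'(\tup w_{\str A'})}(\tup a)$ for every $\tup a$.

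The main (and essentially only) obstacle is ensuring that the translation commutes with the semantics. This hinges on two points: that the reduction preserves the universe (so tuples of elements and unary weight functions transfer between $\str A$ and $\str A'$ without any renaming), and that it is enough to translate only atomic formulas because $f$ is quantifier-free and the Boolean connectives and semiring operations are interpreted uniformly. Both are explicit hypotheses of the lemma, so once the substitution is defined carefully and the induction is carried out, the conclusion follows.
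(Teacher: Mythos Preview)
Your approach is essentially the paper's own: translate $f$ to $f'$ by replacing atoms via the reduction, pull back the circuit family for $f'$ along $\str A\mapsto\str A'$, and verify the semantic equality by induction. The overall structure is fine.

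There is one genuine gap. The reduction, as defined, only supplies a translation $\alpha'$ for atomic formulas of the shape $R(x_1,\ldots,x_k)$ or $f(x_1,\ldots,x_k)=y$ where the $x_i,y$ are \emph{variables}. But a quantifier-free $\Sigma(\tup w)$-expression may contain atoms applied to compound terms, e.g.\ $[R(g(x),y)]$, and also weight applications like $w(g(x))$; for these the reduction gives you nothing, and your textual substitution is undefined. (Likewise, in your inductive case for weights you wrote ``$w(x)$'', silently assuming the argument is a variable.) The paper handles this by first invoking Lemma~\ref{lem:simplify}, which rewrites $f$ into an equivalent \emph{simple} expression in which every bracket holds a literal over variables only and every weight is applied to a single variable; after that your substitution goes through verbatim. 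Adding this preprocessing step closes the gap.
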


Before proving Lemma~\ref{lem:reduction}, we observe  that it is enough to assume in the statement of Theorem~\ref{thm:circuits} that the expression $f$ is \emph{simple}, i.e., 
\begin{itemize}
	\item for every expression $w(t)$ occurring in $f$, where $w\in \tup w$ and $t$ is a term, $t$ is a single variable,
	\item for every expression $[\alpha]$ occurring in $f$, $\alpha$ is a literal of the form $R(x_1,\ldots,x_k)$ 
	or $\neg R(x_1,\ldots,x_k)$ 
	for some relation symbol 
	$R\in\Sigma\cup\set=$ 
	 or $f(x_1,x_2,\ldots,x_k)=y$ for some function symbol $f$, where $x_1,\ldots,x_k,y$ are variables.
\end{itemize}

\begin{lemma}\label{lem:simplify}
	Every  $\Sigma(\tup w)$-expression $f$
	is equivalent to a simple expression $f'$.
\end{lemma}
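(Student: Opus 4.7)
The plan is to proceed by structural induction on $f$. Of the five constructors for $\Sigma(\tup w)$-expressions, three --- $f_1+f_2$, $f_1\cdot f_2$, and $\sum_x f$, along with the base case of semiring constants --- trivially preserve simplicity, so only the two remaining constructs require work: weight expressions $w(t^1,\ldots,t^r)$ where some $t^i$ is compound, and bracket expressions $[\alpha]$ where $\alpha$ is not a permitted literal. I would handle these independently and then invoke induction for the compound constructors.

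For a weight expression $w(t^1,\ldots,t^r)$ I would flatten each argument by introducing fresh variables and using semiring summation to enforce equality. Concretely, I would rewrite
\[
w(t^1,\ldots,t^r) \;=\; \sum_{y_1,\ldots,y_r} [t^1 = y_1]\cdots[t^r = y_r]\cdot w(y_1,\ldots,y_r),
\]
which is correct because each term $t^i$ denotes a (partial) function, so for every assignment to the free variables the sum contains at most one nonzero summand. If some $t^i$ is itself of the form $c(s_1,\ldots,s_k)$, I would further expand $[c(s_1,\ldots,s_k) = y]$ as $\sum_{z_1,\ldots,z_k}[c(z_1,\ldots,z_k) = y]\cdot \prod_j [s_j = z_j]$; iterating this reduces every equality down to atoms of the form $[g(x_1,\ldots,x_l) = y]$ with purely variable arguments, which is exactly what the definition of simple permits.

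For a bracket expression $[\alpha]$ I would list the atomic subformulas $\beta_1,\ldots,\beta_m$ of $\alpha$ and pass through its complete types over these atoms. For each $\sigma \in \{0,1\}^m$ let $T_\sigma := \bigwedge_{i=1}^m \beta_i^{\sigma(i)}$ (with $\beta^1:=\beta$, $\beta^0:=\neg\beta$); these conjunctions are pairwise mutually exclusive, so $\alpha$ is logically equivalent to the mutually exclusive disjunction $\bigvee_{\sigma\models\alpha}T_\sigma$, giving the identity
\[
[\alpha] \;=\; \sum_{\sigma \models \alpha}\ \prod_{i=1}^m [\beta_i^{\sigma(i)}],
\]
which uses only the semiring-valid rules $[\gamma\land\delta]=[\gamma]\cdot[\delta]$ and $[\gamma_1\lor\cdots\lor\gamma_n]=\sum_i[\gamma_i]$ for mutually exclusive $\gamma_i$ (no subtraction, so everything works over an arbitrary commutative semiring). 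It then remains to bring each $[\beta_i^{\sigma(i)}]$ into simple form: relational atoms $R(t^1,\ldots,t^k)$ (and their negations) whose $t^j$ are not variables are flattened by the same fresh-variable trick used for weights above.

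The one place I expect real friction is in handling $[\neg(f(x_1,\ldots,x_k) = y)]$, since the list of permitted literals contains $f(\cdots)=y$ but not its negation. I would dispose of it by writing
\[
[\neg(f(x_1,\ldots,x_k)=y)] \;=\; \sum_z [f(x_1,\ldots,x_k) = z]\cdot [z\neq y],
\]
using totality of $f$ so that exactly one term survives, and noting that $[z\neq y]$ is a permitted literal ($=$ is in $\Sigma\cup\{=\}$ and negation of $=$ is allowed). If $\Sigma$ contains partial functions the same rewriting still works, with the understanding that $\neg(f(x_1,\ldots,x_k)=y)$ is true whenever $f$ is undefined at $(x_1,\ldots,x_k)$; a small bookkeeping literal $[f(\tup x)\text{ defined}]$, expressible as $\sum_z[f(\tup x)=z]$, captures the remaining case. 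Apart from this point the argument is a routine bottom-up rewrite, and the resulting simple expression is equivalent to $f$ in every $\Sigma(\tup w)$-structure over every semiring.
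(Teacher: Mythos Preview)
Your proposal is correct and follows essentially the same route as the paper: flatten weight arguments and relational atoms by introducing fresh variables with equality guards, iteratively unfold compound terms inside $[t=x]$, and reduce arbitrary $[\alpha]$ to products of bracketed literals. The only stylistic difference is that the paper eliminates $\land,\lor$ by the local rewrites $[\alpha\land\beta]=[\alpha][\beta]$ and $[\alpha\lor\beta]=[\alpha][\beta]+[\alpha][\neg\beta]+[\neg\alpha][\beta]$, whereas you go straight to the complete-type decomposition; these are equivalent.

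Your worry about $[\neg(f(x_1,\ldots,x_k)=y)]$ is legitimate and your fix is correct, but note that the issue can also be sidestepped entirely: if you treat $=$ uniformly as a binary relation symbol and apply your flattening step to the atom $t_1=t_2$ (and its negation) just as you do for $R(t_1,\ldots,t_k)$, then negated function-equalities never appear --- the iterative unfolding of $[t=x]$ only ever produces \emph{positive} atoms $f(z_1,\ldots,z_k)=x$. This is how the paper avoids the case. Your remark on partial functions is unnecessary here since functions in this paper are total.
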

\begin{proof}
We apply a series of simplifications to $f$.

In the first step, we remove disjunctions and conjunctions from expressions $[\alpha]$ occurring in $f$. To this end, note that $[\alpha\land \beta]$ is equivalent to $[\alpha]\cdot [\beta]$,
whereas $[\alpha\lor \beta]$ is equivalent to 
$[\alpha][\beta]+[\alpha][\neg\beta]+[\neg\alpha][\beta]$. 
Hence, we may assume that every expression occurring in $\alpha$ is a literal, i.e. an atom $R(t_1,\ldots,t_k)$ or its negation, where $R\in \Sigma\cup\set{=}$, and $t_1,\ldots,t_k$ are terms. 

Next, observe that the expression $[R(t_1,\ldots,t_k)]$ is equivalent to the expression 
$$\sum_{x_1,\ldots,x_k}[t_1=x_1]\cdots [t_k=x_k][R(x_1,\ldots,x_k)].$$ The same holds if $R$ is replaced by $\neg R$. 
Further, each expression $[t=x]$ as above can be iteratively simplified, if $t$ has nesting depth larger than $1$, as follows:
if $t=f(t_1,\ldots,t_k)$ then $[t=x]$ is equivalent to $\sum_{y_1,\ldots,y_k}[y_1=t_1]\cdots [y_k=t_k][f(y)=x]$. Note that $t_1,\ldots,t_k$ have smaller depth than $t$.

After performing all these simplifications, we arrive at an expression $f$ where all boolean formulas $\alpha$ occurring in $f$ are of the form $R(x_1,\ldots,x_k)$, $\neg R(x_1,\ldots,x_k)$, or  $f(x_1,\ldots,x_k)=y$.

Finally, each expression $w(t)$, where $w\in \tup w$ and $t$ is a term, can be replaced by $\sum_x [t=x]\cdot w(x)$,
and then $[t=x]$ can be simplified as described above.
\end{proof}

\begin{proof}[\ifdoublecolumn{Proof }of Lemma~\ref{lem:reduction}]
	Assume that the statement of Theorem~\ref{thm:circuits} (for $f$ quantifier-free) holds for $\DDD$ and that $\CCC$ reduces to $\DDD$.
 We prove that the statement of Theorem~\ref{thm:circuits} holds for $\CCC$ and $f$ quantifier-free.

 Let $\Sigma$ be the signature of $\CCC$ and let $f$ be a closed $\Sigma(\tup w)$-expression. By Lemma~\ref{lem:simplify} we may assume that $f$ is simple. 
 Define a $\Sigma'(\tup w)$-expression $f'$ by replacing in $f$ each atomic boolean formula $\alpha$
 by the equivalent $\Sigma'$-formula $ \alpha'$,
 where $\alpha'$ is as in the definition of a reduction. Then $f'$ is quantifier-free and  $f_{\str A}=f'_{\str A'}$, where $\str A'$ is the output of the algorithm as in the definition of a reduction.
 
 Let $(D_{\str A'})_{\str A'\in \DDD}$ be a linear-time computable family of circuits evaluating $f'$ for the class $\DDD$.
 Given a structure $\str A\in\CCC$, define the circuit $C_\str A$ as $D_\str A'$,
 where $\str A'$ is computed by the algorithm from the reduction. 
Then $(C_\str A)_{\str A\in \CCC}$ is a linear-time computable family of circuits evaluating $f$ for the class $\CCC$.
\end{proof}





\subsection{Forests of bounded depth}
In this section, we prove a special case of Theorem~\ref{thm:circuits}, in the case when $\CCC$ is a  class of labelled, rooted forests of bounded depth. We treat such a forest as a structure over 
a signature consisting of unary relation symbols for representing the labels,  a unary relation symbol $\rt$ which is interpreted in a rooted forest  $F$ as the set of roots, and one unary function symbol $\parent$, which is interpreted in a forest $F$ as the function mapping every non-root node to its parent, and the roots to themselves.

\begin{lemma}\label{lem:circuits-trees}
	The statement of Theorem~\ref{thm:circuits} holds when $\CCC$ is a class of labeled forests of bounded depth, and $f$ is quantifier-free.
\end{lemma}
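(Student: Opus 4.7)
The plan is to first normalize $f$ and then give a single bottom-up construction driven by the ancestor structure in $F$. By Lemma~\ref{lem:simplify} I may assume $f$ is simple, so every atom inside an Iverson bracket is one of $R(x)$, $\parent(x)=y$, or $x=y$. By distributing addition over a case split on the complete quantifier-free type of $\tup x$ (finitely many, since unary labels and the relations $=$, $\parent$ on pairs of variables admit only boundedly many joint choices), it suffices to treat
\[
f \;=\; \sum_{\tup x}[\tau(\tup x)]\cdot \prod_i w_i(x_i)
\]
for a single realizable quantifier-free type $\tau$.

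For a forest $F\in\CCC$ of depth at most $d$ and a tuple $\tup a\in F^{\tup x}$, let $\sigma(\tup a)$ denote the \emph{extended type}: the isomorphism type of the labeled rooted subforest of $F$ on the set $\tup a \cup \set{\text{ancestors in }F\text{ of each }a_i}$, together with the (not necessarily injective) map $\tup x \to \sigma(\tup a)$ sending $x_i \mapsto a_i$. Since $|\sigma(\tup a)|\le |\tup x|(d+1)$ and labels range over a finite set, the set of possible extended types has size $\Oof_{\CCC,f}(1)$. Call $\sigma$ \emph{admissible} if every $\tup a$ realizing $\sigma$ satisfies $\tau$ (this depends only on $\sigma$). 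Then
\[
f_F(\tup w) \;=\; \sum_{\sigma\text{ admissible}} N_F(\sigma),\qquad N_F(\sigma) = \!\!\!\!\sum_{\tup a\,:\, \sigma(\tup a)=\sigma}\prod_i w_i(a_i).
\]

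The structural observation is that because $\sigma(\tup a)$ contains every ancestor up to the root of every tree it meets, its connected components are exactly the intersections of $\tup a\cup\set{\text{ancestors}}$ with distinct trees of $F$; in particular, the components of $\sigma$ must be embedded into pairwise distinct trees of $F$, with the top of each component sent to the root of its tree. Writing $\sigma$'s components as $C_1,\dots,C_m$ with $m\le|\tup x|$,
\[
N_F(\sigma)=\!\!\!\sum_{\substack{r_1,\dots,r_m\in\troot(F)\\ \text{distinct}}}\!\!\prod_{j=1}^m N(C_j,r_j),
\]
where $N(C,r)$ is the weighted number of label-preserving embeddings of the pattern $C$ into $F$ that send $\troot(C)$ to $r$. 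This is exactly a permanent gate with $m$ rows. The values $N(C,r)$ for each pattern type $C$ arising (boundedly many) and each $r\in F$ are produced by a single bottom-up pass using
\[
N(C,r) \;=\; \lambda(C,r)\cdot\!\!\!\!\sum_{\substack{c_1,\ldots,c_\ell\\ \text{distinct children of } r\text{ in }F}}\!\!\prod_{t=1}^\ell N(C^{(t)},c_t),
\]
where $C^{(1)},\ldots,C^{(\ell)}$ are the subpatterns at the children of $\troot(C)$ and $\lambda(C,r)$ packages the label check at $r$ with the product $\prod_{x_i\mapsto\troot(C)} w_i(r)$ of weights contributed by the variables identified with $\troot(C)$. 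This is again a permanent gate with $\ell\le|\tup x|$ rows.

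The main obstacle is the distinctness argument together with the bookkeeping in $\lambda(C,r)$: correctly encoding which vertices of an extended type are identified with which variables, which are unconstrained added ancestors, which carry the root label $R$, and which equalities of variables are forced by $\tau$. Once this is set up, the circuit bounds follow by inspection: the number of pattern types is $\Oof_{\CCC,f}(1)$, so each $v\in F$ contributes $\Oof_{\CCC,f}(1)$ gates with fan-out $\Oof_{\CCC,f}(1)$; every permanent gate has at most $|\tup x|$ rows; the depth is $\Oof(d)=\Oof_\CCC(1)$; and the whole family is built in time $\Oof_{\CCC,f}(|F|)$ by a single bottom-up traversal of $F$.
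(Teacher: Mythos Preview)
The proposal is correct and takes essentially the same approach as the paper. Your ``extended types'' are exactly the paper's \emph{labelled shapes} (forests in which every node is an ancestor of some variable, together with a multiset of weight symbols at each node), your admissibility case-split is the content of Lemma~\ref{lem:basic-decomposition}, and your two displayed permanent identities---over distinct roots of $F$ and then over distinct children of $r$---are precisely Claim~\ref{cl:permaforest} unrolled as a bottom-up pass instead of an explicit induction on depth.
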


In the remainder of this section, 
fix a number $d\in \N$ and signature $\Sigma$ as above. By \emph{forest} we mean
a $\Sigma$-structures which is a forest, as described above. Fix a $\Sigma(\tup w)$-expression $f$.



\paragraph*{Basic expressions.}
A \emph{shape} with free variables $\tup x$
is a forest $F$ and
a tuple  $\tup a \in F^{\tup x}$.
We write $F(\tup x)$ for the shape defined by $F$ and $\tup a$, and
 write $F[x]$ instead of $\tup a[x]$ for $x\in \tup x$,
and $F[\tup x]$ for $\setof{F[x]}{x\in \tup x}$.
We require that every node in a shape $F$ is an ancestor of some node in $F[\tup x]$. 
We distinguish shapes only up to isomorphism,
i.e., we consider two shapes $F,G$ equal
if there is a forest isomorphism from $F$ to $G$
which maps $F[x]$ to $G[x]$,
for each $x\in \tup x$.

The \emph{atomic type} of a shape $F$
	is the set of all literals 
	which hold in $F$.
	Specifically, those literals are:
	\begin{align}\label{eq:label}
	  U(\parent^i(x))
	\end{align} 
	   for $0\le i\le d$, $x\in \tup x$, and unary predicate $U\in\Sigma$, such that $U(\parent^i(F[x]))$ holds,
	   \begin{align}
		\label{eq:eq}
	\parent^i(x)=\parent^j(y),
	   \end{align}  for $0\le i,j\le d$ and $x,y\in \tup x$ such that $\parent^i(F[x])=\parent^j(F[y])$	   
	 \begin{align}  
	\parent^i(x)\neq\parent^j(y), \label{eq:neq}
	\end{align}
	 for $0\le i,j\le d$ and $x,y\in \tup x$ such that $\parent^i(F[x])\neq\parent^j(F[y])$.

	With each shape $F(\tup x)$ we associate the $\Sigma$-formula, also denoted $F(\tup x)$, defined as the conjunction of all literals in the atomic type of $F$.
The following is immediate.
\begin{lemma}\label{lem:inj}
	Given a forest $\str A$  and a tuple $\tup a\in \str A^{\tup x}$, the formula
	$F_{\str A}(\tup a)$ holds iff there is
an embedding $\alpha\from F\to \str A$
of rooted forests which maps $F[x]\in F$ to $\tup a[x]\in \str A$, for $x\in\tup x$. If such an embedding exists, it is unique.
\end{lemma}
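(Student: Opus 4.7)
The plan is to construct the embedding $\alpha$ explicitly, using the shape condition that every node of $F$ is an ancestor of some $F[x]$ for $x\in\tup x$. Concretely, each node $v\in F$ can be written as $\parent^i(F[x])$ for some $x\in\tup x$ and $0\le i\le d$, so I would define $\alpha(\parent^i(F[x])) \eqdef \parent^i(\tup a[x])$ in $\str A$, and then check that this works.

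First I would verify that $\alpha$ is well-defined: whenever two representations $\parent^i(F[x])$ and $\parent^j(F[y])$ denote the same node of $F$, the literal $\parent^i(x)=\parent^j(y)$ sits in the atomic type of $F$ (by construction, cf.~\eqref{eq:eq}), and since $F_\str A(\tup a)$ holds, this equality transfers to $\parent^i(\tup a[x])=\parent^j(\tup a[y])$ in $\str A$. Injectivity of $\alpha$ is symmetric: if $\parent^i(F[x])\neq\parent^j(F[y])$ in $F$, then the literal~\eqref{eq:neq} appears in the atomic type, forcing $\parent^i(\tup a[x])\neq \parent^j(\tup a[y])$ in $\str A$. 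Preservation of the parent function is immediate from the definition, since $\parent(\parent^i(F[x]))=\parent^{i+1}(F[x])$ is mapped to $\parent^{i+1}(\tup a[x])=\parent(\parent^i(\tup a[x]))$. Preservation of unary predicates uses the label literals~\eqref{eq:label} in exactly the same fashion.

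The converse direction is routine: if such an embedding $\alpha$ exists, then every literal in the atomic type of $F$ is by construction preserved to an equality, inequality, or unary predicate statement valid in $\str A$, so $F_\str A(\tup a)$ holds. For uniqueness, note that any embedding $\alpha'$ with $\alpha'(F[x])=\tup a[x]$ must commute with $\parent$, so $\alpha'(\parent^i(F[x]))=\parent^i(\tup a[x])$, which coincides with $\alpha$ on every node of $F$ (again using the shape condition that every node has this form).

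The only possible obstacle is making sure the case $i=0$ behaves correctly when $F[x]$ is a root, where $\parent$ is defined as the identity on roots; but this is consistent between $F$ and $\str A$ because the atomic type records exactly which $\parent^i(x)$ coincide, so a root of $F$ among the $F[x]$ is forced (via~\eqref{eq:eq}) to map to a root of $\str A$. Apart from this minor bookkeeping, the proof is a direct unfolding of the definition of atomic type.
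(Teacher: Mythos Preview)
Your proposal is correct and is exactly the natural unfolding of the definition of atomic type; the paper itself declares the lemma ``immediate'' and gives no proof, so there is nothing to compare against beyond noting that you have spelled out what the paper leaves implicit. The one point worth making explicit (which you gesture at in the last paragraph) is that the equality literals~\eqref{eq:eq} together with the inequality literals~\eqref{eq:neq} pin down the depth of each $\tup a[x]$ to match that of $F[x]$---specifically, $\parent^{m}(x)=\parent^{m+1}(x)$ and $\parent^{m-1}(x)\neq\parent^m(x)$ together force $\tup a[x]$ to sit at depth exactly $m$---which is what guarantees that $\alpha$ sends roots to roots and hence genuinely commutes with $\parent$ everywhere.
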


A \emph{labelled shape} is a pair $(F(\tup x),\lambda)$, where $F(\tup x)$ is a shape and
$\lambda$ associates to each node $v$ of $F$ 
a multiset $\lambda(v)$ of weight  symbols from $\tup w$.
With each labelled shape $(F(\tup x),\lambda)$ we associate a \emph{basic expression}, defined as follows.
For each node $v$ of $F$ arbitrarily choose a leaf $x$  of $F(\tup x)$ and number $i\ge 0$ such that $v=\parent^i(x)$. Let $\lambda_v(\tup x)$ be the expression $\prod_{w\in\lambda(v)}w(\parent^i(x))$.
Define the $\Sigma(\tup w)$-expression $F^\lambda(\tup x)$ 
as the product of the expression $[F(\tup x)]$
	 and of the all  expressions $\lambda_v(\tup x)$, for $v$ ranging over the nodes of $F$. We call $F^\lambda(\tup x)$ the \emph{basic expression} associated to $(F(\tup x),\lambda)$.

The following lemma is an immediate consequence of Lemma~\ref{lem:inj}.
\begin{lemma}\label{lem:basic}
	Let $(F(\tup x),\lambda)$ be a 
	labelled shape and $f=\sum_{\tup x} F^\lambda(\tup x)$. 
	For every  forest $\str A$ and  tuple of weight functions $\tup w$,
	$$f_{\str A(\tup w)}=\sum_{\alpha\from F\to \str A}
	\prod_{v\in F}\lambda_v(\alpha(v)),$$
	where the sum ranges over all embeddings of rooted forests.
\end{lemma}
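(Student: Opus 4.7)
The plan is to unfold the semantics of $f_{\str A(\tup w)}=\sum_{\tup a\in\str A^{\tup x}} F^\lambda_{\str A(\tup w)}(\tup a)$ and re-index the nonzero contributions by embeddings of $F$ into $\str A$. First I would expand
$$F^\lambda_{\str A(\tup w)}(\tup a)=[F_\str A(\tup a)]\cdot \prod_{v\in F}(\lambda_v)_{\str A(\tup w)}(\tup a),$$
using the definition of the basic expression as a product of the Iverson bracket $[F(\tup x)]$ and the per-node factors $\lambda_v(\tup x)$. The bracket acts as a selector: only those $\tup a$ with $F_\str A(\tup a)$ true contribute to the sum.

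Next I would invoke Lemma~\ref{lem:inj} to re-index. That lemma gives a bijection between tuples $\tup a$ with $F_\str A(\tup a)$ true and embeddings of rooted forests $\alpha\colon F\to \str A$, under which $\tup a[x]=\alpha(F[x])$ for each $x\in\tup x$. After this re-indexing, it remains to identify the product $\prod_v (\lambda_v)_{\str A(\tup w)}(\tup a)$ with the product $\prod_v \lambda_v(\alpha(v))$ appearing in the statement.

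For a fixed node $v\in F$, recall that $\lambda_v(\tup x)$ was defined as $\prod_{w\in\lambda(v)} w(\parent^i(x))$ for some choice of $x\in\tup x$ and $i\ge 0$ with $v=\parent^i(F[x])$. Since $\alpha$ is an embedding of rooted forests and commutes with $\parent$, we have $\parent^i_\str A(\alpha(F[x]))=\alpha(\parent^i(F[x]))=\alpha(v)$, so $(\lambda_v)_{\str A(\tup w)}(\tup a)=\prod_{w\in\lambda(v)} w_\str A(\alpha(v))$, which matches the factor $\lambda_v(\alpha(v))$ in the statement (with $\lambda_v$ viewed as a function of a single node).

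The only subtle point — and the main thing to address carefully — is well-definedness: the choice of representative $(x,i)$ with $v=\parent^i(F[x])$ is not unique. If also $v=\parent^j(F[y])$ for some $y\in\tup x$, then by construction the atomic type of the shape $F$ contains the literal $\parent^i(x)=\parent^j(y)$ (see the definition in~\eqref{eq:eq}), so on any $\tup a$ with $F_\str A(\tup a)$ true the two candidate expressions $w(\parent^i(x))$ and $w(\parent^j(y))$ evaluate equally. Hence the factor $(\lambda_v)_{\str A(\tup w)}(\tup a)$ depends only on $v$ (and on $\alpha(v)$), not on the choice made in the definition. Combining the three steps yields the displayed identity.
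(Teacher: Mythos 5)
Your proof is correct and follows exactly the route the paper intends: the paper offers no argument beyond declaring the lemma ``an immediate consequence of Lemma~\ref{lem:inj}'', and your write-up supplies precisely the details being elided --- the re-indexing of nonzero summands by embeddings via Lemma~\ref{lem:inj}, the identity $\parent^i_{\str A}(\alpha(F[x]))=\alpha(v)$, and the well-definedness of $\lambda_v$ under the arbitrary choice of representative $(x,i)$, which the equality literals~\cref{eq:eq} in the atomic type guarantee.
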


An \emph{$\sem S$-combination} of expressions $\alpha_1,\ldots,\alpha_k$ is an expression of the form $s_1\cdot \alpha_1+\ldots+s_k\cdot \alpha_k$, for $s_1,\ldots,s_k\in \sem S$.
We say that two expressions $f,g$ are \emph{equivalent} over a class $\CCC$ if $f_{\str A(\tup w)}=g_{\str A(\tup w)}$ for every $\str A\in\CCC$. We also say in this case that the \emph{identity $f=g$ holds} (over $\CCC$).

\begin{lemma}\label{lem:basic-decomposition}
  Every sum-free $\Sigma(\tup w)$-expression $\alpha(\tup x)$  is equivalent over $\CCC$ to an $\sem S$-combination of basic expressions.
\end{lemma}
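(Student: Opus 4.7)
The plan is to decompose $\alpha(\tup x)$ in two stages: a purely algebraic expansion, followed by a combinatorial refinement indexed by atomic types.

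\textbf{Stage 1 (distribute into sum of products).} I would repeatedly apply the distributive law $(a+b)\cdot c = a\cdot c+b\cdot c$, together with the identities $[L\land L']=[L]\cdot[L']$ and $[L\lor L']=[L][L']+[L][\neg L']+[\neg L][L']$ (the latter already used in Lemma~\ref{lem:simplify}), to rewrite $\alpha$ as $\sum_i c_i\cdot \pi_i$, where each $c_i\in \sem S$ and each $\pi_i$ is a product whose factors are either weight applications of the form $w(\parent^j(x))$ with $x\in\tup x$ and $0\le j\le d$, or Iverson brackets of literals (positive or negated). Crucially, none of these transformations introduces a $\sum$, so sum-freeness is preserved.

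\textbf{Stage 2 (refine by atomic types).} Since forests in $\CCC$ have depth at most $d$ and $\tup x$ is finite, the set $\mathcal T$ of atomic types of shapes over $\tup x$ is finite. For any $\str A\in\CCC$ and any $\tup a\in\str A^{\tup x}$ exactly one type $\tau\in\mathcal T$ is realized, and hence $\sum_{\tau\in\mathcal T}[\tau]\equiv 1$ over $\CCC$. Multiplying each $\pi_i$ by this identity yields $\pi_i=\sum_\tau \pi_i\cdot[\tau]$. For each pair $(i,\tau)$ there are two cases. If some literal asserted by a bracket in $\pi_i$ is contradicted by $\tau$ (or some negated literal is entailed), the product $\pi_i\cdot[\tau]$ is identically zero and is discarded. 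Otherwise $[\tau]$ absorbs every bracket in $\pi_i$, and each remaining weight factor $w(\parent^j(x))$ refers to the well-defined node $v=\parent^j(F_\tau[x])$ of the shape $F_\tau$ underlying $\tau$. Collecting these weight symbols at each node of $F_\tau$ produces a labelling $\lambda$, and one checks directly from the definition that $\pi_i\cdot[\tau]$ equals the basic expression $F_\tau^\lambda(\tup x)$.

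Summing over all $i$ and all $\tau$ then exhibits $\alpha$ as an $\sem S$-combination of basic expressions, as required. The step that needs the most care is the identification in Stage 2: once $\tau$ is fixed, every term $\parent^j(x)$ must be assigned to a specific node of $F_\tau$, but this is unambiguous because $\tau$ decides every equality $\parent^i(x)=\parent^j(y)$ for $x,y\in\tup x$ and $0\le i,j\le d$. I expect the main conceptual obstacle to lie not in the decomposition itself but in choosing the right notion of completeness for atomic types so that $\sum_\tau [\tau]\equiv 1$ truly holds over $\CCC$; this is why the types must record not only the positive literals and the (in)equalities but enough information to pin down a unique shape. Notably, the entire argument uses only distributivity and case analysis on finitely many types, so it works in an arbitrary commutative semiring without any ring structure.
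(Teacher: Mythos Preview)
Your argument is correct and is essentially the paper's proof. The only presentational difference is that instead of multiplying directly by $\sum_{\tau\in\mathcal T}[\tau]$, the paper multiplies by an explicit product $\gamma(\tup x)$ of three families of ``partition of unity'' identities (one for each equality $\parent^i(x)=\parent^j(y)$, one for the depth of each variable via the $\rt$ predicate, and one for the unary label of each term $\parent^i(x)$); expanding $\gamma$ by distributivity yields precisely your sum over atomic types, with the inconsistent combinations being the unsatisfiable summands.
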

\begin{proof}
		Fix $\tup x$. 
 The following identities  hold over $\CCC$:
	  \begin{align}
	  [\parent^i(x)=\parent^j(y)]+[\parent^i(x)\neq \parent^j(y)]&=1\label{eq:parents}
	  \end{align}
	  \begin{align}
[\rt(x)]+[\rt(\parent(x))]+[\rt(\parent^2(x))]+\ifdoublecolumn{\\}\cdots+[\rt(\parent^d(x))]&=1,\label{eq:depths}
\end{align}
\begin{align}
[\tau_1(\parent^i(x))]+[\tau_2(\parent^i(x))]+\ldots+[\tau_k(\parent^i(x))]=1\label{eq:colors}
	  \end{align}
	  where $0\le i,j\le d$ and $x,y\in \tup x$,
	  and $\tau_1(x),\ldots,\tau_k(x)$ 
	  are all formulas of the form 
	  $$\bigwedge_{U\in \Sigma_+} U(x)\land 
	  \bigwedge_{U\in \Sigma_-} \neg U(x),$$
	  where $\Sigma_+$ and $\Sigma_-$ is a partition of the unary predicates in $\Sigma$.

	  Let $\gamma(\tup x)$ be the product of all the left-hand sides of the above 
	  identities. Hence, $\gamma(\tup x)=1$ is an identity that holds over $\CCC$,
	  and so is $\alpha(\tup x)\cdot \gamma(\tup x)=\alpha(\tup x)$.
	  Using distributivity, rewrite $\alpha(\tup x)\cdot \gamma(\tup x)$
	  as a sum of products. Then, each summand $P(\tup x)$
	 is a product of weight symbols applied to terms, and of expressions of the form $[\alpha]$, such that for each of the equations~\eqref{eq:parents},\eqref{eq:depths},\eqref{eq:colors}, exactly one of the summands on the left-hand side is present in the product $P(\tup x)$. It is not difficult to see that either $P(\tup x)$ 
	  is unsatisfiable in $\CCC$, in the sense that $P_{\str A(\tup w)}=0$ for every $\str A\in\CCC$, or is
	  of the form $F^\lambda(\tup x)$, for some labelled shape $(F(\tup x),\lambda)$.
	  This yields the statement.
\end{proof}

\begin{proof}[\ifdoublecolumn{Proof }of Lemma~\ref{lem:circuits-trees}]
	Without loss of generality we may assume that $f$ is a sum of expressions of the form $\sum_{\tup x}\psi(\tup x)$, where $\psi(\tup x)$ is sum-free.
By Lemma~\ref{lem:basic-decomposition} we may further assume that $\psi(\tup x)$ as an $\sem S$-combination of basic expressions. It is therefore enough to prove the lemma under the assumption that $\psi(\tup x)$ is a basic expression, since we can easily construct a circuit for an $\sem S$-combination of  expressions, given circuits for each of them.

Hence, it suffices to prove the lemma in the case when $f$ is of the form $\sum_{\tup x}F^\lambda(\tup x)$, for some labelled shape $(F(\tup x),\lambda)$  and variables $\tup x=\set{x_1,\ldots, x_k}$. Without loss of generality we may assume that in the shape $F(\tup x)$, every node $x\in\tup x$ is distinct; otherwise we can decrease $|\tup x|$.

The proof proceeds by induction on $d$. 
In the base case, assume that $d=0$.
If $F$ has depth larger than $0$ then $f=0$ over $\CCC$.
since there is no embedding $F\to \str A$.
Otherwise,  $F(\tup x)$ consists only of the roots $x_1,\ldots,x_k$. As for $\str A\in\CCC$, embeddings $\alpha\from F\to \str A$ correspond 
to sequences $a_1,\ldots,a_k$ of distinct elements of $\str A$,  by Lemma~\ref{lem:basic} the following identity holds over $\CCC$:
\begin{align}\label{eq:perm1}
	f=\sum_{\substack{v_1,\ldots,v_k\\\textit{distinct}}}\lambda_{x_1}(v_1)\cdots \lambda_{x_k}(v_k).	
\end{align}
This is the permanent of the matrix with $k$ rows and columns indexed by the vertices of $\str A$, where 
the entry $(i, v)$ has value $\lambda_{x_i}(v)=\prod_{w\in \lambda(x_i)}f(v)$.
This  naturally yields a circuit $C_\str A$, with one permanent gate connected whose inputs are outputs of bounded size circuits computing the products $\lambda_{x_i}(v)$, for each $v\in V$ and $1\le i\le k$.
Clearly, $C_\str A$ can be computed in linear time, given $\str A\in \CCC$.
	
In the inductive step, assume that $\CCC$ is the class of forests of depth at most $d+1$.
For a forest $\str A\in \CCC$ and its root $v$, by $\str A^v$ we denote the subforest of $\str A$ induced by the strict descendants of $v$ in $\str A$. 
Similarly, for a root $r\in  F$ of a shape $F(\tup x)$,
let $F^r$ denote the shape formed by the forest induced by the strict descendants of $r$ in $F$, and let
$\tup x_r$ denote the set of nodes in $F[\tup x]$ which belong to $F^r$, and let $\lambda^r$ denote the labelling $\lambda$ restricted to $F^r$.
Let $$f^r=\sum_{\tup x_r}(F^r)^{\lambda^r}(\tup x_r).$$

\begin{claim}\label{cl:permaforest}
The following holds for all $\str A\in\CCC$:
$$f=\sum_{\beta}\prod_{r}
 \lambda_r(\beta(r))\cdot f^r_{\str 
 A^{\beta(r)}}
$$
where $\beta$ ranges over all injective mappings of the roots of $F$ to the roots of $\str A$ and $r$ ranges over all roots of $F$.
\end{claim}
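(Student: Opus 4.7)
The plan is to apply Lemma~\ref{lem:basic} to expand $f$, decompose each embedding of $F$ into $\str A$ according to the trees it hits, and then recognize the inner sums as the values $f^r_{\str A^{\beta(r)}}$ by a second application of Lemma~\ref{lem:basic}.

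More precisely, I would start by invoking Lemma~\ref{lem:basic} to rewrite
\[
 f_{\str A(\tup w)}=\sum_{\alpha\from F\to\str A}\prod_{v\in F}\lambda_v(\alpha(v)),
\]
where $\alpha$ ranges over embeddings of rooted forests. The combinatorial heart of the proof is then a bijection between such embeddings $\alpha$ and pairs $(\beta,(\alpha^r)_r)$, where $\beta$ is an injection from the set of roots of $F$ into the set of roots of $\str A$, and for each root $r$ of $F$, $\alpha^r\from F^r\to\str A^{\beta(r)}$ is an embedding. Given $\alpha$, we set $\beta(r)=\alpha(r)$; this lands in the roots of $\str A$ because $\alpha$ preserves the unary predicate $\rt$ (equivalently, the fixed points of $\parent$), and is injective because $\alpha$ is. The strict descendants of $r$ in $F$ map under $\alpha$ into the strict descendants of $\beta(r)$ in $\str A$ because $\alpha$ commutes with $\parent$, so defining $\alpha^r$ as the appropriate restriction gives an embedding of $F^r$ into $\str A^{\beta(r)}$. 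Conversely, from $(\beta,(\alpha^r)_r)$ one glues together an embedding $\alpha$ of $F$ into $\str A$; injectivity globally follows because the images of the various $\alpha^r$ lie in pairwise disjoint subforests $\str A^{\beta(r)}$ (injectivity of $\beta$ plus disjointness of descendant sets of distinct roots of $\str A$).

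Under this bijection, the product over $v\in F$ splits along the partition of $F$ into roots and the subforests $F^r$:
\[
 \prod_{v\in F}\lambda_v(\alpha(v))
 =\prod_{r}\Bigl(\lambda_r(\beta(r))\cdot\prod_{v\in F^r}\lambda_v(\alpha^r(v))\Bigr).
\]
Substituting this into the expansion of $f_{\str A(\tup w)}$, interchanging the sum over $\alpha$ with the sums over $\beta$ and $(\alpha^r)_r$, and factoring the independent sums using commutativity and distributivity in $\sem S$ yields
\[
 f_{\str A(\tup w)}
 =\sum_{\beta}\prod_{r}\Bigl(\lambda_r(\beta(r))\cdot \sum_{\alpha^r\from F^r\to \str A^{\beta(r)}}\prod_{v\in F^r}\lambda_v(\alpha^r(v))\Bigr).
\]
A second application of Lemma~\ref{lem:basic}, this time to the basic expression $(F^r)^{\lambda^r}(\tup x_r)$ over the forest $\str A^{\beta(r)}$, identifies the inner sum with $f^r_{\str A^{\beta(r)}}$, giving exactly the claimed identity.

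The step that deserves the most care, and which is essentially the only real content, is justifying the bijection between embeddings $\alpha\from F\to\str A$ and pairs $(\beta,(\alpha^r)_r)$; everything else is bookkeeping in a commutative semiring. The only subtlety in the bijection is checking injectivity of the glued map in both directions, which rests on the simple observation that the descendant subforests $\str A^{\beta(r)}$ attached to distinct roots of $\str A$ have disjoint vertex sets, so the pieces $\alpha^r$ cannot collide with each other or with the root assignment $\beta$.
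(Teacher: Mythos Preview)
Your proposal is correct and follows essentially the same approach as the paper's proof: apply Lemma~\ref{lem:basic} to expand $f_{\str A(\tup w)}$ as a sum over embeddings, decompose each embedding into a root injection $\beta$ together with subforest embeddings $\alpha^r\from F^r\to\str A^{\beta(r)}$, factor the resulting product, and recognize the inner sums as $f^r_{\str A^{\beta(r)}}$ via Lemma~\ref{lem:basic} again. If anything, you are more explicit than the paper in justifying the bijection and the injectivity of the glued map.
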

\begin{proof}
	We use Lemma~\ref{lem:basic} to express the left-hand side above. Every embedding $\alpha \from F\to \str A$ 
	can be constructed in two steps:
	first choose an injection $\beta$ of the roots of $F$ into the roots of $\str A$, and then, for each root $r$ of $F$, choose an embedding of $F^r$ into $\str A^{\alpha(r)}$.
	Hence, by Lemma~\ref{lem:basic}
	we get:
	\begin{align*}		
		f_{\str A(\tup w)}&=
		\sum_{\alpha\from F\to \str A}
		\prod_{v\in F}\lambda_v(\alpha(v))\\
		&=\sum_{\beta}\prod_r\left(\sum_{\alpha_r\from F^r\to \str A^{\beta(r)}}\prod_{v\in F}\lambda_v(\alpha(v))\right)\\&=
		\sum_{\beta}\prod_r\lambda_r(\beta(r))\left(\sum_{\alpha_r\from F^r\to \str A^{\beta(r)}}\prod_{v\in F^r}\lambda_v(\alpha(v))\right)=\\
		&=\sum_{\beta}\prod_r\lambda_r(\beta(r))\cdot f^r_{\str A^{\beta(r)}}.
	\end{align*}
		This yields the claim.
\end{proof}

To obtain the desired circuit $C_\str A$ computing $f$ on $\str A$,
 we apply the inductive assumption to each  expression $f^r$, yielding a linear time algorithm for each root $r$ of $F$,
 which
for each forest $H$ of depth $d$  
 produces a circuit $C^r_H$ computing $f^r$. 
Given a forest $\str A$ of depth $d+1$, we construct the circuit $C_\str A$ which (as a term) is  a permanent with rows corresponding to the roots $r$ of $F$ and columns corresponding to the roots $v$ of $\str A$, and where the entry at row $r$ and column $v$ is 
the product of $\lambda_r(v)$ and 
	  $C_{\str A^v}^r$.
It is easy to see that that this can be implemented in linear time.
Correctness of the construction follows from Claim~\ref{cl:permaforest}. 
\end{proof}

\subsection{Binary relational structures of bounded treedepth}
In this section, we prove a special case of Theorem~\ref{thm:circuits}, in the case when $\CCC$ is a class of structures over a binary relational signature.

\begin{lemma}Let $\CCC$ be a class binary relational structures of bounded treedepth. Then $\CCC$ reduces to a class of labelled rooted forests of bounded depth.
\end{lemma}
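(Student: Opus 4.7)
The plan is to realize each $\str A \in \CCC$ as a depth-first spanning forest of its Gaifman graph, enriched with unary labels that encode each binary tuple locally. Fix a bound $d$ on the treedepth of Gaifman graphs in $\CCC$. As recalled in Example~\ref{ex:homo}, such graphs contain no path of length exceeding $d' := 2^d - 1$, and so any DFS spanning forest $F$ of the Gaifman graph is rooted of depth at most $d'$; moreover, the DFS property guarantees that every edge of the Gaifman graph — equivalently, every pair of distinct elements occurring together in some binary tuple of $\str A$ — runs between an ancestor and a descendant in $F$. This structural fact, which is exactly where bounded treedepth enters, is the core of the reduction.

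Take the target class $\DDD$ to live over the forest signature (parent function $\parent$, root predicate $\rt$, and the unary predicates inherited from $\Sigma$), augmented by a fresh unary predicate $D^i$ for each $0 \le i \le d'$, and for every binary relation symbol $R \in \Sigma$ a unary predicate $U^{R,=}$ together with two families $U^{R,\uparrow}_j$, $U^{R,\downarrow}_j$ for $0 \le j \le d'$. Given $\str A \in \CCC$, the algorithm builds the Gaifman graph, runs DFS to obtain $F$, annotates each node $v$ with its depth predicate $D^i$, copies over the original unary labels, sets $U^{R,=}(v)$ whenever $R_\str A(v,v)$ holds, and sweeps once through every tuple $(a,b) \in R_\str A$ with $a \ne b$: since $a,b$ must then be in ancestor–descendant relation in $F$, if $a$ is the ancestor at depth $i$ then $U^{R,\downarrow}_i(b)$ is set, and if instead $b$ is the ancestor at depth $i$ then $U^{R,\uparrow}_i(a)$ is set.

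For the translation, unary atoms and variable equalities stay as they are, while a binary atom $R(x,y)$ becomes the quantifier-free formula
\[
(x = y \land U^{R,=}(x)) \lor \bigvee_{0 \le i < j \le d'} \bigl(D^i(x) \land D^j(y) \land \parent^{j-i}(y) = x \land U^{R,\downarrow}_i(y)\bigr) \lor \bigvee_{0 \le i < j \le d'} \bigl(D^j(x) \land D^i(y) \land \parent^{j-i}(x) = y \land U^{R,\uparrow}_i(x)\bigr),
\]
which by construction agrees with $R_\str A(x,y)$ under every assignment of $x,y$ to elements of $\str A$. Since $d'$ and $|\Sigma|$ are constants depending only on $\CCC$, and since bounded treedepth implies bounded expansion, so that the tuples of $\str A$ can be enumerated in linear time under the access model of Section~2, the whole conversion fits in time $\Oof_{\CCC}(|\str A|)$.

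There is no real conceptual obstacle — the construction is a mild formalisation of the folklore encoding of bounded-treedepth graphs as labelled forests. The only subtle point worth flagging is the linear-time requirement: one must process each binary tuple exactly once and write the label on the descendant end, rather than naively traversing all $O(d')$ ancestors of every node; otherwise, although the running time would still be linear because $d'$ is fixed, the bookkeeping would obscure the correctness argument and risk double-counting tuples from symmetric relations.
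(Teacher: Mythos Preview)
Your proof is correct and follows essentially the same approach as the paper: compute a DFS spanning forest of bounded depth, encode each binary tuple by a unary label on the deeper endpoint indexed by the ancestor's depth, and translate atoms by a quantifier-free disjunction over possible depth pairs. Your version is in fact slightly more careful than the paper's sketch---you explicitly invoke DFS (the paper says ``arbitrarily chosen'' spanning forest, which is sloppy), you handle self-loops via a dedicated predicate $U^{R,=}$, and you distinguish the two orientations of a non-symmetric binary relation via separate families $U^{R,\uparrow}_j, U^{R,\downarrow}_j$, whereas the paper writes only a single family $R_i$.
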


\begin{proof}
	Given a structure $\str A\in \CCC$, let $F_\str A$ be an arbitrarily chosen rooted spanning forest of the Gaifman graph of $\str A$. As  those Gaifman graphs have bounded treedepth, they also have  bounded path length,
 so $\setof{F_\str A}{\str A\in \CCC}$ is a class of forests of  depth bounded by some constant $d\in\N$. 

 Moreover, label every node $v$ of $F_\str A$ by the following predicates, where $\Sigma$ denotes the signature of $\CCC$:
\begin{itemize}
	\item a unary predicate $U_i$, where $0\le i\le d$ is the depth of $v$ in $F_\str A$,
	\item each unary predicate $U\in \Sigma$  such that $U_{\str A}(v)$ holds,
	\item for each binary symbol $R$ and $0\le i\le d$, label $a$ by a new unary predicate $R_i$ 	
	if $R_{\str A}(a,b)$ holds, where $b$ is the ancestor of $a$ at depth $i$ in $F_\str A$.
\end{itemize}
Finally, let $F_\str A$ be equipped with the  function mapping each non-root node to its parent, and the roots to themselves.
This turns $F_\str A$ into a labelled forest. 
Let $\FFF=\setof{F_\str A)}{\str A\in \CCC}$ be the class of all such forests. 

Note that the following equivalences hold:
\begin{align*}
	R(a,b)&\iff 
	\bigvee_{0\le i\le j\le d} U_i(a)\land U_j(b)\land (a=\parent^{j-i}(b))\land R_i(b)\\
	&\lor\bigvee_{0\le i\le j\le d} U_i(b)\land U_j(a)\land (b=\parent^{j-i}(a))\land R_i(a).
\end{align*}
Hence, the function mapping  $\str A$ to $F_\str A$ yields a reduction from $\CCC$ to $\FFF$.
\end{proof}

By Lemma~\ref{lem:reduction}, we get:
\begin{corollary}\label{cor:circuits-treedepth}
    The statement of Theorem~\ref{thm:circuits} holds when $\CCC$ is a class of binary relational structures of bounded treedepth and $f$ is quantifier-free
    and all weight functions have arity $1$.
\end{corollary}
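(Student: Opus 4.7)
The plan is to combine the two pieces that have just been established. The immediately preceding lemma shows that every class $\CCC$ of binary relational bounded-treedepth structures reduces, in the formal sense of Lemma~\ref{lem:reduction}, to a class $\FFF$ of labelled rooted forests of bounded depth, via a DFS spanning forest enriched with unary labels recording depth and ancestor-adjacency. Lemma~\ref{lem:circuits-trees} already establishes the quantifier-free, arity-one case of Theorem~\ref{thm:circuits} for such classes $\FFF$. So the corollary is a one-line composition of these two facts.

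The two restrictions imposed in the corollary, namely that $f$ is quantifier-free and that all weight symbols have arity one, are exactly those under which Lemma~\ref{lem:reduction} transports a circuit family along a reduction. Given these ingredients, I would write the argument as follows. Starting from an expression $f$ over the signature of $\CCC$, the reduction rewrites each atomic subformula and produces a companion expression $f'$ over the signature of $\FFF$. Starting from $\str A\in\CCC$, the reduction computes in linear time the labelled forest $F_\str A\in\FFF$. The circuit family $(C_{F_\str A})_{\str A\in\CCC}$, obtained by pulling back the forest-case family supplied by Lemma~\ref{lem:circuits-trees} applied to $f'$, then computes $f$ on $\str A$. The three structural parameters (depth, fan-out, and number of rows in the permanent gates) transfer verbatim, because the reduction only alters the underlying relational data and does not touch the circuit layer.

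I do not foresee any real obstacle; this corollary is essentially the first bookkeeping step in the four-case induction outlined in Figure~\ref{fig:red}. The only points deserving verification are that the atomic formulas in the language of $\CCC$ admit quantifier-free definitions in $\FFF$ (which holds by the explicit disjunction involving $\parent^{j-i}$, depth labels $U_i$, and ancestor-adjacency predicates $R_i$) and that the DFS spanning forest has depth bounded independently of $\str A$ (which follows from bounded treedepth via the folklore bound on path length). Both facts have already been discharged inside the proof of the reduction lemma, so the corollary really is immediate: simply write ``by Lemma~\ref{lem:reduction} applied with $\DDD=\FFF$ and Lemma~\ref{lem:circuits-trees}''.
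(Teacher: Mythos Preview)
Your proposal is correct and follows exactly the paper's approach: the paper derives this corollary in one line from Lemma~\ref{lem:reduction}, using the preceding reduction to labelled forests of bounded depth together with Lemma~\ref{lem:circuits-trees}. Your write-up is simply a more detailed unpacking of that same composition.
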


\subsection{Classes of binary relational structures of bounded expansion}
\begin{lemma}\label{lem:circuits-reduction}
	The statement of Theorem~\ref{thm:circuits} holds when $\CCC$ is a class of binary relational structures of bounded expansion and $f$ is quantifier-free.
\end{lemma}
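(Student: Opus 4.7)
The plan is to deduce Lemma~\ref{lem:circuits-reduction} from Corollary~\ref{cor:circuits-treedepth} via the low treedepth coloring property of bounded expansion classes, generalizing the strategy of Example~\ref{ex:homo}. By Lemma~\ref{lem:simplify} we may assume that $f$ is simple, closed, and that its Iverson-bracketed subformulas contain no quantifiers. Let $p$ be the total number of (bound) variables appearing in $f$, after renaming so all are distinct. By Proposition~\ref{prop:covers} applied to $\CCC$ with parameter $p$, there is a constant $d\in\N$, a finite color set $C$, and a linear-time algorithm which, given $\str A\in\CCC$, computes a coloring $\kappa\from \str A\to C$ such that for every $D\subset C$ with $|D|\leq p$, the substructure $\str A[\kappa^{-1}(D)]$ has treedepth at most~$d$. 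Enriching each such induced substructure by unary predicates $P_c$ marking the color classes yields a class $\DDD$ of binary relational structures of bounded treedepth, to which Corollary~\ref{cor:circuits-treedepth} applies.

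The next step is to decompose $f$ along the coloring. For each function $\gamma\from\mathrm{Vars}(f)\to C$, let $f^\gamma$ be the expression obtained from $f$ by replacing every summation $\sum_x\psi$ with $\sum_x [P_{\gamma(x)}(x)]\cdot\psi$. Since every element of $\str A$ has exactly one color, the identity
\[
 f_{\str A(\tup w)}=\sum_{\gamma\from\mathrm{Vars}(f)\to C}(f^\gamma)_{\str A(\tup w)}
\]
holds, and the sum has at most $|C|^p$ terms, a constant depending only on $\CCC$ and $f$. Because $f$ is quantifier-free, each atomic $\Sigma$-formula $\alpha(\tup x)$ occurring in $f^\gamma$ is evaluated at a tuple $\tup x$ whose entries all have colors in $D_\gamma:=\mathrm{image}(\gamma)$, and therefore its value in $\str A$ agrees with its value in the induced substructure $\str A[\kappa^{-1}(D_\gamma)]\in\DDD$. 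Consequently $(f^\gamma)_{\str A(\tup w)}$ equals the interpretation of $f^\gamma$ on the enriched structure $\str A[\kappa^{-1}(D_\gamma)]$, with the weights $\tup w$ restricted accordingly.

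Applying Corollary~\ref{cor:circuits-treedepth} to the class $\DDD$ and each of the (finitely many) expressions $f^\gamma$ produces, for every $\str B\in\DDD$, a circuit $C^\gamma_\str B$ computing $f^\gamma$ on $\str B$ with bounded depth, bounded fan-out, and bounded number of rows in the permanent gates, computable in linear time. To assemble the final circuit $C_\str A$: compute $\kappa$ in linear time, extract the induced substructures $\str A[\kappa^{-1}(D)]$ for all $D\subset C$ with $|D|\leq p$ (which takes linear total time, as there are constantly many such $D$ and each substructure is of linear size), construct the subcircuits $C^\gamma_{\str A[\kappa^{-1}(D_\gamma)]}$, and combine them via a single addition gate of fan-in at most $|C|^p$. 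Each input gate $(w,a)$ of $C_\str A$ is shared by only those subcircuits where some variable of $f$ is assigned color $\kappa(a)$, a constant number of them, so the overall fan-out stays bounded; the depth grows by one and the permanent row-counts are unchanged.

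The main obstacle is verifying the correctness of the substructure argument, namely that $(f^\gamma)_{\str A(\tup w)}$ coincides with the value of $f^\gamma$ on the induced substructure $\str A[\kappa^{-1}(D_\gamma)]$. This rests crucially on the assumption that $f$ is quantifier-free: any atomic $\Sigma$-formula $\alpha(\tup x)$ depends only on the atomic relations holding on the tuple $\tup x$, and so is preserved by restriction to any induced substructure containing $\tup x$. Once this substitution-invariance is checked (together with the fact that the guarded summations in $f^\gamma$ only range over elements of $\str A[\kappa^{-1}(D_\gamma)]$), the structural bounds on depth, fan-out, and permanent row-count transfer from the subcircuits produced by Corollary~\ref{cor:circuits-treedepth} to the combined circuit $C_\str A$ with only a constant overhead.
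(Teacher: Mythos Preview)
Your proposal is correct and follows essentially the same route as the paper: decompose the expression by assigning a color to every bound variable, observe that each color-restricted piece lives in a bounded-treedepth induced substructure, apply Corollary~\ref{cor:circuits-treedepth} there, and sum the resulting circuits. The only cosmetic differences are that the paper first normalizes $f$ to the shape $\sum_{\tup x} g(\tup x)$ with $g$ sum-free (so all variables are bound at the outermost level) and then groups the colorings by their image $D\subset C$, whereas you rename the bound variables and sum directly over all $\gamma\colon\mathrm{Vars}(f)\to C$; both arrive at the same decomposition and the same circuit.
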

\begin{proof}Let $f$ be a closed quantifier-free $\Sigma(\tup w)$-expression. We show that there is a linear time algorithm producing circuits that evaluate $f$ over $\CCC$.
It is enough to consider the case when $f$ is of the form $\sum_{\tup x}g(\tup x)$ for some sum-free $\Sigma(\tup w)$-expression $g$ with free variables $\tup x$. Let $p=|\tup x|$.

As $\CCC$ has bounded expansion, there 
is a number $d\in\N$ and a finite set of colors $C$ such that every structure $\str A\in \CCC$ has a coloring $\gamma\from \str A\to C$ with colors from $C$ such that 
for any set $D\subset C$ of at most $p$ colors, the subgraph $G_D$ of the Gaifman graph  of $\str A$ induced by the set $\gamma^{-1}(D)$ of vertices of color in $D$
has treedepth bounded by $d$.
Moreover, we can assume that the coloring $c$ as above can be computed in linear time, 
given $\str A\in \CCC$.

Given a $\Sigma$-structure $\str A$,
first define its extension $\str A'$ by adding unary predicates $U_c$, for each $c\in C$, where $U_c$ marks the vertices of color $c$, i.e. $\gamma^{-1}(\set c)$. 
For an $\tup x$-tuple of colors $\tup c\in C^\tup x$,
define the formula testing that each variable  $x$ gets color $\tup c(x)$ under $f$:
$$\tau_\tup c(\tup x)=\bigwedge_{x\in \tup x}U_{\tup c[x]}(x).$$
Note that for every tuple $\tup a\in \str A^{\tup x}$ there is exactly one $\tup c\in C^\tup x$ such that $\tau_\tup c(\tup a)$ holds
(namely, $\tup c$ satisfying $\tup c[x]=\gamma(\tup a[x]))$ for $x\in\tup x$).
Hence, we have the following identities:
\begin{align}
\sum_{\tup x}\psi(\tup x)&=
\sum_{\tup c \in C^\tup x}\sum_{\tup x}
[\tau_\tup c(\tup x)]\cdot \psi(\tup x)\ifdoublecolumn{\\}
=\ifdoublecolumn{&}\sum_{\substack{D\subset C\\ |D|\le p}}\quad
\underbrace{\sum_{\tup x}\ \sum_{\substack{\tup c\in D^\tup x\\\textit{surjective}}}\quad  [\tau_\tup c(\tup x)]\cdot\psi(\tup x)}_{f^D}.
\end{align}
For a set $D\subset C$ of at most $p$ colors,  let $\str A^D$ denote the substructure of $\str A'$ induced by $G_D$, and let $f^D$ be the expression marked above.
From the above we have the following identity over $\CCC$: 
\begin{align}\label{eq:correct}
	f=\sum_{\substack{D\subset C\\ |D|\le p}}f^D.	
\end{align}

Let $\Sigma'$ be the signature extending $\Sigma$ by the unary predicates $U_c$, for $c\in C$.
Let $\DDD$ be the class of $\Sigma'$-structures
whose Gaifman graph has treedepth bounded by $d$.
Applying Corollary~\ref{cor:circuits-treedepth}
to $\DDD$ and $f^D$, for each $D\subset C$ of size at most $p$, we get a linear time algorithm computing circuits that evaluate $f^D$ on $\DDD$.

Given $\str A\in\CCC$, we construct a circuit $C_\str A$ computing $f_\str A$
as follows.
First compute the coloring $c\from \str A\to C$, producing a $\Sigma'$-structure $\str A'$ extending $\str A$ by the unary predicates $U_c$ marking the colors. Next, for each 
$D\subset C$ of size at most $p$, apply the linear time algorithm to the substructure of $\str A'$ induced by $c^{-1}(D)$, yielding a circuit $C^D_\str A$ computing $f^D$. The resulting circuit $C_\str A$ is the sum of the  the circuits $C^D_\str A$.  
Correctness of the construction follows from~\eqref{eq:correct}.

\end{proof}

\begin{corollary}\label{cor:binary}
	Theorem~\ref{thm:circuits} holds when $\CCC$ is a class of binary relational structures of bounded expansion and all weight functions have arity $1$.
\end{corollary}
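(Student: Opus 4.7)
The plan is to obtain \Cref{cor:binary} by removing the \emph{quantifier-free} hypothesis from \Cref{lem:circuits-reduction} using the quantifier elimination result, \Cref{thm:qe}. Fix a binary relational signature $\Sigma$, a finite set $\tup w$ of unary weight symbols, a class $\CCC$ of $\Sigma$-structures of bounded expansion, and a closed $\Sigma(\tup w)$-expression $f$. The finitely many first-order $\Sigma$-formulas that occur inside the bracket subexpressions $[\alpha]$ of $f$ enumerate as $\alpha_1,\ldots,\alpha_m$, and each is acted on only by the relational/function vocabulary of $\Sigma$, never by the weight symbols.

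First I would iteratively apply \Cref{thm:qe}. Starting from $(\Sigma,\CCC)$, processing $\alpha_1$ yields an extended signature $\wh\Sigma_1$, a bounded expansion class $\wh\CCC_1$ of $\wh\Sigma_1$-structures, a quantifier-free $\wh\Sigma_1$-formula $\wh\alpha_1$, and a linear-time algorithm $\str A\mapsto \wh{\str A}_1$ with $(\alpha_1)_{\str A}=(\wh\alpha_1)_{\wh{\str A}_1}$. Regarding $\alpha_2$ as a $\wh\Sigma_1$-formula, we apply \Cref{thm:qe} over $\wh\CCC_1$ to produce $\wh\Sigma_2\supseteq\wh\Sigma_1$, $\wh\CCC_2$, and $\wh\alpha_2$, and continue for $i=3,\ldots,m$. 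After $m$ steps we obtain a single bounded expansion class $\wh\CCC$ over a signature $\wh\Sigma$, quantifier-free $\wh\alpha_i$ for each $i$, and a composite linear-time procedure $\str A\mapsto\wh{\str A}$ satisfying $(\alpha_i)_{\str A}=(\wh\alpha_i)_{\wh{\str A}}$ for every $i$. Substituting $\wh\alpha_i$ for $\alpha_i$ in $f$ yields a quantifier-free $\wh\Sigma(\tup w)$-expression $\wh f$ with $f_{\str A(\tup w)}=\wh f_{\wh{\str A}(\tup w)}$ in every semiring. The unary weight symbols $\tup w$ are not touched by the rewriting, so they remain of arity $1$ as required.

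Second, I would reconcile the signature $\wh\Sigma$ with the binary relational form required by \Cref{lem:circuits-reduction}. The DKT-style quantifier elimination introduces only unary function symbols (and possibly unary relation symbols) in addition to the original relations, so $\wh\Sigma$ is unary-plus-binary in arity. Each unary function symbol $g\in\wh\Sigma$ can be encoded by the binary relation $R_g$ representing its graph, and each atomic occurrence $g(x)=y$ in some $\wh\alpha_i$ is rewritten as $R_g(x,y)$, preserving quantifier-freeness. This yields a class $\wh\CCC'$ of binary relational structures whose Gaifman graphs coincide with those of $\wh\CCC$, so $\wh\CCC'$ inherits bounded expansion. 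Apply \Cref{lem:circuits-reduction} to $\wh\CCC'$ and the rewritten expression $\wh f'$ to obtain a linear-time computable family of circuits with bounded depth, bounded fan-out, and bounded number of rows in each permanent gate, computing $\wh f'$. Precomposing with the linear-time preprocessing $\str A\mapsto\wh{\str A}\mapsto\wh{\str A}'$ gives the desired family $(C_{\str A})_{\str A\in\CCC}$ computing $f$.

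The only mild obstacle is bookkeeping around the iteration of \Cref{thm:qe}: one needs that successive signature extensions preserve bounded expansion (immediate from \Cref{thm:qe}), that composing $m$ linear-time steps remains linear time with constants absorbed into the dependence on $f$ and $\CCC$, and that the arity bound of $\wh\Sigma$ stays within unary-plus-binary so that the graph-of-function encoding suffices. No genuinely new ideas are required beyond \Cref{thm:qe} and \Cref{lem:circuits-reduction}.
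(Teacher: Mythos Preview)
Your approach is essentially the paper's: use \Cref{thm:qe} to eliminate quantifiers from the bracket subformulas, then feed the resulting quantifier-free expression to \Cref{lem:circuits-reduction} after massaging the extended signature back into binary relational form. The paper states \Cref{cor:binary} without an explicit proof, and the proof sketch for \Cref{thm:circuits} indicates exactly this route (``By \Cref{thm:qe}, it is enough to consider only the case when $f$ is quantifier-free'').

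One technical point deserves tightening. You write that ``each atomic occurrence $g(x)=y$ in some $\wh\alpha_i$ is rewritten as $R_g(x,y)$, preserving quantifier-freeness.'' But the quantifier-free formulas $\wh\alpha_i$ produced by \Cref{thm:qe} may contain \emph{nested} function terms, e.g.\ atoms of the form $R(g(h(x)),y)$ or $g(h(x))=y$, and these cannot be rewritten into relational atoms without introducing quantifiers at the formula level. The paper handles this not by rewriting the boolean formulas directly, but by first applying \Cref{lem:simplify} to the whole weighted expression: nested terms are unwound using semiring summation $\sum$ (not boolean $\exists$), yielding a \emph{simple} expression whose only atomic shapes are $R(x_1,\ldots,x_k)$ and $f(x_1,\ldots,x_k)=y$ with variable arguments. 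At that point the graph-of-function replacement is a reduction in the sense of \Cref{lem:reduction}, and \Cref{lem:circuits-reduction} applies. So your step~2 should invoke \Cref{lem:simplify} (or equivalently the reduction machinery of \Cref{lem:reduction}) before replacing functions by relations. Your remark that the DKT extension $\wh\Sigma$ adds only unary function symbols is correct for the specific result cited as \Cref{thm:qe}, though the theorem statement in this paper does not make that explicit; with that granted, the graphs are binary and everything goes through.
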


\subsection{General case}
We now treat the general case of a class of structures $\CCC$ of bounded expansion, by reducing it to the case treated previously.

We first recall the notion of degeneracy.
A graph $G$ is \emph{$d$-degenerate} if its edges can be oriented yielding an acyclic orientation of out-degree bounded by $d$.
A class of graphs $\GGG$ has \emph{bounded degeneracy} if there is some $d\in\N$ such that every $G\in\GGG$ is $d$-degenerate. 
It is well-known that a class of graphs $\GGG$ has bounded degeneracy if and only if there is a bound $c$ such that for every  subgraph $H$ of a graph  $G\in \GGG$,
$H$ has at most $c\cdot |H|$ edges (cf. eg.~\cite{Dvorak2013testing}).
In particular, if $\GGG$ has bounded expansion, then $\GGG$ has bounded degeneracy.

The following lemma allows us to reduce statements concerning arbitrary signatures to analogous statements concerning  signatures with unary relation and function symbols only. 

\begin{lemma}
    \label{lem:unify}
	Let $\CCC$ be a class of $\Sigma$-structures of bounded expansion, where $\Sigma$ is a relational signature. Let $\tup w$ be a set of weight symbols.
    There is a signature $\Sigma'$ consisting of unary relation and function symbols,
    a set $\tup w'$ of unary weight symbols, a collection 
	of positive quantifier-free $\Sigma'$-formulas $(\alpha^R)_{R\in \Sigma}$ and $\Sigma'(\tup w')$-weighted expressions $(f^w)_{w\in \tup w}$,
	and an algorithm which inputs a $\Sigma(\tup w)$-structure $\str A(\tup w)$ with $\str A\in \CCC$ 
    and computes in linear time a $\Sigma'(\tup w')$-structure $\str A'$ with the same Gaifman graph as $\str A$, such that 
\begin{align*}
    R_{\str A}&=\alpha^R_{\str A'}&\text{ for each $R\in \Sigma$},\\
    w_\str A&=f^w_{\str A'(\tup w')}&\text {for each $w\in \tup w$}.
\end{align*}
\end{lemma}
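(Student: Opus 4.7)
The plan is to re-encode every relation and every weight symbol of arity at least two as a bounded collection of unary predicates, unary functions, and unary weight symbols, using the bounded degeneracy of classes of bounded expansion. The main conceptual point is that in a structure $\str A\in\CCC$, every tuple $\tup a$ in a relation $R$ spans a clique in the Gaifman graph, so we can charge $\tup a$ to a canonical ``apex'' element of that clique and store the remaining elements of $\tup a$ as values of unary functions at the apex.

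First, I would fix $d\in\N$ such that every Gaifman graph of a structure in $\CCC$ is $d$-degenerate; this exists because bounded expansion implies bounded degeneracy. In linear time one can compute an acyclic orientation of the Gaifman graph of $\str A$ with out-degree at most $d$. For any tuple $\tup a\in R_\str A$, the elements of $\tup a$ form a clique, and this clique has a unique \emph{sink} (the vertex of the clique with no out-neighbours inside it); this is the apex of $\tup a$. Since the out-degree is bounded by $d$, each element $v$ is the apex of at most $m_R\leq (d)_{k-1}\cdot k$ tuples in $R$, where $k$ is the arity of $R$: the out-neighbours of $v$ inside the tuple live among the at most $d$ out-neighbours of $v$, and their ordering within the tuple can vary.

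Second, I would define the signature $\Sigma'$. For every $R\in\Sigma$ of arity $k\ge 2$, introduce $m_R$ unary predicates $U^R_1,\ldots,U^R_{m_R}$ and, for each $1\le i\le m_R$ and $1\le j\le k$, a unary function symbol $g^{R}_{i,j}$. Unary predicates of $\Sigma$ are kept as-is. For each weight symbol $w\in\tup w$ of arity $r\ge 2$, which (by the constraint in the paper) is supported inside some relation symbol $R^w$ of $\Sigma$ of the same arity, we reuse the same apex-indexing as for $R^w$ and add one unary weight symbol $w'_i\in\tup w'$ for each $1\le i\le m_{R^w}$; unary weight symbols are kept as-is. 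Given $\str A(\tup w)$, enumerate the tuples of each relation, find their apex using the orientation, and for an apex $v$ with its $i$-th tuple $\tup a_i=(a^{(1)},\ldots,a^{(k)})$ in $R$, set $U^R_i(v):=\mathit{true}$, $g^R_{i,j}(v):=a^{(j)}$, and $w'_i(v):=w_\str A(\tup a_i)$. Outside their domain of definition, set $U^R_i(v):=\mathit{false}$, $g^R_{i,j}(v):=v$ (an arbitrary default), and $w'_i(v):=0$. This entire construction takes linear time in $|\str A|$ because the total number of tuples is linear (bounded expansion), and for each tuple the apex is found in constant time.

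Third, I would exhibit the quantifier-free translations. For each $R\in\Sigma$ of arity $k$, take
\[
\alpha^R(x_1,\ldots,x_k)\;\equiv\;\bigvee_{j=1}^{k}\bigvee_{i=1}^{m_R}\Bigl(U^R_i(x_j)\;\land\;\bigwedge_{j'=1}^{k}g^R_{i,j'}(x_j)=x_{j'}\Bigr),
\]
which encodes ``$x_j$ is the apex of the $i$-th tuple of $R$ and that tuple equals $(x_1,\ldots,x_k)$''. For each weight symbol $w\in\tup w$ of arity $r$, take
\[
f^w(x_1,\ldots,x_r)\;=\;\sum_{j=1}^{r}\sum_{i=1}^{m_w}\Bigl[\,U^{R^w}_i(x_j)\land\bigwedge_{j'}g^{R^w}_{i,j'}(x_j)=x_{j'}\,\Bigr]\cdot w'_i(x_j).
\]
For unary $R$ and $w$ the translation is trivial. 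Correctness of $R_\str A=\alpha^R_{\str A'}$ and $w_\str A=f^w_{\str A'(\tup w')}$ follows from the uniqueness of the apex of each tuple's clique.

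Fourth, I would verify that $\str A'$ has the same Gaifman graph as $\str A$. The newly added unary functions $g^R_{i,j}$ only introduce edges between an apex $v$ and the elements of a tuple containing $v$, which are already adjacent to $v$ in the Gaifman graph of $\str A$; the default value $g^R_{i,j}(v)=v$ introduces no edge. Unary predicates and unary weight symbols contribute nothing to the Gaifman graph. Hence the class $\CCC'$ of produced structures shares its Gaifman graphs with $\CCC$, and in particular has bounded expansion, as needed. The main subtlety I expect to need care with is precisely this Gaifman-graph preservation: one must make sure that the default values of the new functions do not produce spurious Gaifman-graph edges and that the bound $m_R$ is absolute (depending only on $d,k$, and hence on $\CCC$ and $\Sigma$), which is why we use the apex-based charging rather than, say, labelling by arbitrary witnesses.
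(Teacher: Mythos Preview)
Your plan is essentially the paper's proof: both use bounded degeneracy to get an acyclic orientation of the Gaifman graph with out-degree at most $d$, charge each tuple to a canonical vertex of its clique, and recover the other components via unary functions pointing into the bounded out-neighbourhood of that vertex. The paper packages this slightly differently---it introduces global ``$i$th out-neighbour'' functions $f_1,\dots,f_{d+1}$ and, for each relation $R$ of arity $k$, unary predicates $R_{\tup t}$ indexed by coding vectors $\tup t\in[d+1]^k$---whereas you enumerate the tuples at each apex and use relation-specific functions $g^R_{i,j}$; these encodings are interchangeable.

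One slip to correct: the canonical vertex must be the \emph{source} of the clique (the vertex all of whose clique-mates are out-neighbours), not the sink. In a $d$-degenerate orientation the in-degree is unbounded, so a sink can be the sink of unboundedly many tuples and your bound on $m_R$ would fail. Your own justification (``the out-neighbours of $v$ inside the tuple live among the at most $d$ out-neighbours of $v$'') is precisely the argument for the source, so this is presumably a mislabel; everything else goes through once you swap the word. Two further minor points: the definition of weighted structures does not tie a weight symbol $w$ to a single relation $R^w$ (a nonzero-weight tuple need only lie in \emph{some} relation of the right arity, so take the union of all arity-$r$ relations, or encode weights directly via the out-neighbour functions as the paper does); and your outer sum over $j$ in $f^w$ overcounts when the source element occurs at several coordinates of the tuple---restrict to the least such $j$, which is a quantifier-free condition.
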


\begin{proof}[\ifdoublecolumn{Proof }of Lemma~\ref{lem:unify}]
Since $\CCC$ has bounded expansion, the class of its Gaifman graphs is $d$-degenerate, for some $d\in\N$. 
The signature $\Sigma'$ contains unary function symbols $f_1,\ldots,f_d$, as well as unary predicates $R_\tup t$, for every relation symbol $R\in\Sigma$ of arity $k$ and tuple $\tup t\in [d]^k$.

Given a structure $\str A\in \CCC$, fix an acyclic orientation  of the edges of  the Gaifman graph of $\str A$ with out-degree at most $d$. It is well-known that such an orientation can be computed in linear time, by a greedy algorithm.

For $a\in\str A$ and $1\le i\le d+1$,  let $f_i(a)$ be the $i$th out-neighbor 
of $a$, if it exists, and $a$ otherwise.
For $R\in \Sigma$ of arity $k$ and  $\tup t\in [d+1]^k$,
where  $[d+1]=\set{1,\ldots,d+1}$,
let $R_\tup t(a)$ hold if $R(f_{\tup t(1)}(a),\ldots,f_{\tup t(k)}(a))$ holds.

Define $\str A'$ as the resulting $\Sigma'$-structure consisting of the domain of $\str A$, the unary functions $f_1,\ldots,f_d$ and unary relations $R_\tup t$ defined above. Then $\str A'$ has the same Gaifman graph as $\str A$.

Note that for every relation $R\in \Sigma$ and tuple $\tup a\in R_\str A$, all vertices in $\tup a$ are contained in a clique, and hence, as the chosen orientation is acyclic, there is a (unique) vertex
$a$ in $\tup a$ such that  all the  vertices in $\tup a$ are of  the form $f_i(a)$, for some (unique) $1\le i\le {d+1}$. 
Hence, $$R(a_1,\ldots,a_k)\iff \bigvee_{1\le i\le k}\bigvee_{\tup t\in [d+1]^k} R_\tup t(a_i)\land \bigwedge_{1\le j\le k} a_j=f_{\tup t(j)}(a_i).$$

Similarly, for $w\in \tup w$ of arity $k$ and $\tup t\in [d+1]^k$,
define a weight function 
by $$w_\tup t(a)=w(f_{\tup t(1)}(a),\ldots,f_{\tup t(k)}(a)).$$
Then the following equality holds: $$w(a_1,\ldots,a_k)=\sum_{1\le i\le k}
\sum_{\tup t\in [d+1]^k} w_\tup t(a_i)\cdot  \prod_{1\le j\le k} [a_j=f_{\tup t(j)}(a_i)].$$

This finishes the proof of Lemma~\ref{lem:unify}.
\end{proof}





Theorem~\ref{thm:circuits} now follows:

\begin{proof}[\ifdoublecolumn{Proof }of Theorem~\ref{thm:circuits}]
    We  assume that $\Sigma$ contains no function symbols, as those can be encoded by relation symbols in the usual way. 
    Hence, $\CCC$ is a class of relational structures which has bounded expansion. Applying Lemma~\ref{lem:unify} we may in turn assume that $\Sigma$ consists of unary  relation and function symbols, and that all the weight symbols in $\tup w$ have arity $1$.
    Using the same approach as above, this can be further reduced to a class of binary relational structures, by replacing functions by their graphs. The theorem follows by Corollary~\ref{cor:binary}.
\end{proof}

\section{Remaining proofs}
In this Appendix, we provide the missing details for Sections 4-7.

\subsection{Proof of Lemma~\ref{lem:perm-rings}}
\begin{proof}
    For a fixed number of rows $k$, we use the following inclusion-exclusion formula:
    
    \begin{multline*}
        \sum_{\substack{x_1,\ldots,x_k\\\textit{distinct}}}w_1(x_1)\cdots w_k(x_k)\ifdoublecolumn{=\\}
        =\sum_{x_1,\ldots,x_k}w_1(x_1)\cdots w_k(x_k)\prod_{i\neq j}(1-[x_i= x_j]).
    \end{multline*}

    Using distributivity, and then eliminating each variable $y$ occurring in an expression $[x=y]$, 
    allows to express the permanent as 
    a term $t$  depending on $k$ only, applied to expressions with one variable,
    of the form $$\sum_x w_{i_1}(x)\cdots w_{i_\ell}(x),$$ where $1\le i_1<\ldots<i_\ell\le k$.
    When $x$ in the sum ranges over a fixed set of $n$ elements, the above term can be viewed as a circuit $C_n$ which has  the properties required by the lemma.
\end{proof}

\subsection{Proof of Lemma~\ref{lem:counting}}
Fix a finite semiring $\sem S$.
By a \emph{boolean} we mean either of the elements $0,1\in \sem S$.
A \emph{test gate} is a gate parameterized by an element $s\in\sem S$ which inputs an element
$t$ of $\sem S$ and outputs the boolean $1\in \sem S$ if $s=t$ and $0\in\sem S$ otherwise. A \emph{mod} gate with \emph{modulus} $m\in\N$, inputs boolean values and outputs $1$ if the number of $1$'s on input is a multiple of $m$, and outputs $0$ otherwise. A \emph{threshold gate} has an associated threshold $t\in\N$, and given $n$ booleans on input, outputs $1$ if at least $t$ of them are $1$'s. A circuit with \emph{counting gates} is a  circuit without permanent gates, but with test gates, mod gates and threshold gates, and well as addition and product gates.

For an integer $n$ and element $s\in \sem S$,
let $n\cdot s$ denote the $n$-fold sum $s+\cdots +s$ if $n\ge 1$, and $0$ otherwise. 
We will use the following lemma, providing circuits computing $(\sum_i a_i-k)\cdot s$, where $a_1,\ldots,a_n$ are input booleans and $k\ge 0$ and $s\in\sem S$ are fixed.
Let $m_{\sem S}$ be the least common multiple of the orders of the cyclic groups contained in $(\sem S,+)$. For example, for the boolean semiring, $m_{\sem B}$ is $1$ and for the ring $\sem Z_k$ of integers modulo $k$, $m_{\sem Z_k}$ is $k$.

\begin{lemma}
    \label{lem:mod}
    For each fixed $s\in\sem S$ and  $k\in\N$ there is a family $(C_n)_{n\in \N}$ of circuits with counting gates,      
    where $C_n$ has $n$ boolean inputs and outputs 
    $(\ell-k)\cdot s$, where $\ell$ is the number of $1$'s among the inputs.
    The circuits $(C_n)_{n\in \N}$ have bounded depth, bounded fan-out, bounded expansion, each threshold is at most $\Oof(|\sem S|+k)$, and the modulus of each mod gate is the order of some cyclic subgroup of $(\sem S,+)$.
\end{lemma}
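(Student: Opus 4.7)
\textbf{Plan for Lemma~\ref{lem:mod}.}
The plan is to exploit the finiteness of $\sem S$ to reduce the computation of $(\ell - k) \cdot s$ to a bounded case analysis, detected by a constant number of threshold and mod gates reading the $n$ Boolean inputs. Since $\sem S$ is finite, the additive cyclic submonoid $\{n\cdot s : n\ge 0\}$ is finite, so the sequence $(ns)_{n\ge 0}$ is ultimately periodic: there are minimal $t \ge 0$ and $p \ge 1$ with $t + p \le |\sem S|$ and $(t+p)s = ts$.  I will first observe that the ``tail'' $\{ts, (t+1)s, \ldots, (t+p-1)s\}$, under the action of $+s$, is closed and carries the structure of a cyclic group of order $p$ inside $(\sem S, +)$; hence $p$ is a legal modulus for a mod gate under the lemma's restriction.

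Next I split the output into a preperiod part and a periodic correction. For the preperiod part, define
\[
B \;=\; \sum_{j=1}^{t} [\ell \ge k+j] \cdot s,
\]
built from $t$ threshold gates of thresholds $k+1,\ldots,k+t$ and a fixed constant-size gadget for the products and sum. A direct check shows $B = 0$ when $\ell \le k$, $B = (\ell - k)s$ when $k \le \ell \le k+t$, and $B = ts$ when $\ell \ge k+t$. So $B$ already gives the correct answer outside the periodic regime, and equals $ts$ inside it.

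For the periodic correction, use the identity $(\ell - k)s = ts + ((\ell - k - t)\bmod p)\cdot s$ valid for $\ell \ge k+t$. For each residue $r \in \{0,1,\ldots,p-1\}$ build a mod gate $M_r$ of modulus $p$ whose inputs are the $n$ original gates together with a constant number $c_r = (-(r+k+t)) \bmod p \in \{0,\ldots,p-1\}$ of constant $1$-gates. Then $M_r$ outputs $1$ iff $\ell + c_r \equiv 0 \pmod p$, i.e.\ iff $(\ell - k - t) \bmod p = r$. Put $C = \sum_{r=0}^{p-1} M_r \cdot (r\cdot s)$, and return
\[
B \;+\; [\ell \ge k+t] \cdot C,
\]
where $[\ell \ge k+t]$ is a threshold gate. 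A three-case verification (on $\ell < k$, $k \le \ell < k+t$, $\ell \ge k+t$) confirms the output equals $(\ell-k)s$.

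Finally I will tally the parameters. There are at most $t + 1 \le |\sem S| + 1$ threshold gates, each of threshold at most $k + t = \Oof(|\sem S| + k)$, and $p \le |\sem S|$ mod gates of modulus $p$ (the order of a cyclic subgroup of $(\sem S,+)$, as required). Each of the $n$ inputs fans out only to these $\Oof(|\sem S|)$ hub gates, giving bounded fan-out; depth is clearly constant; the wiring uses $\Oof(n)$ time. The Gaifman graph is essentially a constant-size graph joined to $n$ inputs via a bounded number of hubs (a subgraph of $K_{c, n}$ for constant $c$), which belongs to a class of bounded expansion. The one subtle point — the main obstacle, in my view — is that a semiring does not support negation, so we cannot form $[\ell = j] = [\ell \ge j] - [\ell \ge j+1]$ directly; the trick that makes the proof go through is that $B$ automatically returns the correct boundary value $ts$ throughout the periodic regime, so the mod-gate correction $C$ only needs to \emph{add} the offset $rs$, and gating it by a single threshold $[\ell \ge k+t]$ suppresses it elsewhere without any negation.
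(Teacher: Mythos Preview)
Your proposal is correct and rests on the same structural insight as the paper: the sequence $(ms)_{m\ge 0}$ is a lasso (preperiod of length $t$, cycle of length $p$), and the cycle is a cyclic subgroup of $(\sem S,+)$, so $p$ is a legal modulus. Where the two diverge is in how the circuit is assembled from this fact.

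The paper builds, for every potential output value $u\in\sem S$, a detector $C_n^{u}$ that fires exactly when $(\ell-k)\cdot s=u$, and returns $\sum_{u} u\cdot C_n^{u}$. For a value $u$ lying on the preperiodic path this amounts to testing an exact equality $[\ell=n+k]$, which the paper realises using the \emph{test gates} that are part of its definition of counting circuits (a test gate against $0$ gives boolean negation, so $[\ell=m]=[\ell\ge m]\cdot\text{test}_0([\ell\ge m{+}1])$); for $u$ on the cycle it uses a threshold-and-mod conjunction, as you do.

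Your construction is a genuine variant: rather than detect each output value, you accumulate the preperiod by the telescoping sum $B=\sum_{j=1}^{t}[\ell\ge k+j]\cdot s$, which already equals $(\ell-k)s$ below the cycle and the correct boundary value $ts$ throughout it, so a single threshold $[\ell\ge k+t]$ can switch on the modular correction $C=\sum_{r}M_r\cdot(rs)$ purely additively. This sidesteps boolean negation and test gates altogether, which is a small but real simplification over the paper's route. Your parameter accounting (thresholds $\le k+t$, modulus $p$, fan-out $\Oof(|\sem S|)$ per input, Gaifman graph contained in a $K_{c,n}$ plus a constant-size head) is also correct. One minor phrasing point: closure ``under the action of $+s$'' is not quite what is needed; you need closure under the semiring addition $+$ to conclude the cycle is a subgroup, but that follows easily once you have the lasso, and the paper proves exactly this claim.
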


\begin{proof}[\ifdoublecolumn{Proof }sketch]
The idea is that by finiteness of $\sem S$, the sequence $(m\cdot s)_{m\ge 0}$ is
         ultimately periodic, where the  period forms a cyclic subgroup of $(\sem S,+)$.

More precisely, define a directed graph $G=(V,E)$ with $V=\set{0,s,2\cdot s,\ldots}\subset \sem S$
and  $E=\setof{(n\cdot s,(n+1)\cdot s)}{n\ge 0}$.
Since every vertex is reachable from $0\in V$ and every vertex has one out-neighbor, the graph $G$  is a ``lasso'' of the form: a directed path $P$ of length at most $|\sem S|-1$, a directed cycle $C$ disjoint from $P$, and an edge connecting the endpoint of $P$ with a vertex in $C$.

\begin{claim}
The cycle $C$ forms a cyclic subgroup of $\sem S$.
\end{claim}

\begin{proof}
    The set $\set{s,2\cdot s,3\cdot s,\ldots}$
    forms a finite sub-semigroup of $\sem S$. It is well-known that 
    every finite semigroup contains an 
    idempotent $e$, i.e., an element such that $e=e+e$ (specifically, $|\sem S|!\cdot s$ is such an idempotent). 
    Let $\omega>0$ be such that $e=\omega\cdot s$ is idempotent.
    Then $e$ belongs to the cycle $C$, since  $e=\omega\cdot s=(2\omega)\cdot s=(3\omega)\cdot s=\ldots$

    Then $C$ forms a subgroup of $(\sem S,+)$, with neutral element $e$ and inverse of $(\omega+\ell)\cdot s$ being 
    $({c\cdot \omega-\ell})\cdot s$, where $c\in\N $ is such that $\ell\le (c-1)\cdot\omega$.
    Moreover, $C$ is a cyclic group, generated by $(\omega+1)\cdot s$. 
\end{proof}
    
Let $n_0=|P|$. 
Then the elements $\setof{n\cdot s}{0\le n<n_0}$ are all the distinct elements of the path $P$,
and the elements $\setof{n\cdot s}{n_0\le n<n_0+|C|}$ are all the distinct elements of the cycle $C$.

Fix an element $t\in\sem S$. Let
$X_{t}=\setof{\ell\ge 0}{\ell\cdot s=t}$. It is easy to see that:
\begin{itemize}
    \item if $t\notin P\cup C$ then $X_t=\emptyset$,
    \item if $t\in P$ then 
    $X_t=\set n$, where 
     $0\le n<n_0$ is such that $n\cdot s=t$,
     \item if $t\in C$ then 
     $\setof{\ell\ge n_0}{(\ell-n_0)\mod |C|=p}$, where 
     $0\le p<|C|$ is such that $t=(\ell+p)\cdot s$.
\end{itemize}
For fixed $k\in\N$, denote by 
$X_t+k\subset \N$ the set
$\setof{\ell+k}{\ell\in X_t}$.
Then $X_t+k=\setof{\ell}{(\ell-k)\cdot s=t}$. For each of the three cases considered in the items above,
we can easily construct a family of $(C_n^t)_{n\ge 0}$ of circuits with threshold gates and mod gates, where $C_n^t$ inputs $n$ bits and outputs $1$ if  the number of $1$'s among its inputs belongs to  $X_t+k$. These circuits  use mod gates with modulus $|C|$ and threshold gates with threshold  $\Oof(|\sem S|+k)$.

    Having constructed such circuit families $C^t_n$ for each $t\in \sem S$, we construct a circuit $C_n$ whose output gate is the sum of the products $t\cdot C^t_n$, for $t\in \sem S$. The circuit family $C_n$ has the desired properties.
\end{proof}
\begin{proof}[\ifdoublecolumn{Proof }of Lemma~\ref{lem:counting}]
    Observe that the permanent of a $k\times n$ matrix $M$ can be computed based on the 
    number of occurrences of each tuple $c\in \sem S^k$ as a column in $M$, as follows.
Fix a matrix $k\times k$ matrix $N$.
For each $i\in [k]$, let $d_i$ be the $i$th diagonal entry of $N$, let $c_i$ be the $i$th column of $N$, let 
$n_i$ be the number 
of occurrences of $c_i$ in $N$ and let $m_i$ be the number of occurrences of $c_i$ in $M$.
Finally, let $s_i$ be the product $m_i\cdots (m_i-n_i+1)\cdot d_i\in \sem S$, and let $p_N$ be the product $s_1\cdots s_k$. 
The permanent of $M$ is equal to the sum of all the values $p_N$, for $N$ ranging over all $k\times k$ matrices. Note that the value $s_i$ can be computed by applying $n_i$ times Lemma~\ref{lem:mod}. 
This  yields the required circuit family.
\end{proof}

\subsection{Proof of Theorem~\ref{thm:provenance}}

\begin{proof}[\ifdoublecolumn{Proof }of Lemma~\ref{lem:perm-iter}]
    Let $M$ be an $R\times C$ matrix and let $r\in R$ be an arbitrarily chosen row.
Observe that the following identity holds:
    \begin{align}\label{perm:iter}
        \perm (M)=\sum_{{c\in C}}M[r,c]\cdot \perm (M^{rc}),
    \end{align}
    where $M^{rc}$ is the matrix $M$ with row $r$ and column $c$ removed. The idea is to recursively use the above identity to enumerate $\perm(M)$.
    For this to work, we need to iterate over all columns $c$ such that the product $M[r,c]\cdot \perm (M^{rc})$ is nonempty. 

    To this end,
    observe that the function $h\from \sem F_A\to \sem B$ which maps $0$ to $0$ and every non-zero element in $\sem F_A$ to $1$, is a semiring homomorphism.
    It follows that 
    $h(\perm(A))=\perm(h(A))$,
    for any matrix $A$ with entries in $\sem F_A$ and matrix $h(A)$  obtained from $A$ by mapping each entry via $h$.

    Let $N$ denote the matrix $h(M)$, and let $N^{rc}$ denote the matrix obtained by removing the row $r$ and column $c$.
    Thanks to the above, the problem mentioned earlier boils down to the problem considered in the following lemma.
    
    \begin{lemma}
        \label{lem:trivial-bool}
        There is a dynamic algorithm which inputs an $R\times C$ boolean matrix $N$ and a row $r$,
and computes in linear time a data structure which maintains in constant time a 
bi-directional iterator for the set of those columns $c\in C$ with $N[r,c]= \perm(N^{rc})=1$.
    \end{lemma}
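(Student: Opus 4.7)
The plan is to exploit the fact that $|R|$ is a constant, depending only on the outer expression that induced the permanent gate. Write $R' = R \setminus \set{r}$ and $k = |R'|$. Working in the boolean semiring, $\perm(N^{rc}) = 1$ iff the bipartite graph $H$ with parts $R'$ and $C$ and edges $\setof{(r', c')}{N[r', c']=1}$ admits an $R'$-saturating matching avoiding column $c$, equivalently $\mu(H - c) = k$, where $\mu$ denotes the maximum matching size. Thus the iterator must enumerate exactly the columns $c$ with $N[r, c] = 1$ and $\mu(H - c) = k$.

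The key observation is that $H$ collapses to a constant-size object. Classify each column $c$ by its \emph{type} $T_c = \setof{r' \in R'}{N[r', c] = 1} \subseteq R'$; two columns with equal types yield isomorphic graphs $H - c_1 \cong H - c_2$ and hence equal values of $\mu(H - c_i)$. Both $\mu(H)$ and each $\mu(H - c)$ depend only on the type-count vector $n_\bullet = (n_T)_{T \subseteq R'}$, with $n_T = |\setof{c}{T_c = T}|$, and can be computed in $O(1)$ time, for instance via Hall's defect formula over the $2^k$ subsets of $R'$. Call $T \subseteq R'$ \emph{good} if $\mu(n_\bullet - e_T) = k$, where $e_T$ is the unit vector at coordinate $T$, and let $\mathrm{Good}$ denote the set of good types. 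Then the target of the iterator equals the disjoint union $\bigsqcup_{T \in \mathrm{Good}} L_T$, where $L_T = \setof{c}{T_c = T \text{ and } N[r, c] = 1}$.

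The data structure stores: the type $T_c$ and the bit $N[r, c]$ for each column $c$; the lists $L_T$ as doubly-linked lists indexed by $T \subseteq R'$, with back-pointers from each column to its position; the count vector $n_\bullet$; and the current $\mathrm{Good}$, recomputed from $n_\bullet$ after each change. A single-entry update to $N[r', c]$ either toggles one bit of $T_c$ (if $r' \neq r$) or flips the separate flag $N[r, c]$; in either case $c$ moves between at most two of the lists $L_T$, at most two entries of $n_\bullet$ change, and $\mathrm{Good}$ is recomputed in $O(2^k) = O(1)$ time. Initialization is $O(|C|)$, just one pass populating types, bits, lists, and counts. The bi-directional iterator stores a pointer into some $L_T$ with $T \in \mathrm{Good}$; \textbf{next} advances within $L_T$ and, upon exhausting it, jumps to the next $T' \in \mathrm{Good}$ (in a fixed ordering of the $2^k$ types) with $L_{T'} \neq \emptyset$, landing at its head; \textbf{previous} is symmetric; \textbf{current} returns the stored column (or $\bot$). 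Each call costs $O(2^k) = O(1)$, and a position invalidated by an intervening update (either because $c$ left $L_T$, or because $T$ left $\mathrm{Good}$) is repaired lazily at the next call by stepping to the nearest surviving column.

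The main obstacle is recognizing that $\perm(N^{rc})$ depends on $c$ only through its type $T_c$, so that despite $|C|$ being unbounded there are only $O(2^k)$ essentially different "kinds" of columns to distinguish. Once this is in place, everything reduces to routine bookkeeping over the $2^k$ type buckets; the boolean setting is crucial, because it identifies $\perm$ with the existence of a matching, making Hall's theorem directly applicable to evaluate each $\mu(n_\bullet - e_T)$ in constant time.
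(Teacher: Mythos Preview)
Your proposal is correct and follows essentially the same approach as the paper: bucket columns by their full boolean column vector (equivalently, your type $T_c$ together with the bit $N[r,c]$), observe that whether $c$ satisfies the target condition depends only on this type, maintain doubly-linked lists per type, and iterate over the concatenation of the ``good'' buckets. The only cosmetic difference is that the paper argues the good set can be computed from the type-count vector truncated at $|R|$ (since more than $|R|$ copies of a column type are irrelevant for an $R'$-matching) rather than explicitly via Hall's defect formula, but this is the same idea at a different level of detail.
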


     Finally, to enumerate $\perm(M)$, we use~\eqref{perm:iter}. Fix $r\in R$ and iterate over all columns $c$ such that $N[r,c]= \perm(N^{rc})=1$.
     For  each considered column $c$,  the iterator for $M[r,c]$ is given by assumption, and the iterator for $\perm (M^{rc})$ is obtained recursively (note that $M^{rc}$ has one row fewer than $M$).
     Using them, iterate over the product $M[r,c]\cdot \perm(M^{rc})$ in a lexicographic fashion. 
\end{proof}

\begin{proof}[\ifdoublecolumn{Proof} of Lemma~\ref{lem:trivial-bool}]
    Our data structures stores, for each vector $\tup t\in \sem B^S$,
    a bi-directional list $L_\tup t$ over all occurrences of $\tup t$ as a column in $N$.

    Note that this data structure can computed in time $\Oof_{R}(|C|)$.
    Moreover, it can be 
     maintained in time $\Oof_R(1)$, upon an update to $N$:
     if an update affects an entry at row $r$ and column $c$,
     then we remove this column from  the lists it belonged to,
     and append it to the appropriate list.

     Let $K\subset \sem B^S$ be the 
     set of those vectors $\tup t$ which occur in $N$ as some column $c$ such that $N[r,c]=\perm(N^{rc})=1$. Note that this property does not depend on the choice of the column $c$ among all occurrences of $\tup t$.
     
     Moreover, $K$ can be computed only basing on the number of occurrences of each vector $\tup t\in\sem B^S$ as a column in $N$, where  the number of occurrences is only counted up to $|R|$. 
     It follows that the set $L$ can be computed in constant time, basing on the existing data structure, 
     since we can iterate through $L_{\tup t}$ for at most $|R|$ steps to determine the number of occurrences. 
    
     This yields the desired bi-directional iterator for the columns $c$ such that $N[r,c]= \perm(N^{rc})=1$:
     it is the concatenation of the lists $L_\tup t$, for $\tup t\in K$. 
\end{proof}

\begin{proof}[\ifdoublecolumn{Proof }of Theorem~\ref{thm:provenance}]
    We give the proof in the case when $f$ is a closed expression. The general case follows as in the proof of Theorem~\ref{thm:algo-main},
    by simulating a query by a sequence of $|\tup x|$ temporary updates.
    
    Consider the circuit family $(C_\str A)_{\str A\in \CCC}$ given by Theorem~\ref{thm:circuits}.
For simplicity, and without loss of generality, assume that all inner gates in $C_\str A$ are permanent gates, as multiplication and addition can be simulated by such.

To prove Theorem~\ref{thm:provenance},
 apply  Lemma~\ref{lem:perm-iter}  to each gate $g$ of $C_\str A$, starting from the input gates and moving towards the output gate.
This yields an iterator representing the value at the gate $g$ of $C_\str A$ induced by the weight functions $\tup w$.
The iterator at the output gate is the required iterator for $f_{\str A(\tup w)}$.
\end{proof}

\subsection{Proof of Theorem~\ref{thm:enum}}
We now consider the dynamic case of Theorem~\ref{thm:enum}.

Without loss of generality, assume that the signature $\Sigma$ contains only
relation symbols of arity bounded by some number $r_{\max}\in\N$.
Also without loss of generality, we may assume that the Gaifman graph of the input structure $\str A$
is left unchanged by all the updates. Indeed, we can assume that the input structure $\str A$  includes a binary edge relation $E$ which initially represents exactly the Gaifman graph $G_\str A$ of $\str A$, and which is never modified by updates
(the relation $E$ can be added to $\str A$ in linear time in the preprocessing phase). Furthermore, for all $2\le r\le r_{\max}$, we assume that $\str A$ contains a relation $R_k$ consisting of all tuples $\tup a\in \str A^k$ whose elements form a clique in $G_\str A$, and that the relations $R_1,\ldots,R_r$ are never modified by updates. 
Henceforth, we assume that 
the updates never modify the relations $R_1,\ldots,R_{r_{\max}}$ of a structure $\str A$. 
We call those updates \emph{Gaifman-preserving updates} since  they preserve the Gaifman graph of $\str A$. 

To employ our circuit framework, we encode the relations of $\str A$
by weight functions with values $\set{0,1}\in \sem F$, where $\sem F$ is the free commutative semiring.

\begin{lemma}\label{lem:gaifman-reduction}
    Let $\Sigma,\phi,\CCC$ be as in Theorem~\ref{thm:enum}, with $\phi(\tup x)$ quantifier-free.
    There is a 
    signature  $\Sigma'$,
    a set of weight symbols $\tup v$, a quantifier-free $\Sigma'(\tup v)$-expression $g(\tup x)$,
    and a dynamic algorithm which, given a structure $\str A\in\CCC$, computes in linear time a $\Sigma'$-structure $\str A'$ with the same Gaifman graph as $\str A$ and a tuple of weight functions $\tup v_\str A$ with values in $\sem {0,1}$, which interpreted in any semiring $\sem S$ satisfy 
    \begin{align}
        \label{eq:gaifman-red}
        g_{\str A'(\tup v)}(\tup a)=[\phi_{\str A}(\tup a)]\qquad\text{for \ }\tup a\in \str A^\tup x.
    \end{align}
    When a Gaifman-preserving update is applied to $\str A$, then $\str A'(\tup v)$ is updated in constant time by updating $\tup v$ and leaving $\str A'$ unchanged.
\end{lemma}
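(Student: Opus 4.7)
The plan is to strip all non-Gaifman information from $\str A$ into weights, so that Gaifman-preserving updates become weight updates only, while $\str A'$ itself stays constant. I would take $\Sigma'$ to consist of equality together with the clique relations $R_1, \ldots, R_{r_{\max}}$, which by construction depend only on the Gaifman graph of $\str A$ and are never modified by Gaifman-preserving updates. For each $R \in \Sigma$ of arity $r$ I would introduce two weight symbols $w_R^+, w_R^- \in \tup v$ of arity $r$, with $w_R^+(\tup a) = 1$ iff $\tup a \in R_\str A$ and $w_R^-(\tup a) = 1$ iff $\tup a \in (R_r)_\str A \setminus R_\str A$. Since $R_\str A \subseteq (R_r)_\str A$ by the definition of the Gaifman graph, both weights are nonzero only on tuples in $(R_r)_{\str A'}$, so they are valid weight functions for $\str A'$.

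Next, I would rewrite $[\phi(\tup x)]$ as a quantifier-free $\Sigma'(\tup v)$-expression $g(\tup x)$ using only $+$ and $\cdot$, since no subtraction is available in an arbitrary commutative semiring. The tool is the pair of \emph{mutually exclusive} identities $[\phi \land \psi] = [\phi]\cdot[\psi]$ and $[\phi \lor \psi] = [\phi] + [\neg \phi]\cdot[\psi]$, combined with De Morgan's laws to push negations down to atoms. A positive atom $R(\tup y)$ with $R \in \Sigma$ of arity $r$ becomes $w_R^+(\tup y)$; equality and inequality atoms stay as is; and a negated relational atom is handled by the mutually exclusive decomposition
\[
[\neg R(\tup y)] \;=\; [\neg R_r(\tup y)] \;+\; [R_r(\tup y)]\cdot w_R^-(\tup y),
\]
which is correct because failing $R_r(\tup y)$ already forces $R(\tup y)$ to fail. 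By induction on the construction, every subexpression of $g$ is a mutually exclusive sum of products of $\{0,1\}$-valued terms, so $g_{\str A'(\tup v)}(\tup a)$ evaluates to $[\phi_\str A(\tup a)] \in \{0,1\}$ in any semiring $\sem S$, as required by \eqref{eq:gaifman-red}.

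For the algorithmic part, computing $\str A'$ and the initial values of $\tup v$ takes linear time: each tuple of each $R_\str A$ sets one entry of $w_R^+$, and a single pass through $(R_r)_\str A$ (of size $\Oof(|\str A|)$ by bounded expansion) fills in $w_R^-$, all using the constant-time access representation of $\str A \in \CCC$. A Gaifman-preserving update that modifies a single tuple $\tup a$ of some relation $R$ touches only the two values $w_R^+(\tup a)$ and $w_R^-(\tup a)$, which are updated in constant time, while $\str A'$ itself is unchanged as required.

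The main obstacle is keeping the rewriting mutually exclusive throughout, because $g$ must give the right answer in \emph{every} commutative semiring — in particular in the free semiring $\sem F$ used by Theorem~\ref{thm:enum}, where $1 + 1 \neq 1$ and a careless encoding would count each satisfying tuple multiple times. The identity $[\phi \lor \psi] = [\phi] + [\neg \phi]\cdot[\psi]$ and the split via $R_r(\tup y)$ for negated atoms are precisely the tools that ensure mutual exclusivity at every step of the translation.
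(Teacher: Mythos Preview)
Your proposal is correct and follows the same blueprint as the paper: take $\str A'$ to consist of the clique relations $R_1,\ldots,R_{r_{\max}}$, encode each $R\in\Sigma$ by a pair of $\{0,1\}$-valued weights for $R$ and its complement, and rewrite $[\phi]$ as a mutually exclusive sum of products so that the result equals $[\phi_\str A(\tup a)]$ in every semiring. The only tactical differences are that the paper achieves mutual exclusivity by expanding $\phi$ into a disjunction of complete atomic types (automatically pairwise exclusive) rather than your Shannon-style identity $[\phi\lor\psi]=[\phi]+[\neg\phi]\cdot[\psi]$, and that the paper sets $v_R^-(\tup a)=[\neg R_\str A(\tup a)]$ on all tuples and then uses $v_R^-(\tup y)$ directly for negated atoms, whereas you restrict $w_R^-$ to tuples in $R_r$ and compensate via the split $[\neg R(\tup y)]=[\neg R_r(\tup y)]+[R_r(\tup y)]\cdot w_R^-(\tup y)$. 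Your version is actually the more careful one on this second point: the support condition on weight functions (nonzero only on tuples belonging to some relation of $\str A'$) is what makes them inputs to the circuits of Theorem~\ref{thm:circuits}, and your $w_R^-$ respects it while the paper's $v_R^-$ as literally written does not.
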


\begin{proof}
    
Given a structure $\str A$, the structure $\str A'$ 
contains only the relations $R_1,\ldots,R_{r_{\max}}$ as described before the lemma. In particular, $\str A'$ is not modified when a Gaifman-preserving update is applied to $\str A$. 

    Define a set of weight symbols $\tup v$ containing 
    symbols $v_R^+$ and $v_R^-$ of arity $r$, for each 
    relation $R\in \Sigma$ of arity $r$. For each 
    $R\in\Sigma$ of arity $r$, define $\sem S$-weight functions 
    $v_R^+$ and $v_R^-$ of arity $r$ on $\str A'$ as 
    follows:
    \begin{align*}
    v_R^+(a_1,\ldots,a_k)&=[R_\str A(a_1,\ldots,a_k)], \\v_R^-(a_1,\ldots,a_k)&=[\neg R_\str A(a_1,\ldots,a_k)],
    \end{align*}
    It is clear that given $\str A$, the structure $\str A'(\tup v)$ can be computed in linear time, and updated in constant time, upon  Gaifman-preserving updates to $\str A$.
    
    It remains to describe the expression $g(\tup x)$ satisfying~\eqref{eq:gaifman-red}.
    Let $T$ be the set of terms occurring in $\phi$. 
    Rewrite $\phi(\tup x)$ as a disjunction of atomic types $\alpha(\tup x)$, 
    where an \emph{atomic type} is a maximal consistent conjunction of literals involving terms from $T$ only. 
    As any two inequivalent atomic types $\alpha(\tup x)$ and $\beta(\tup x)$ are mutually exclusive,
    it follows that $[\alpha(\tup a)\lor \beta(\tup a)]=[\alpha(\tup a)]+[\alpha(\tup a)]$, in any semiring $\sem S$.
    Therefore, it is enough to consider the case when $\phi(\tup x)$ is an atomic  type $\phi(\tup x)=\alpha_1(t_1)\land \cdots \land\alpha_k(t_k)$,
    where $\alpha(t_i)$ are literals. 
    Clearly, 
    no matter what semiring $\sem S$ is considered,
    the following equivalence holds:
    $$[\alpha_1(t_1)\land\cdots\land \alpha_k(t_k)]=[\alpha_1(t_1)]\cdots [\alpha_k(t_k)].$$
    Therefore, it is enough to consider the case when $\phi(\tup x)$ is a single literal $R(t^1,\ldots,t^k)$ or $\neg R(t^1,\ldots,t^k)$, for some terms $t^1,\ldots,t^k\in T$.
    In the first case, define $g(\tup x)$ as $v_R^+(t^1,\ldots,t^k)$,
    and in the latter,  define $g(\tup x)$ as $v_R^-(t^1,\ldots,t^k)$.
    By construction of $\str A'(\tup v)$, it is clear that~\eqref{eq:gaifman-red} holds. 



    
    \end{proof}

\begin{proof}[\ifdoublecolumn{Proof} of Theorem~\ref{thm:enum}]
    We proceed similarly to the static case described in Section~\ref{sec:enum}.
We use Theorem 6.3 in~\cite{Dvorak2013testing},
which extends Theorem~\ref{thm:qe}
by stating that $\whstr A$ can be updated in constant time upon a Gaifman-preserving update to $\str A$.
Thanks to this, we can assume that $\phi(\tup x)$ is quantifier-free.

Let $\tup x=\set{x_1,\ldots,x_k}$.
As in the static case, define weight functions $w_1,\ldots,w_k\from \str A\to \sem F$, where $w_i(a)=e_a^i$.
    Instead of considering the expression $f$  defined in~\eqref{eq:rrr}, we now define  $f$ as follows:
    \begin{align}\label{eq:sss}
        f=\sum_{\tup x}g(\tup x)\cdot w_1(x_1)\cdots w_k(x_k),
    \end{align}
    where $g$ is given by Lemma~\ref{lem:gaifman-reduction}.
    Our dynamic data structure maintains the $\Sigma'(\tup v)$-structure $\str A'(\tup v)$ as described in the lemma.
In particular, the invariant $g_{\str A'(\tup v)}(\tup x)=[\phi_\str A(\tup x)]$ is maintained by updates.

Together with the weight functions $w_1,\ldots,w_k$, this yields a $\Sigma'(\tup v\tup w)$-structure $\str A'(\tup v\tup w)$.
By~\eqref{eq:sss}, 
the element $f_{\str A'(\tup v\tup w)}\in \sem F$ is a formal sum representing the set of tuples $\phi_\str A\subset \str A^{\tup x}$, and this is maintained by the updates. 
Applying Theorem~\ref{thm:provenance} yields a bi-directional iterator with constant access time which enumerates the components of $f_{\str A'(\tup v\tup w)}$.
More precisely, our dynamic algorithm is obtained by composing the dynamic algorithm maintaining $\str A'(\tup v\tup w)$ given by Lemma~\ref{lem:gaifman-reduction} with the dynamic data structure maintaining an enumerator for $f_{\str A'(\tup v\tup w)}$ given by Theorem~\ref{thm:provenance}. An update to $\str A$ triggers constantly many updates to $\tup v$, and an enumerator for $f_{\str A'(\tup v\tup w)}$  can be maintained in constant time.
\end{proof}

\subsection{Proof of Theorem~\ref{thm:foc}}
\begin{proof}

We proceed by induction on the number of guarded connectives in $\phi$
of the form $[R(x_1,\ldots,x_l)]\cdot c(\phi_1,\ldots,\phi_k)$.
In the  base case, there are no guarded connectives.
If $\phi$ is has values in $\sem B$, it is a first-order formula, and the statement follows from Theorem~\ref{thm:qe} and Theorem~\ref{thm:enum} for the enumeration part.
Otherwise, $\phi$ is an $\sem S$-valued formula,
for some semiring $\sem S\in\ops$, and
$\phi$ can be seen as  a $\Sigma'(\tup w)$-expression, where $\Sigma'$ consists of the function symbols and the $\sem B$-valued relations in $\Sigma$, and $\tup w$ consists of the $\sem S$-valued relations in $\Sigma$. In this case, the statement follows from Theorem~\ref{thm:algo-main}.

In the inductive step, let $\phi(\tup x)\in\foops$ and suppose that the statement holds for every formula with fewer guarded connectives.

Let $[R(x_1,\ldots,x_l)]\cdot c(\phi_1,\ldots,\phi_k)$ be a guarded connective in $\phi$ which does not appear in the scope of another guarded connective, and suppose $c\from \sem S_1\times\cdots \sem S^k\to \sem S$. Let $\tup x=\set{x_1,\ldots,x_l}$.
Apply the inductive assumption to the formulas $\phi^1(\tup x),\ldots,\phi^k(\tup x)$. Given a structure $\str A\in \CCC$,
we can compute the functions $\phi^i_\str A\from\str A^\tup x\to \sem S_i$ in the required time.
Next, define the
$\sem S$-valued relation $r\from \str A^k\to\sem S$  by $r(\tup a)=c(\phi^1_\str A(\tup a),\ldots, \phi^k_\str A(\tup a))$ for $\tup a\in  R_\str A\subset \str A^\tup x$, and $r(\tup a)=0$ for $\tup a\not\in R_\str A$.
This can be computed in linear time, by scanning through all tuples $\tup a\in R_\str A$ and for each of them,  applying in constant time the connective $c$ to the tuple of precomputed values. 

Define the structure $\str A'$ as $\str A$ extended by  the $\sem S$-valued relation $r$.
Replace in the formula $\phi(\tup x)$  the guarded connective $[R(x_1,\ldots,x_l)]\cdot c(\phi_1,\ldots,\phi_k)$
by the atom $r(x_1,\ldots,x_l)$. Clearly, $\phi'_{\str A'}=\phi_\str A$. Moreover, $\phi'$ has fewer guarded connectives than $\phi$, so the conclusion follows from the inductive assumption.
\end{proof}

\end{document}